\newtheorem{thm}{\bf{Theorem}}
\newtheorem{prop}[thm]{\bf {Proposition}}
\newtheorem{cor}[thm]{\bf {Corollary}}
\newtheorem{example}{\bf {Example}}
\newtheorem{definition}{\bf {Definition}}
\newcommand{\mmse}{\mathsf{mmse}}
\newcommand{\Nyq}{\mathsf{Nyq}}
\newcommand{\Lnd}{\mathsf{Lnd}}
\newcommand{\PCM}{\mathsf{PCM}}
\newcommand{\SNR}{\mathsf{snr}}
\newcommand{\expphi}{\left(e^{2\pi i \phi} \right)}
\newcommand{\supp}{\mathrm{supp} }
\newcommand{\card}{\mathrm{card} }
\newcommand{\Pw}{\mathrm{Pw} }
\newcommand{\tr}{\mathrm{trace} }
\newcommand{\inthalftohalf}{\int_{-\frac{1}{2}}^\frac{1}{2} }
\newcommand{\Yv}{\mathbf Y}
\newcommand{\Ell}{\mathrm{L}_2 }
\newcommand{\enc}{\mathrm{Enc}}
\newcommand{\dec}{\mathrm{Dec}}
\newenvironment{proof}{\paragraph*{Proof}}{\hfill$\square$ \newline}
\tikzstyle{int1}=[draw, fill=blue!10, minimum height = 0.5cm, minimum width=1cm,thick ]
\tikzstyle{int}=[draw, fill=blue!10, minimum height = 1cm, minimum width=1.5cm,thick ]
\tikzstyle{sint}=[draw, fill=blue!10, minimum height = 0.5cm, minimum width=0.8cm,thick ]
\tikzstyle{sum}=[circle, fill=blue!10, draw=black,line width=1pt,minimum size = 0.5cm, thick ]
\tikzstyle{ssum}=[circle, fill=blue!10,draw=black,line width=1pt,minimum size = 0.1cm]
\tikzstyle{enc}=[draw, fill=blue!10, minimum height = 2.7cm, minimum width=1cm,thick ]
\title{\LARGE \bf Fundamental Distortion Limits of Analog-to-Digital Compression
}
\author{ 
\IEEEauthorblockN{
Alon Kipnis, Yonina C. Eldar and  Andrea J. Goldsmith}

\thanks{  A. Kipnis was with the Department of Electrical Engineering, Stanford University, Stanford, CA 94305 USA. He is now with the Department of Statistics at the same institution. 

A. J. Goldsmith are with the Department of Electrical Engineering, Stanford University, Stanford, CA 94305 USA. 

Y. C. Eldar is with the Department of Electrical Engineering, Technion - Israel Institute of Technology Haifa 32000, Israel.}

\thanks{This paper was presented in part at the 52nd and 53rd Annual Allerton Conference on Communication, Control, and Computing (Allerton), October 2014 and 2015 \cite{KipnisAllerton2014} \cite{KipnisAllerton2015}, and at the Information Theory Workshop (ITW), April 2015, Jerusalem \cite{KipnisITW2015}. }
}
\newcommand*{\QEDA}{\hfill\ensuremath{\square}}
\pgfplotsset{compat=1.14}
\begin{document}
\graphicspath{{./Figs/}}
\maketitle

\thispagestyle{plain}
\pagestyle{plain}

%%%%%%%%%%%%%%%%%%%%%%%%%%%%%%%%%%%%%%%%%%%%%%%%%%%%%%%%%%%%%%%%%%%%%%%%%%%%%%%%
\begin{abstract}
Representing a continuous-time signal by a set of samples is a classical problem in signal processing. We study this problem under the additional constraint that the samples are quantized or compressed in a lossy manner under a limited bitrate budget. To this end, we consider a combined sampling and source coding problem in which an analog stationary Gaussian signal is reconstructed from its encoded samples. These samples are obtained by a set of bounded linear functionals of the continuous-time path, with a limitation on the average number of samples obtained per unit time available in this setting. 
We provide a full characterization of the minimal distortion in terms of the sampling frequency, the bitrate, and the signal's spectrum. 
Assuming that the signal's energy is not uniformly distributed over its spectral support, we show that for each compression bitrate there exists a critical sampling frequency smaller than the Nyquist rate, such that the distortion in signal reconstruction when sampling at this frequency is minimal. Our results can be seen as an extension of the classical sampling theorem for bandlimited random processes in the sense that it describes the minimal amount of excess distortion in the reconstruction due to lossy compression of the samples, and provides the minimal sampling frequency required in order to achieve this distortion. Finally, we compare the fundamental limits in the combined source coding and sampling problem to the performance of pulse code modulation (PCM), where each sample is quantized by a scalar quantizer using a fixed number of bits. 
\end{abstract}

%%%%%%%%%%%%%%%%%%%%%%%%%%%%%%%%%%%%%%%%%%%%%%%%%%%%%%%%%%%%%%%%%%%%%%%%%%%%%%%%
\section{INTRODUCTION}
\label{sec:Intro}
The minimal sampling rate required for perfect reconstruction of a bandlimited continuous-time process from its samples is given by the celebrated works of Whittaker, Kotelnikov, Shannon and Landau \cite{eldar2015sampling}. However, these results focus only on performance associated with sampling rates; they do not incorporate other sampling parameters, in particular the quantization precision of the samples. This work aims to develop a theory of sampling and associated fundamental performance bounds that incorporates both sampling rate as well as quantization precision. \par
%When the source is considered under a statistical model, the fundamental trade-off between bit representation and distortion in reconstruction from such representation is given by the distortion-rate function (DRF) of the analog source. A key idea in proving the relation between the DRF and finite bit-representation problem relays on mapping the analog waveform to a discrete-time process. Such mapping ignores possible limitations on resources available for data processing, such as limited sampling frequency and/or limited memory of the vector quantizer. 
The Shannon-Kotelnikov-Whittaker sampling theorem states that sampling a signal at its Nyquist rate is a sufficient condition for exact recreation of the signal from its samples. However, quoting Shannon \cite{Shannon1948}:
\begin{quote}
...``we are not interested in exact transmission when we have a continuous [amplitude] source, but only in transmission to within a certain [distortion] tolerance...". 
\end{quote}
It is in fact impossible to obtain an exact digital representation of any continuous amplitude signal due to the finite precision of the samples. Hence, any digital representation of an analog signal is prone to some error, regardless of the sampling rate. This raises the question as to whether the condition of Nyquist rate sampling can be relaxed when we are interested in converting an analog signal to bits at a given bitrate (bits per unit time), such that the associated point on the distortion-rate function (DRF) of the signal is achieved.
\par
\begin{figure}
\begin{center}
\input{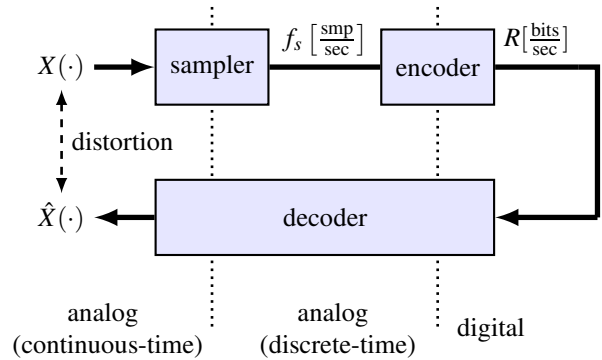}
\end{center}
\caption{\label{fig:ADX_system} Analog-to-digital compression (ADX) and reconstruction setting. Our goal is to derive the minimal distortion between the signal and its reconstruction from any bitrate $R$ representation of the samples taken at sampling rate $f_s$.}
\end{figure}
The DRF describes the minimal distortion for any digital representation of a given signal under a fixed number of bits per unit time. While this implies that the DRF provides a theoretical limit on the distortion as a result of analog to digital (A/D) conversion, in fact, A/D conversion involves both sampling a signal as well as converting those samples to bits, which entails some form of source coding, typically quantization. In some situations, it is possible to achieve the DRF of a continuous-time signal by mapping it into an equivalent discrete-time representation based on sampling at or above its Nyquist rate \cite{neuhoff2013information}. However, A/D technology limitations can preclude sampling signals at their Nyquist rate, particularly for wideband signals or under energy constraints \cite{761034, EldarMichaeliBeyond}. In such scenarios, the data available for source encoding is a sub-Nyquist sampled discrete-time representation of the signal \cite{EldarMichaeliBeyond}.
Our goal in this work is to consider the minimal distortion in recovering an analog signal from its samples with lossy compression of the samples at a prescribed bitrate, a setting which we call analog-to-digital compression (ADX) and is illustrated in Fig.~\ref{fig:ADX_system}. We are interested in particular in the optimal sampling rate to achieve this minimal distortion for a given lossy compression rate of the samples. %Understanding the sampling rate and sampling mechanism that attains the optimal lossy compression performances is useful not only in providing the optimal representation for reconstructing analog signals, but also in understanding the fundamental limits of their classification and errors in decisions based on finite-state functions of their measurements. 
\\

\begin{figure}
\begin{center}
\begin{tikzpicture}[scale=1]

\draw[->,line width=1pt]  (0,0) node[left] {$0$}--(6.2,0) node[right] {$f_s$};
\draw[->,line width=1pt]  (0,0)--(0,2.2) node[above] {$MSE$};

 \draw[blue] plot[domain=0:2, samples=100] (\x*\x/2,2-\x/2) -- plot[domain=2:6, samples=100] (\x, 1) node[above, xshift=-3cm, blue] {$D^\star(f_s,R)$};
 
\draw[red] plot[domain=0:4, samples=100] (\x*\x/4,2-\x/2) --  plot[domain=4:6, samples=100] (\x, 0);	

\draw[dashed] (2,0) node [below] {$f_{R}$}-- (2,1); 

\draw[dashed] (4,0) node [below] {$f_{\Nyq}$}-- (4,1); 

\draw[dashed] (-0.05,1) node[left] {$D_X(R)$} -- (2,1);
\draw (-0.05,1.9) node[left] {$\sigma_X^2$} -- (0.1,1.9);
\node [rotate=-18.5, red] at  (1.2,0.6) (Sx_or) {$\mmse^\star(f_s)$};
\end{tikzpicture}
\end{center}
\caption{\label{fig:contribution} Minimal distortion versus sampling rate. $D_X(R)$ is the information DRF describing the minimal distortion using lossy compression at bitrate $R$. $D^\star(R,f_s)$ is the minimal distortion using sampling at frequency $f_s$ followed by coding at bitrate $R$, and $\mmse(f_s)$ is the minimal distortion under sub-Nyquist sampling with infinite bit precision.} 
\end{figure}
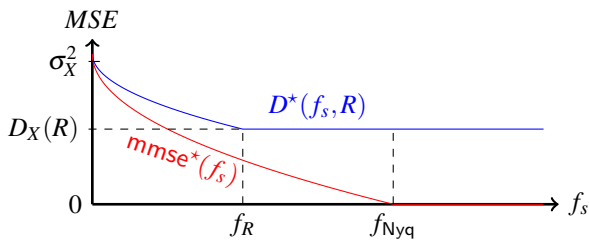

The distortion in ADX can be analyzed by considering the combined sampling and source coding model studied in \cite{Kipnis2014}. In this model, the analog source is a Gaussian stationary process. This process, or a noisy version of it, is sampled at rate $f_s$, after which the samples are encoded using a code rate of $R$ bits per time unit ($R/f_s$ bits per sample on average). However, while \cite{Kipnis2014} focused on uniform sampling with linear time invariant (LTI) pre-processing, our setting incorporates the class of all bounded linear samplers. That is, each sample is obtained by a bounded linear functional applied to the continuous-time analog path. We limit the average number of such samples obtained over a finite time interval to be at most $f_s$. As a result, our setting encapsulates a wide range of sampling models that are used in theory and practice. These include: filter-bank sampling, nonuniform sampling, multi-coset sampling \cite{950786, EldarMichaeliBeyond}, and truncated wavelet transforms \cite{eldar2015sampling} \cite{daubechies1998factoring}. 
%Our setting does not include compressed sensing since the sampled signal is sparse rather than Gaussian. although a the fundamental limits of lossy compression in compressed sensing using a setting equivalent to ADX are considered in \cite{kipnis2017fundamental}.
\par
In the special case of scalar uniform sampling and assuming that $f_s$ is above the Nyquist rate of the source, the  encoder in Fig.~\ref{fig:ADX_system} can estimate the signal with vanishing distortion prior to encoding it. As a result, in this case, the distortion associated with sampling is zero, and the minimal ADX distortion is described by the information DRF of the analog source. In this paper we ask the following question: given a source coding rate constraint $R$ (for example, as a result of quantizing each sample using  $R/f_s$ bits), do we still need to sample at the Nyquist rate in order to achieve the DRF or is a lower sampling rate sufficient? By answering this question, we establish in this work a critical sampling rate $f_R$, which is in general lower than the Nyquist rate, such that sampling at this rate achieves the distortion-rate bound at bitrate $R$. This is illustrated in Fig.~\ref{fig:contribution}, where we see that sampling below the Nyquist rate is possible without additional distortion over that given by the DRF associated with Nyquist rate sampling. 
Our results also imply that a picture similar to Fig.~\ref{fig:contribution} holds if we replace the uniform sampler by a multi-branch sampler or a nonuniform sampler. In this case, the sampling rate $f_s$ allowing optimal sampling with unlimited bitrate is the \emph{spectral occupancy} or \emph{Landau rate} of the signal \cite{1447892, 950786}, i.e., the Lebesgue measure of the support of its PSD. The spectral occupancy is also termed the \emph{Landau rate}, and it coincides with the Nyquist rate whenever the support of the PSD is a connected set (centered around the origin since the the signal is real). \par
When the signal is contaminated by noise before or during the sampling operation, there in no hope to achieve the DRF even with an unlimited sampling budget. Instead, the minimal distortion is described by the indirect DRF of the signal given its noisy version \cite{1057738}\cite[Sec. 4.5.4]{berger1971rate}. In this case, our results imply that the critical sampling rate $f_R$ achieving the indirect DRF at bitrate $R$ depends both on $R$ and the noise, and can be attained in a similar manner as in the non-noisy setting. \\

In order to intuitively understand why optimal lossy compression performance can be attained by sampling below the Nyquist rate, one may consider the lossy compression of a signal represented by a sequence of independent Gaussian random variables. This representation is quite general since most signals of interest can be represented using their independent coefficients under some orthogonal basis transformation \cite{donoho1998data}. In order to compress such a sequence in an optimal manner subject to a minimum mean squared (MSE) criterion in reconstruction, a source code is constructed by random samples from the distribution resulting from the water-filling formula of Kolmogorov \cite{1056823}. This distribution is a Gaussian product distribution where the variance of each of its components is obtained by subtracting the water-level parameter from the variance of the corresponding component in the original signal, so that signal components with variances smaller than the water-level are set to zero. As we explain in detail in Section~\ref{sec:finite_dims}, the ratio between the number of non-zeros in Kolmogorov's formula and the original support of the distribution of the sequence can be seen as the optimal sampling rate required to attain the minimal distortion subject to the bit constraint. 
\par
For an analog stationary signal, its Fourier basis decomposition provides a canonical orthogonal representation. Hence, the main challenge in attaining the optimal lossy compression at bitrate $R$ by sampling at rate $f_R$ is in ``aligning'' the distribution of the sampled signal in the Fourier domain with the optimal lossy compression attaining distribution. When $f_s$ is below $f_R$, the optimal alignment is described by a function $D^\star(f_s,R)$ defined by water-filling over $f_s$ spectral bands of maximal energy (or maximal SNR in the noisy version). As we show, this ``alignmnet'' is attainable by uniform multi-branch sampling using appropriate LTI pre-sampling operations. Together with a matching converse theorem with respect to $D^\star(f_s,R)$ under any bounded linear sampler, we conclude that $D^\star(f_s,R)$ fully characterizes the distortion in ADX. In particular, our results imply that the class of multi-branch LTI uniform sampling is optimal, in the sense that the distortion attained by any bounded linear sampler can be attained by a multi-branch uniform sampler with a sufficient number of sampling branches. \par
We also examine the distortion-rate performance of a very simple and sub-optimal A/D scheme: a scalar quantizer with a fixed number of bits per sample as an encoder and a linear non-causal decoder. We analyze this A/D scheme under a fixed bitrate budget, and show that there exists a distortion minimizing sampling rate that optimally trades off distortion due to sampling and due to quantization precision. This optimal sampling rate is at or below the Nyquist rate, and experiences a similar dependency in the bitrate as the critical ADX rate $f_R$. Our results also imply that, as opposed to the behavior of the optimal ADX distortion $D^\star(f_s,R)$, oversampling a bandlimited signal in PCM has a detrimental effect on the distortion.
%This phenomena was observed under various other source coding settings combining sampling and quantization \cite{370112,wu2012optimum,marco2010entropy}. 

%The main contribution of this paper is a full characterization of the minimal distortion in ADX. This distortion is given in terms of a function $D^\star(f_s,R)$ defined by water-filling over $f_s$ spectral bands of maximal energy (or maximal SNR in the noisy version), in a way analogous to Pinsker's water-filling expression for the DRF of a Gaussian analog source \cite{1056823}. The relation between the minimal distortion in ADX and $D^\star(f_s,R)$, is given in Section~\ref{sec:main_result} by standard achievability and converse theorems. 
% is informally given by:
% \begin{itemize}
%     \item [(i)] For any bounded linear sampling with average sampling rate at most $f_s$, the distortion in ADX at bitrate $R$ is bounded from below by $D^\star(f_s,R)$. 
%     \item [(ii)] For any sampling rate $f_s$, there exists a bounded linear sampler of average sampling rate $f_s$ and a representation of the samples obtained over time lag $T$ using $TR$ bits, such that the distortion in ADX converges to $D^\star(f_s,R)$. 
% \end{itemize}
%These results, derived in Section~\ref{sec:ADX}, establish $D^\star(f_s,R)$ as the minimal distortion in ADX. The critical sampling rate $f_R$ above which there is not additional distortion due to sampling is derived from $D^\star(f_s,R)$ in Section~\ref{sec:main_result}. We also illustrate in this section the dependency of $f_R$ in $R$ through various examples. \par

To put our work into context, we now briefly review some of the well-known sampling theories and their relation to our work. The celebrated Shannon-Kotelnikov-Wittaker sampling theorem asserts that a bandlimited deterministic signal $x(\cdot)$ with finite $\mathrm L_2$ norm can be perfectly reconstructed from its uniform samples at frequency $f_s > f_{\Nyq}$, where $f_{\Nyq}$ is the bandwidth of the signal. 
This statement can be refined when the exact support $\supp\,S_x$ of the Fourier transform of $x(\cdot)$ is known: $x(\cdot)$ can be obtained as the limit in $\Ell$ of linear combinations of the samples $x\left(\mathbb Z/f_s \right)$ iff for all $k\neq n \in \mathbb Z$, $\left(\supp\, S_x + f_s k \right) \cap \left(\supp\,S_x+f_s n \right)= \emptyset$, where a reconstruction formula is also available \cite{dodson1985fourier}. 

Lloyd \cite{Lloyd1959} provided an equivalent result for stationary stochastic processes, where the Fourier transform is replaced by the power spectral density (PSD). 
When sampling at the Nyquist rate is not possible, the minimal MSE (MMSE) in estimating a Gaussian stationary process from its uniform samples can be expressed in terms of its PSD \cite{1057404,1090615, 815501}. This MMSE in the case of multi-branch sampling was derived in \cite[Sec. IV]{Kipnis2014}. \par
In general, the estimation of any regular Gaussian stationary process from its partial observations can be translated into the problem of projections into Hilbert spaces generated by complex exponentials \cite{beutler1961sampling, dym1978gaussian}. 
In particular, when the PSD is supported over a compact set $S \subset \mathbb R$, then the closed linear space (CLS) of the exponentials with support over $S$ is isomorphic, by the Fourier transform operator, to the \emph{Paley-Wiener space} $\Pw(S)$ of functions with Fourier transform supported in $S$. 
% see if can consolidate sampling theorey here
In this space, optimal reconstruction of signals from their samples is possible when the samples define a \emph{frame} \cite{FEICHTINGER1992530, 1447892}. Beurling and Landau \cite{beurling1989collected, Landau1967} showed that a sufficient and necessary condition for a discrete set of time samples to define a frame in $\Pw(S)$ is that its Beurling density (also called \emph{uniform density}) exceeds the Lebesgue measure $\mu(S)$ of $S$. In our setting $\mu(S)$ is the spectral occupancy of the signal, which we also refer to as its \emph{Landau} rate. %Since the spectrum of real signals is symmetric, their Landau and Nyquist rates coincides whenever their spectrum is supported over a single interval centered at the origin\footnote{For this reason Landau referred to the spectral occupancy as the Nyquist rate in \cite{1447892}, although the former is clearly a generalization of the latter.}. 
We refer to \cite{john1996sampling, marvasti2012nonuniform, unser1994general, eldar2015sampling} for additional background on sampling theory and generalized sampling techniques.
\par
On the other side of the ADX setting is the distortion in lossy compression at a limited bitrate $R$. The optimal trade-off between the average quadratic distortion and bitrate in the description of a Gaussian stationary process $X(\cdot)$ is given by its quadratic DRF, denoted here by $D_X(R)$. This DRF was initially derived by Pinsker in \cite{Pinsker1954} and was reported in \cite{1056823}, and then extended by Dubroshin and Tsybakov \cite{1057738} to the case where the process is contaminated by Gaussian noise. Both the noisy case explored by Dubroshin and Tsybakov and the ADX characterized in this work fall within the \emph{indirect} or \emph{remote} source coding setting \cite[Sec. 4.5.4]{berger1971rate}, in which the encoder has no direct access to the signal it tries to describe. Indirect source coding problems were also considered in \cite{1056251,1054469, KipnisRini2015}. 
%Within the context of Gaussian signals, a review of the interplay and between source coding and harmonic analysis is given in \cite{donoho1998data}. 
\par
%
%The interplay between source coding and sampling arise in numerous settings, many of which can be seen as special cases of our ADX setting by restricting to a special signal distribution, specific bounded linear sampler, or sub-optimal encoder and/or decoder. Specifically, the tradeoff between sampling rates and bitrate in the description of a function of correlated sources was investigated in \cite{6573236}. The conditions on the encoder such that the distortion-rate is achievable as sampling density or rate goes to infinity were addressed in \cite{neuhoff2013information}. For sampling rates above the Nyquist rate and under non-ideal encoding, the tradeoff between lossy compression rates and distortion was considered in \cite{370112}, and the trade-off between sampling density (number of sensors) and bit-resolution of each sensor leading to exponential distortion decay was studied in \cite{kumar2011high}. \par
%
The interplay between bit resolution in source coding and sampling rates arise in numerous settings. For sampling rates above the Nyquist rate, non trivial trade-offs between the oversampling rate and bitrate, under different encoding scenarios, can be found in \cite{neuhoff2013information}, 
\cite{6573236}, \cite{370112}, \cite{marco2010entropy} and \cite{kumar2011high}. 
%many of which can be seen as special cases of our ADX setting by restricting to a special signal distribution, specific bounded linear sampler, or sub-optimal encoder and/or decoder. Specifically, the tradeoff between sampling rates and bitrate in the description of a function of correlated sources was investigated in \cite{6573236}. The conditions on the encoder such that the distortion-rate is achievable as sampling density or rate goes to infinity were addressed in \cite{neuhoff2013information}. For sampling rates above the Nyquist rate and under non-ideal encoding, the tradeoff between lossy compression rates and distortion was considered in \cite{370112}, and the trade-off between sampling density (number of sensors) and bit-resolution of each sensor leading to exponential distortion decay was studied in \cite{kumar2011high}. \par
In order to explore the trade-off between lossy compression and sub-Nyquist sampling rates, a combined sampling and source coding problem was recently introduced in \cite{Kipnis2014} assuming uniform sampling. The ADX can be seen as an extension of the setting in \cite{Kipnis2014} to any bounded linear sampling technique, and the determination of the minimal sampling rate $f_R$ attaining the optimal source coding performance. 
Finally, in the context of compressed sensing (CS) \cite{eldar2012compressed}, the optimal trade-off between the sampling rate and bitrate is explored in \cite{kipnis2017fundamental} for an i.i.d. Bernoulli-Gauss distribution, and in \cite{wu2011optimal} for an arbitrary i.i.d. distribution as the number of bits goes to infinity. We note that our results are not directly relevant to CS since we focus on sampling continuous-time Gaussian signals that are not sparse. Nevertheless, the discrete-time counterpart of our results may be applied to CS to obtain a lower bound on the distortion when the signal's support is given as side information, or an upper bound on the distortion when the samples of the signal are encoded using a Gaussian codebook \cite{KipnisCS}.
\\

%\subsection{Paper Organization}
The rest of the paper is organized as follows:
in Section~\ref{sec:finite_dims} we provide intuition for the dependency between sampling and lossy compression in representing finite dimensional random vectors. In Section~\ref{sec:problem_formulation} we define the ADX problem and the class of bounded linear samplers. Our main results are given in Section~\ref{sec:ADX}. In Section~\ref{sec:PCM} we consider scalar quantization encoding and compare its performance to the minimal ADX distortion. Concluding remarks are provided in Section \ref{sec:conclusion}.

\section{Lossy Compression of Finite Dimensional Signals \label{sec:finite_dims}}
%To understand intuitively why we may have equality $D^\star(f_s,R)=D_X(R)$ for sampling frequencies below the Landau rate, i.e., the sampling rate determined by the support of the spectrum, we consider the finite dimensional version of the combined sampling and source coding problem of Fig.~\ref{fig:system_model}. This intuition is presented in the following subsection. 
As an introduction to the ADX setup, it is instructive to consider a simpler setting involving the sampling and lossy compression of signals represented as finite dimensional random real vectors.  \\

Let $X^n$ be an $n$-dimensional Gaussian random vector with covariance matrix $\Sigma_{X^n}$, and let $Y^m$ be a projected version of $X^n$ defined by
\begin{equation} \label{eq:samples_discrete}
Y^m = H X^n,
\end{equation}
where $H \in \mathbb R^{m \times n}$ is a deterministic matrix and $m<n$. This projection of $X^n$ into a lower dimensional space is the counterpart for the sampling operation in the ADX setting of Fig.~\ref{fig:ADX_system}. We consider the normalized MMSE estimate of $X^n$ from a representation of $Y^m$ using a limited number of bits. \par
Without constraining the number of bits, the distortion in this estimation is given by
\begin{equation}
\label{eq:mmse_finite_dim}
\mmse(X^n|Y^m) \triangleq  \frac{1}{n} \tr\left(\Sigma_{X^n} - \Sigma_{X^n|Y^m} \right),
\end{equation}
where $\Sigma_{X^n|Y^m}$ is the conditional covariance matrix of $X^n$ given $Y^m$.
%Since $X^n$ is Gaussian, 
%\[
%\Sigma_{X^n|Y^m} \triangleq \Sigma_{X^n} H^* \left( {H} %\Sigma_{X^n} H^* \right)^{-1} H \Sigma_{X^n},
%\]
When $Y^m$ is encoded using a code of no more than $nR$ bits, the minimal distortion cannot be smaller than the indirect DRF of $X^n$ given $Y^m$, denoted by $D_{X^n|Y^m}(R)$. This function is given by the following parametric expression \cite{1057738}
\begin{equation}
\begin{split}
D(R_{\theta}) & = \tr \left(\Sigma_{X^n}\right) - \sum_{i=1}^m \left[  \lambda_i \left( \Sigma_{X^n|Y^m} \right) -\theta \right]^+, \label{eq:D_finite} \\
R_{\theta}  & = \frac{1}{2} \sum_{i=1}^m \log^+\left[ \lambda_i \left(\Sigma_{X^n|Y^m} \right)/\theta \right] 
\end{split}
\end{equation}
where $x^+ = \max\{x,0\}$ and $\lambda_i \left( \Sigma_{X^n|Y^m} \right)$ is the $i$th eigenvalue of $\Sigma_{X^n|Y^m}$. \\

\begin{figure}
\begin{center}
\begin{tikzpicture}
\draw[->,line width = 1] (0,0) -- (0,4);
\draw[line width = 1] (0,0) -- (7.2,0);

\draw[line width = 1] (0.1,4.2)--(0.1,4.4) -- node[above] {eigenvalues of $\Sigma_{X^n}$} (3.8,4.4) -- (3.8,4.2);

\draw  [fill = blue!30, line width = 1] (0.1,0) rectangle (0.6,3.7) node[above, xshift=-0.1cm] {$\lambda_n$};
\draw  [fill = blue!30, line width = 1] (0.9,0) rectangle (1.4,3) node[above, xshift=-0.1cm] {$\lambda_{n-1}$};
\draw  [fill = blue!30, line width = 1] (1.7,0) rectangle (2.2,2.5) node[above, xshift=-0.1cm] {\small $\lambda_{n-2}$};
\draw  [fill = red!30, line width = 1] (0.1,0) rectangle (0.6,1.7);
\draw  [fill = red!30, line width = 1] (0.9,0) rectangle (1.4,1.7);
\draw  [fill = red!30, line width = 1] (1.7,0) rectangle (2.2,1.7);
\draw  [fill = red!30, line width = 1] (2.5,0) rectangle (3,1.5);
\draw  [fill = red!30, line width = 1] (3.3,0) rectangle (3.8,1);
\draw [dashed, line width = 1] (-0.1,1.7) -- (7,1.7) node[right, xshift = 0cm] {$\theta$};

\draw[line width = 1] (4.1,4.2)--(4.1,4.4) -- node[above] {eigenvalues of $\Sigma_{X^n|Y^m}$} (7,4.4) -- (7,4.2);
\draw[->,line width = 1] (4,0) -- (4,4);
\draw  [fill = blue!30, line width = 1] (4.1,0) rectangle (4.6,3.7) node[above, xshift=0.1cm] {$\lambda_m$};
\draw  [fill = blue!30, line width = 1] (4.9,0)  rectangle (5.4,3) node[above, xshift=0.1cm] {$\lambda_{m-1}$};
\draw  [fill = blue!30, line width = 1] (5.7,0) rectangle (6.2,2.5) node[above, xshift=-0.1cm] {\small $\lambda_{m-2}$};

\draw  [fill = red!30, line width = 1] (4.1,0) rectangle (4.6,1.7);
\draw  [fill = red!30, line width = 1] (4.9,0) rectangle (5.4,1.7);
\draw  [fill = red!30, line width = 1] (5.7,0) rectangle (6.2,1.7);
\draw  [fill = red!30, line width = 1] (6.5,0) rectangle (7,0.5);

\end{tikzpicture}
\caption{\label{fig:eigenvalues} 
Optimal sampling occurs whenever $D_{X^n}(R)=D_{X^n|Y^m}(R)$. This condition is satisfied even when $m<n$, as long as there is equality among the eigenvalues of $\Sigma_{X^n}$ and $\Sigma_{X^n|Y^m}$ which are larger than the water-level parameter $\theta$.}
\end{center}
\end{figure}
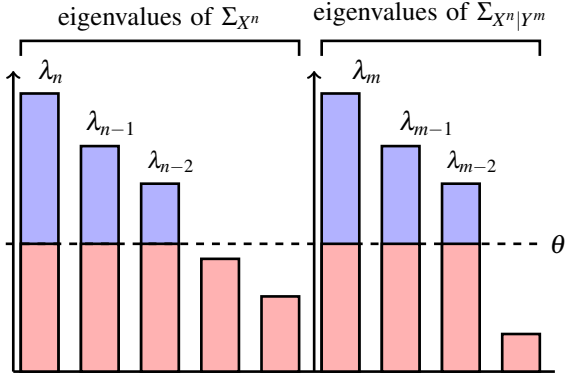

It follows from \eqref{eq:mmse_finite_dim} that $X^n$ can be recovered from $Y^m$ with zero MMSE if and only if 
\begin{equation}\label{eq:eigen_equality}
 \lambda_i\left(\Sigma_{X^n} \right) =  \lambda_i \left( \Sigma_{X^n| Y^m} \right),
\end{equation}
for all $i=1,\ldots,n$. When this condition is satisfied, \eqref{eq:D_finite} takes on the form 
\begin{equation}
\begin{split}
D(R_{\theta}) & =  \sum_{i=1}^n \min \left\{ \lambda_i \left( \Sigma_{X^n} \right), \theta \right\},  \label{eq:D_finite_regular} \\
R_{\theta} & = \frac{1}{2} \sum_{i=1}^n \log^+\left[ \lambda_i \left(\Sigma_{X^n} \right)/\theta \right] 
\end{split}
\end{equation}
which is Kolmogorov's reverse waterfilling expression for the DRF of the vector Gaussian source $X^n$ \cite{1056823}, i.e., the minimal distortion in encoding $X^n$ using codes of rate $R$ bits per source realization. The key insight is that the requirements for equality between \eqref{eq:D_finite} and \eqref{eq:D_finite_regular} are not as strict as \eqref{eq:eigen_equality}: all that is needed is equality among those eigenvalues that affect the value of \eqref{eq:D_finite_regular}. In particular, assume that for a point $(R,D)$ on $D_{X^n}(R)$, only $\lambda_n (\Sigma_{X^n}),\ldots \lambda_{n-m+1} (\Sigma_{X^n})$ are larger than $\theta$, where the eigenvalues are organized in ascending order. Then we can choose the rows of the matrix $H$ to be the $m$ left eigenvectors corresponding to $\lambda_n (\Sigma_{X^n}),\ldots \lambda_{n-m+1} (\Sigma_{X^n})$. With this choice of $H$, the $m$ largest eigenvalues of $\Sigma_{X^n|Y^m}$ are identical to the $m$ largest eigenvalues of $\Sigma_{X^n}$, and \eqref{eq:D_finite_regular} is equal to \eqref{eq:D_finite}. \par
Since the rank of the sampling matrix is now $m<n$, we effectively performed sampling below the ``Nyquist rate'' of $X^n$ without degrading the performance dictated by its DRF. One way to understand this phenomena is an alignment between the range of the sampling matrix $H$ and the subspace over which $X^n$ is represented, according to Kolmogorov's expression \eqref{eq:D_finite_regular}. When this expression implies that not all degrees of freedom are utilized by the optimal distortion-rate code, sub-sampling does not incur further performance loss provided the sampling matrix is aligned with the optimal code. This situation is illustrated in Fig.~\ref{fig:eigenvalues}. Taking less rows than the actual rank of $\Sigma_{X^n}$ is the finite-dimensional analog of sub-Nyquist sampling in the infinite-dimensional setting of continuous-time signals. \\

In the rest of this paper we explore the counterpart of the phenomena described above in the richer setting of continuous-time stationary processes that may or may not be bandlimited, and whose samples may be corrupted by additive noise. The precise problem description is given in the following section.

\tikzstyle{int1}=[draw, fill=blue!10, minimum height = 0.5cm, minimum width=1cm,thick ]
\tikzstyle{enc}=[draw, fill=blue!10, minimum height = 2.7cm, minimum width=1cm,thick ]
\tikzstyle{int}=[draw, fill=blue!10, minimum height = 1cm, minimum width=1.5cm,thick ]

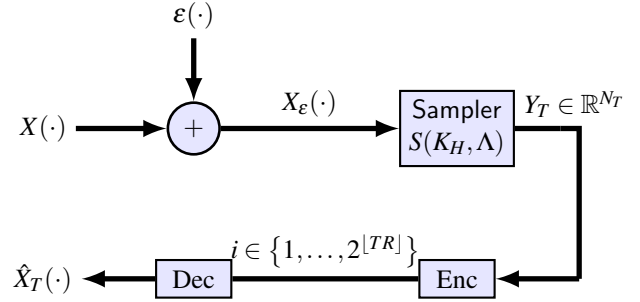
\begin{figure}
\begin{center}
\begin{tikzpicture}[node distance=2cm,auto,>=latex]
 \node at (0,0) (source) {$X(\cdot)$} ;
 \node [right] (dest) [below of=source, node distance = 2cm]{$\hat{X}_T(\cdot)$};
 %\node [coordinate,right of = source, node distance =2.5cm] at (0,0) (smp_in) {};
 \node[ssum, right of = source, node distance = 2cm] (plus) {$+$};
 
\node[above of = plus, node distance = 1.5cm] (noise) {$\epsilon(\cdot)$};
\node [int,right of = plus, node distance =3.5cm, align = center]  (smp) {$\mathsf{Sampler}$ \\ $S(K_H,\Lambda)$};
 \draw[->, line width = 2pt] (noise) -- (plus);

\node[int1,below of=smp, node distance = 2cm, align = center] (enc) {$\enc$};
\node[int1,below of = plus, node distance = 2cm] (dec) {$\dec$};

\node [right of = smp, node distance = 1.5cm] (right_edge) {};
\node [below of = right_edge, node distance = 2cm] (right_b_edge) {};

\draw[-,line width=2pt] (smp) -- node[above, xshift = 0.5cm] {$Y_T \in \mathbb R^{N_T}$} (right_edge);

\draw[-,line width=2pt] (enc) -- node[above, xshift = 0cm]{$i\in \left\{1,\ldots,2^{\lfloor T R \rfloor} \right\}$}(dec);
\draw[-,line width = 2]  (right_edge.west) -|  (right_b_edge.east);
\draw[->,line width = 2]  (right_b_edge.east) -- (enc);
\draw[->,line width = 2]  (dec) -- (dest);
%\draw[->,line width=2pt] (source) -- (smp);
\draw[->,line width=2pt] (source) -- (plus);
\draw[->,line width=2pt] (plus) -- node[above] {$X_{\epsilon}(\cdot)$} (smp);
%\draw[<->,dashed,line width=1pt] (source) -- node {distortion} (dest);
\end{tikzpicture}

\caption{\label{fig:system_model} ADX via a
combined sampling and source coding setting with an additive noise prior to sampling. We consider the distortion in recovering $X(\cdot)$ over $[-T/2,T/2]$ from a representation of its $N_T$ samples using $\lfloor TR \rfloor $ bits, where $N_T$ is the number of samples in $[-T/2,T/2]$ and $N_T/T$ is bounded asymptotically by $f_s$. 
}
\end{center}
\end{figure}

\section{Problem Formulation and Preliminaries \label{sec:problem_formulation}}

\subsection{ADX Setting}
The ADX system is described in  Fig.~\ref{fig:system_model}. We assume that $X\left(\cdot\right)=\left\{ X\left(t\right),\, t\in\mathbb{R}\right\}$ is a zero-mean real Gaussian stationary process with a \emph{known} PSD $S_X(f)$:
%The PSD is real, symmetric and an absolutely integral function that satisfies
\begin{equation}
    \label{eq:PSD_def}
\mathbb E \left[X(t) X(s) \right] = \int_{-\infty}^\infty S_X(f) e^{2\pi j (t-s) f} df 
 ,\quad t,s \in \mathbb R.
\end{equation}
In particular, $S_X(f)$ is in $\mathrm L_1(\mathbb R)$ and the variance of $X(\cdot)$ is given by
\[
\sigma_X^2 = \int_{-\infty}^\infty S_X(f) df. 
\]
The noise is another zero-mean real Gaussian stationary process $\epsilon \left(\cdot\right)=\left\{ \epsilon\left(t\right),\, t\in\mathbb{R}\right\}$ independent of $X(\cdot)$ with PSD $S_\epsilon(f)$ and finite variance. We assume that the spectral measures of $X(\cdot)$ and $\epsilon(\cdot)$ are absolutely continuous with respect to the Lebesgue measure, so that their distribution is fully characterized by their PSDs. 
\par
The sampler $S$ belongs to the class of bounded linear samplers to be defined in the sequel. This sampler 
receives the process 
 \[
X_\epsilon (\cdot) \triangleq X(\cdot)  + \epsilon(\cdot),
\]
i.e., the noisy version of $X(\cdot)$ as its input. For a finite time horizon $T>0$, the sampler $S$ produces a finite number
$N_T$ of samples 
\[
Y_T \triangleq \left(Y_1,\ldots,Y_{N_T} \right) = S_T \left(X_\epsilon(\cdot)\right). 
\]
The assumption that the variance of the noise is finite excludes, for example, $\epsilon(\cdot)$ from being a white noise signal. This assumption is necessary to define sampling of $X_\epsilon(\cdot)$ in a meaningful way, as we explain below. \par
%where each sample $Y_n$ is restricted to be a continuous linear functional of $X_{\epsilon}(\cdot)$ in the sense defined in the sequel. 
The encoder
\begin{equation}
    \label{eq:encoder}
f : \mathbb R^{N_T} \rightarrow  \left\{1,\ldots,2^{\lfloor TR \rfloor} \right\},
\end{equation}
receives the vector $Y_T$ and outputs an index in $\left\{1,\ldots,2^{\lfloor TR \rfloor} \right\}$. The decoder, 
\begin{equation}
    \label{eq:decoder}
g :  \left\{1,\ldots,2^{\lfloor TR \rfloor} \right\} \rightarrow  \mathbb R^{[-T/2,T/2]},
\end{equation}
upon receiving this index from the encoder, produces a reconstruction waveform $\hat{X}_T(\cdot)$. The goal of the joint operation of the encoder and the decoder is to minimize the average MSE
\begin{equation} \label{eq:MSE_def}
\frac{1}{T} \int_{-T/2}^{T/2} \mathbb E \left(X(t)- \widehat{X}_T(t) \right)^2 dt. 
\end{equation}

Given a particular bounded linear sampler $S$, and a bitrate R, we are interested in characterizing the function 
\begin{equation}
    \label{eq:DRF_finit_R}
    D_T(S,R) \triangleq \inf_{f,g} \frac{1}{T} \int_{-T/2}^{T/2} \mathbb E \left(X(t) - \widehat{X}_T(t) \right)^2,
\end{equation}
 where the infimum is over all encoders and decoders of the form \eqref{eq:encoder} and \eqref{eq:decoder}. We also consider the asymptotic version of \eqref{eq:DRF_finit_R}:
\begin{equation}
    \label{eq:DRF_asymp}
    D(S,R) \triangleq \liminf_{T\rightarrow \infty} D_T(S,R). 
\end{equation}

Before describing the class of bounded linear samplers, we remark on some of the properties of ADX setting:
\begin{itemize}
\item Information loss in ADX is due to noise, sampling, and encoding. We do not consider limitations on the decoder that may exists in practice, such as memory or complexity. %For example, the decoder may first output a finite number of samples that are then interpolated to the continuous time estimate $\widehat{X}(\cdot)$. 
\item The additive noise $\epsilon(\cdot)$ may be seen as an external interference in transmitting $X(\cdot)$ or as noise associated with the sampling operation. With obvious adjustments, our setting can also handle a discrete-time noise vector with a stationary distribution added post sampling.
\item For a finite time horizon $T$, the decoder is only required to recover $X(\cdot)$ over the interval $[-T/2,T/2]$. 
However, as follows from the description of the sampler below, each sample may depend on a realization of $X_\epsilon(\cdot)$ over the entire time-horizon (past and future). It is possible to restrict the sampler to be a function of $X(\cdot)$ only over $[-T/2,T/2]$ provided the conditional distribution of $X(\cdot)$ given its samples converges to an asymptotic distribution as $T$ goes to infinity. Our asymptotic analysis below remains valid under this restriction due to the stationary distribution of $X(\cdot)$. Our setting also prohibits the sampler to depend on $T$. This restriction precludes adaptive sampling schemes such as in \cite{boda2017sampling}.
%Under this condition, our main results on an asymptotic lower bound to $D_S(f_s,R)$ holds for samplers 
\item Since $X(\cdot)$ is a stationary process, we can replace the interval $[-T/2,T/2]$ by any other interval of length $T$ without affecting the main results. 
\item As opposed to common situations in source coding of stationary processes (e.g., \cite[Lem. 10.6.2]{gray2011entropy}), the liminf in \eqref{eq:DRF_asymp} cannot be replaced by a simple infimum or a limit. One explanation for this difference is that, as we explain below, the coding scheme that attains $D(f_s,R)$ essentially describes the estimator of $X(\cdot)$ from the samples $Y^{N_T}$, and the distribution of this estimator is in general not stationary. 
\end{itemize}

\subsection{Bounded Linear Sampling of Random Signals \label{subsec:sampler_structure}}
We now describe the class of bounded linear samplers we use in the ADX setting of Fig.~\ref{fig:system_model}. Assume first that the input to the sampler is a deterministic signal $x(\cdot)$ in a class of signals $\mathcal X$. Each sample $Y_n$ can then be seen as the result of applying a functional $\phi_n$ on $x(\cdot)$. In accordance with physical considerations of realizable systems, we require that the space of signals $\mathcal X$ is embedded in the Hilbert space of real functions of finite energy $\Ell$, and that the functional defining the $n$th sample is linear and bounded. In other words, each sample is defined by an element of the dual space $\mathcal X^\star$ of $\mathcal X$. For this reason, we assume that $\mathcal X$ and $\mathcal X^\star$ are standard spaces of \emph{test functions} and \emph{distributions}, respectively \cite{zemanian1965distribution}, so that every distribution $\phi \in \mathcal X^\star$ has a Fourier transform in the Gelfand-Shilov sense \cite{GS2_english}. 
%and that $\mathcal X^\star$ contains the Dirac distribution, i.e., the pointwise evaluation operation on elements of $\mathcal X$ is continuous. 
Consequently, the bilinear operation $\langle \phi, x \rangle$ between $\phi \in \mathcal X^\star$ and $x \in \mathcal X$ satisfies the Plancherel identity
\begin{equation} \label{eq:Parseval}
 %\int_{-\infty}^\infty x(t) \phi^*(t) dt \triangleq
\langle \phi, x \rangle = \int_{-\infty}^\infty \mathcal Fx (f) \left(\mathcal F \phi(f) \right)^* df,
\end{equation}
where $\mathcal F$ is the Fourier transform and $*$ denotes complex conjugation. 
%Note that a sufficient and necessary condition for $\mathcal X^\star$ to include the Direc distribution $\delta_0$ is that each $x(\cdot) \in \mathcal X$ is in $\mathrm{L}_1(\mathbb R)$. \par
To summarize, for each $T>0$ and assuming an appropriate class of input signals $\mathcal X$, the output of the sampler is defined by a set of $N_T$ elements of $\mathcal X^\star$. We denote the samplers constructed in this manner as the class of \emph{bounded linear samplers}.  \\

Next, we consider bounded linear sampling of the random process $X_\epsilon(\cdot)$.
Denote $w_t(f) = e^{2\pi i f t}$. Since the spectral measure of $X_{\epsilon}(\cdot)$ is absolutely continuous with respect to the Lebesgue measure $\mu$, we have 
\begin{align}
\mathbb E \left[ X_\epsilon(t) X_\epsilon(s) \right] & = \int_{-\infty}^\infty w_t(f) w^*_s(f) S_{X_\epsilon}(f) df 
\label{eq:isometry}
\end{align}
so that the mapping $X_\epsilon(t) \rightarrow w_t$ is an isometry. As in \cite{beutler1961sampling}, we extend this isometry to an isomorphism between the Hilbert space
 %$Ell(\Omega,\sigma(X_\epsilon(\cdot)),\mathbf P)$
generated by the closed linear span (CLS) of the process $X_{\epsilon}(\cdot)$ with norm $\| X_\epsilon(t) \|^2 = \mathbb E [X_\epsilon^2(t)]$, and the Hilbert space $\mathcal W(S_{X_\epsilon}))$ which is the CLS of $\left\{ w_t,\, t\in \mathbb R \right\}$ with $\Ell$ the $\Ell$ norm weighed by $S_{X_\epsilon}(f)$. This isomorphism allows us to define bounded linear sampling of $X_\epsilon(\cdot)$ by describing its operation on $\mathcal W(S_{X_\epsilon}))$. 
Specifically, we identify $\mathcal X$ with the elements of $\mathcal W(S_{X_\epsilon}))$ and set $\mathcal X^\star$ to be a space of distributions such that, for any $\phi \in \mathcal X^\star$, its Fourier transform $\hat{\phi}$ satisfies
\begin{equation} \label{eq:func_condition}
\int_{-\infty}^\infty  |\hat{\phi}(f)|^2 S_X(f) df < \infty. 
\end{equation}
For such $\phi_n \in \mathcal X^\star$, we define the sample
\[
Y_n =\int_{-\infty}^\infty X_\epsilon(\tau) \phi_n^*(\tau)d\tau 
\]
to be the inverse image of $\langle \phi_n, w_t \rangle (f)$ under $X_\epsilon(t) \rightarrow w_t$. Although in most situations this inverse image cannot be found explicitly, we are usually interested in the joint statistics of $Y_n$ and $X(t)$, given by
\begin{equation} 
\label{eq:sample_stat}
\mathbb E \left[ Y_n X(t) \right] = \int_{-\infty}^\infty  \langle \phi_n, w_t \rangle(f)   e^{-2\pi i f} S_{X_\epsilon}(f) dt 
\end{equation}
In particular, condition \eqref{eq:func_condition} guarantee that the integral in \eqref{eq:sample_stat} exists.
\begin{example}[pointwise evaluation of bandlimited signals] \label{ex:sampling}
Assume that $\phi_n$ is the Dirac distribution at $t_n$ corresponding to pointwise evaluation at $t=t_n$, so that \eqref{eq:func_condition} holds and $\langle \phi_n, w_t\rangle = w_{t_n}$, whose inverse image is $X_\epsilon(t_n)$. If in addition $S_{X_\epsilon}(f)$ is supported on an open set $S\in \mathbb R$, then the element of $\mathcal W(S_{X_\epsilon}))$ can be identified with the Paley-Wiener space of complex valued functions whose Fourier transform is supported on $S$. In this case, for most applications it is enough to take $\mathcal X = \mathcal W(S_{X_\epsilon}))$ with its Hilbert space topology so that $\mathcal X^\star = \mathcal X$. For example, pointwise evaluation at $t=t_n$ for  $x \in \mathcal W(S_{X_\epsilon}))$ is obtained by the inner product of $x$ with $\mathcal F^{-1} (\mathbf 1_S(f) e^{2\pi i f t_n})$ which is a member of $\mathcal W(S_{X_\epsilon})$. 
\end{example}

In contrast to the scenario described in Example~\ref{ex:sampling}, we do not restrict ourselves to bandlimited signals at the sampler input. Thus, our setting supports any PSD $S_{X_\epsilon}(f)$ and corresponding set of functionals $\mathcal X^\star$ such that \eqref{eq:func_condition} holds. \\

\begin{figure}
\begin{center}
\begin{tikzpicture}[node distance=2cm,auto,>=latex]
\node at (0,0) (source) {$X_\epsilon(\cdot)$};

\node[int1, right of = source, node distance = 2cm ]  (pre_sampling2) {$K_H(t,\tau)$};  

\draw[-, line width=1pt] (source)--(pre_sampling2);   

\node [coordinate, right of = pre_sampling2,node distance = 2cm] (smp_in2) {};
  \node [coordinate, right of = smp_in2,node distance = 0.7cm] (smp_out2){};
	\node [coordinate,above of = smp_out2,node distance = 0.4cm] (tip2) {};
\fill  (smp_out2) circle [radius=2pt];
\fill  (smp_in2) circle [radius=2pt];
\fill  (tip2) circle [radius=2pt];`
\node[left,left of = tip2, node distance = 0.7 cm] (ltop2) {$t_n\in \Lambda$};

\node [right of = smp_out2, node distance=3cm]  (out) {$Y_T \in \mathbb R^{N_T}$};

\draw[->,densely dotted,line width = 1pt,thin] (ltop2) to [out=0,in=70] (smp_out2.north);
 \draw[line width=1pt]  (smp_in2) -- (tip2);
 \draw[-,line width=1pt]   (pre_sampling2)-- (smp_in2);
\draw[line width=1pt]  (smp_in2) -- (tip2);
\draw[->,line width = 1pt] (smp_out2) -- (out); 
\draw[line width=1pt, dashed] (0.9,0.8) rectangle (6,-0.7) ;
\end{tikzpicture}
\caption{\label{fig:sampler_nonuniform} Bounded linear sampler with a pre-sampling transformation with kernel $K_H$ and a sampling set $\Lambda$. }
\end{center}
\end{figure}
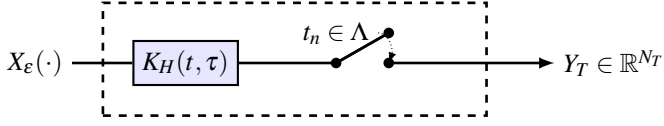

\begin{figure}
\begin{center}
\input{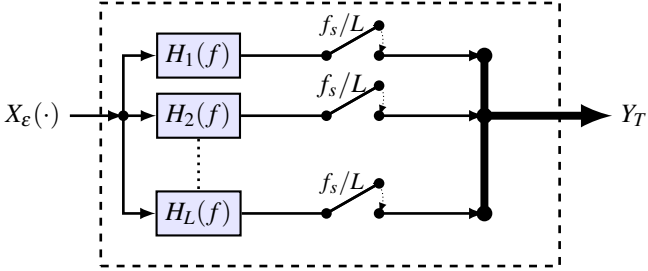}
\end{center}
\caption{ \label{fig:sampler_multi} Multi-branch linear time-invariant (MB-LTI) sampler.}
\end{figure}
Without loss of generality, it follows from the Schwartz kernel theorem \cite{gelfand1964generalized} applied to $\mathcal X \times \mathcal X^\star$ that the sequence of functionals defining the samples can be described in terms of a bilinear kernel $K_H(t,s)$ on $\mathbb R \times \mathbb R$ and a discrete \emph{sampling set} $\Lambda \subset \mathbb R$, as illustrated in Fig.~\ref{fig:sampler_nonuniform}. That is, the $n$th sample is given by
%The kernel $K_H(t,s)$ is chosen such that element $t_n \in \Lambda$ defines a linear bounded functional $K(t_n,s)$ on $\mathcal X$ given by
\[
y_n \triangleq \int_{-\infty}^\infty X_\epsilon(s)K_H(t_n,s) ds. 
\]
In order to control the number of samples taken every time horizon, we consider a set $\Lambda \subset \mathbb R$ that is uniformly discrete in the sense that there exists $\delta>0$ such that $|t-s|>\delta$ for every non identical $t,s\in \Lambda$. For a time horizon $T$, we denote
\[
\Lambda_T \triangleq \Lambda \cap [-T/2,T/2],
\] 
and define $y_T$ to be the finite dimensional vector obtained by sampling $x_\epsilon(\cdot)$ at times $t_1,\ldots,t_n \in \Lambda_T$. \par
The assumption that $\Lambda$ is uniformly discrete ensures that for any $T$, the \emph{density} of $\Lambda_T$,
\[
d_T(\Lambda) \triangleq \frac{\card \left(\Lambda_T \right)}{T},
\]
 if finite, and so is the limit
\[
d^+(\Lambda) \triangleq \limsup_{T\rightarrow \infty} d_T(\Lambda).
\]
We denote $d^+(\Lambda)$ as the  \emph{upper symmetric density} of $\Lambda$. Whenever it exists, we define the limit
\[
d(\Lambda) = \lim_{T\rightarrow \infty} d(\Lambda_T) = \lim_{T\rightarrow \infty} \frac{\card\left(\Lambda \cap [-T/2,T/2] \right)}{T},
\]
as the \emph{symmetric density} of $\Lambda$. \\

\subsection{Multi-Branch LTI Uniform Sampling}
An important special case of bounded linear sampling is described by the sampler in Fig.~\ref{fig:sampler_multi}. This sampler has $L$ sampling branches, where the $l$th branch consists of a linear time invariant (LTI) pre-sampling filter with transfer function $H_l(f)$ followed by a uniform sampler at rate $f_s/L$. Consequently, the $n$th sample produced by the $l$th branch is given by
\[
Y_{l,n} = \int_{-\infty}^\infty h_l(nL/f_s-\tau)X_\epsilon(\tau).
\] 
We define 
\[
\Yv_n = (Y_{1,n},\ldots,Y_{L,n})
\]
as the $n$th output of all branches. For a finite time horizon $T$, the output of the sampler is 
\[
Y_T \triangleq \left\{ \Yv_n,\, |n| < \lfloor T f_s/L \rfloor /2 \right\},
\]
so that $Y_T$ incorporates at most $N_T = \lfloor T f_s \rfloor$ samples from the process at the input to the sampler. \par
The class of samplers obtained in this manner is called multi-branch LTI uniform samplers (MB-LTI), where we denote a single sampler from this class by $S_{f_s}(H_1,\ldots,H_L)$. In order to see that a MB-LTI is a bounded linear sampler, note that its $n$th sample can be defined by the functional $\phi_n = \phi_{kL+l}$, $k=0,\ldots,N/L$, $l=1,\ldots,L$, 
\[
\int_{-\infty}^\infty X_\epsilon(\tau) \phi_n(\tau) d\tau = \int_{-\infty}^\infty X_\epsilon(\tau) h_l(kL/f_s-\tau) d\tau. 
\]
A MB-LTI sampler belongs to the class of shift-invariant samplers \cite{eldar2015sampling}, for which 
\begin{equation} \label{eq:asymp_samples}
Y_{\infty} \triangleq \cup_{T>0}  Y_{N_T},
\end{equation}
is invariant to time shifts by integer multiples of $L/f_s$ in the input to the sampler $X_\epsilon(\cdot)$.

\subsection{Properties of optimal Encoding and Connection to Classical Results
\label{subsec:properties}}
We now explore basic properties of the functions $D_T(S,R)$ and $D(S,R)$ of \eqref{eq:DRF_finit_R} and \eqref{eq:DRF_asymp} describing the minimal distortion in ADX. By doing so, we review previous results in sampling and source coding theory and explain their connection to our setting. \\

Denote by $\widetilde{X}_T(\cdot)$ the process that is obtained by MMSE estimation of $X(\cdot)$ from the vector of samples $Y_T$. Namely
\begin{equation}
    \label{eq:mmse_estimator_finite_T}
\widetilde{X}_T(t) \triangleq \mathbb E \left[X(t)|Y_T \right],\quad t \in \mathbb R.
\end{equation}
From properties of the conditional expectation, for any encoder $f$ we have 
\begin{align} \label{eq:decomp}
    & \frac{1}{T}\int_{-T/2}^{T/2} \mathbb E \left(X(t) - \mathbb E \left[X(t) | Y_T \right] \right)^2 dt \\
    & = \mmse_T(S) + \mmse \left(\widetilde{X}_T | f(Y_T) \right), \nonumber
\end{align}
where $\widehat{X}_T(\cdot) = g\left(f\left( Y_T \right) \right)$, 
\begin{equation}
    \label{eq:ADX_mmse}
\mmse_T(S) \triangleq \frac{1}{T} \int_{-T/2}^{T/2} \mathbb E \left(X(t) - \widetilde{X}_T(t) \right)^2 dt,
\end{equation}
is the distortion associated only with sampling and noise, and
\[
\mmse \left(\widetilde{X}_T | f(Y_T) \right) \triangleq \frac{1}{T} \int_{-T/2}^{T/2} \mathbb E \left( \widetilde{X}_T(t) - \mathbb E \left[\widetilde{X}_T(t) | f(Y_T) \right] \right)^2 dt
\]
is the distortion associated with the lossy compression procedure, and depends on the sampler only through $\widetilde{X}_T(\cdot)$. \par
%The decomposition \eqref{eq:decomp} already provides important clues on an optimal encoder and decoder pair that attains  $D_T(S,R)$. Specifically, it follows from \eqref{eq:decomp} that there is no loss in performance if the encoder tries to describe the process $\widetilde{X}_T(\cdot)$ subject to the bitrate constraint, rather than the process $X(\cdot)$. Consequently, the optimal decoder outputs the conditional expectation of $\widetilde{X}_T(\cdot)$ given $f(Y_T)$. 
%
It follows from \eqref{eq:decomp} that there is no loss in performance if the encoder tries to describe the process $\widetilde{X}_T(\cdot)$ subject to the bitrate constraint, rather than the process $X(\cdot)$. In addition, optimal decoding is obtained by outputting the conditional expectation of $\widetilde{X}_T(\cdot)$ given $f(Y_T)$. These observations, which hold in general in indirect source coding situations, were used in \cite{1057738} to derive the indirect DRF of a pair of stationary Gaussian processes, and later in
\cite{1054469} to derive indirect DRF expressions in other settings. An extension of the principle presented in this decomposition to arbitrary distortion measures is discussed in \cite{1056251}. \par
The decomposition \eqref{eq:decomp} is also related to the behavior of $D(S,R)$ under the two extreme cases illustrated in Fig.~\ref{fig:contribution}:
\subsubsection{Unconstrained Bitrate}
As the bitrate $R$ goes to infinity, the MSE as a result of lossy compression goes to zero. Consequently, \eqref{eq:decomp} implies that
\[
\lim_{R\rightarrow \infty} D(S,R) = \inf_{R>0} D(S,R) = \mmse(S),
\]
where $\mmse(S) = \liminf_{T\rightarrow \infty } \mmse_T(S)$. \par
Since the sampling operation is linear and the signals are Gaussian, we have that
\begin{equation} \label{eq:inst_mmse}
    \mathbb E \left( X(t) - \mathbb E [X(t)|Y_T] \right)^2 = \inf_{\mathbf a \in \mathbb R^{N_T}}  \mathbb E \left( X(t) - \sum_{t_n \in \Lambda_T} a_n Y_n \right)^2. 
\end{equation}
Under a MB-LTI sampler, an expression for \eqref{eq:inst_mmse} in the limit as $T$ goes to infinity can be derived in a closed form \cite{ShannonMeetsNyquist, 1090615}, leading to a closed form expression for $\mmse(S)$. 
Although it is unfeasible to obtain $\mmse_T(S)$ in a closed form for an arbitrary bounded linear sampler, it is sometimes possible to derive conditions on the density of $\lambda$ such that $\mmse_T(S)$ converges to zero. In particular, assuming zero noise, $K_H(t,s) = \delta(t-s)$ the identity operator, and $\supp~S_X$ is a finite union of bounded intervals, 
%the sampler $S$ is defined only in terms of $\Lambda$. In this setting, 
the condition on \eqref{eq:inst_mmse}, and hence on $\mmse_T(S)$, to converge to zero are related to a classical problem in sampling theory studied by Beurling \cite{beurling1989collected} and
Landau \cite{Landau1967}. 
In order to see this relation, use \eqref{eq:isometry} to translate the interpolation problem of \eqref{eq:inst_mmse} to the Hilbert space $\mathcal W(S_X)$. Since the support of $S_X(f)$ is a finite union of bounded intervals, interpolation in $\mathcal W(S_X)$ with vanishing MSE is equivalent to the same operation in the Paley-Wiener space of analytic functions whose Fourier transform vanishes outside $\supp S_X$. Specifically, this holds whenever the non-harmonic Fourier basis $\left\{ e^{2\pi i t_n},\, t_n \in \Lambda \right\}$ defines a frame in this Paley-Wiener space, i.e., there exists a universal constant $A>0$ such that the $\mathrm L_2$ norm of each function in this space is bounded by $A$ times the energy of the samples of this function. Landau \cite{Landau1967} showed that a necessary condition for this property is that the number of points in $\Lambda$ that fall within any interval of length $T$ is at least the spectral occupancy of $X(\cdot)$ times $T$, perhaps minus a constant that is logarithmic in $T$. For this reason, this spectral occupancy is now termed the \emph{Landau rate} of $X(\cdot)$, and we denote it here by $f_{\Lnd}$. In the special case where $\supp S_X$ is an interval (symmetric around the origin since $X(\cdot)$ is real), the Landau and Nyquist rates coincide.

\subsubsection{Unconstrained Sampling}
The other lower bound in Fig.~\ref{fig:contribution} describes the case when there is no loss in the sampling operation, so that the distortion is only due to lossy compression and noise. This situation occurs when the process $X|X_\epsilon (\cdot) \triangleq \left\{ \mathbb E \left[ X(t) | X_\epsilon(\cdot) \right],\, t\in \mathbb R \right\}$, whose spectral density is
\begin{equation}
    \label{eq:S_X_Xe}
S_{X|X_\epsilon}(f) = \frac{S_X^2(f)}{S_X(f) + S_\epsilon(f)},
\end{equation}
can be recovered from $Y_T$ with zero MSE as $T\rightarrow \infty$. Note that $X|X_\epsilon (\cdot)$ is a Gaussian stationary process obtained by estimating $X(t)$ using the non-causal Wiener filter. The resulting MSE in this estimation is 
\[
\mmse(X|X_\epsilon) = \sigma_X^2 - \int_{-\infty}^\infty S_{X|X_\epsilon}(f) df.
\]
Since no limitation is imposed on the encoder in Fig.~\ref{fig:system_model} except the bitrate, the  encoder can estimate $X|X_\epsilon (\cdot)$ from $Y_T$ and encode it at bitrate $R$ as in standard source coding. When $T\rightarrow \infty$, the distortion in this procedure is given by \cite{1057738}
\begin{subequations}
\label{eq:dobrushin}
\begin{align} 
    D_{X|X_\epsilon}(R_\theta) & = \mmse(X|X_\epsilon) + \int_{-\infty}^\infty \min \left\{S_{X|X_\epsilon}(f),\theta \right\}df,\\
    R_\theta & = \frac{1}{2} \int_{-\infty}^\infty \log^+ \left[S_{X|X_\epsilon}(f)/\theta \right]df. 
\end{align}
\end{subequations}
In the special case when $\epsilon(\cdot) \equiv 0$, \eqref{eq:dobrushin} reduces to Pinsker's formula \cite{1056823} for the DRF of $X(\cdot)$:
\begin{subequations}
\label{eq:pinsker}
\begin{align} 
    D_{X}(R_\theta) & =  \int_{-\infty}^\infty \min \left\{S_{X}(f),\theta \right\}df,\\
    R_\theta & = \frac{1}{2} \int_{-\infty}^\infty \log^+ \left[S_{X}(f)/\theta \right]df. 
\end{align}
\end{subequations}
Note that \eqref{eq:pinsker} is the continuous-time counterpart of \eqref{eq:D_finite}. \par
When the minimal ADX distortion $D_T(S,R)$ approaches $D_{X|X_\epsilon}(R)$, or $D_X(R)$ in the non-noisy case, as $T\rightarrow \infty$, we say that the conditions for optimal sampling in ADX are met. Namely, optimal sampling occurs whenever
\begin{equation}
    \label{eq:optimal_sampling_cond}
D(S,R) = D_{X|X_\epsilon}(R),
\end{equation}
For example, \eqref{eq:optimal_sampling_cond} holds under MB-LTI sampling with a single sampling branch, provided $f_s \geq f_{\Nyq}$ and the passband of the pre-sampling filter $H(f)$ contains $\supp\, S_X$ (which equals to $\supp\, S_{X|X_\epsilon}$). More generally, it is possible to chose the pre-sampling filters of a MB-LTI sampler such that optimal sampling occurs for any $f_s \geq f_{\Lnd}$ \cite[Sec. IV]{Kipnis2014}, \cite{eldar2015sampling}, where $f_\Lnd$ is the Landau rate of $X(\cdot)$ (or its spectral occupancy). In these cases, we also have that $\mmse_T(S)$ of \eqref{eq:ADX_mmse} converges to $\mmse(X|X_\epsilon)$, which is a sufficient condition for \eqref{eq:optimal_sampling_cond} to hold. As we shall see in the next section, this condition is not necessary, and optimal sampling can be attained by sampling below the Nyquist or Landau rates.

\section{The Fundamental Distortion Limit \label{sec:ADX}}
We now provide the general definition of $D^\star(f_s,R)$, explore its basic properties, and use it to fully characterize the ADX distortion. 

\subsection{Expression for ADX Distortion}

\begin{definition}
For a sampling rate $f_s$ and Gaussian signals $X(\cdot)$ and $\epsilon(\cdot)$, let $F^\star_{f_s}$ be a set of Lebesgue measure $\mu$ not exceeding $f_s$ that maximizes
\begin{equation}
    \label{eq:F_star_def}
\int_{F} S_{X|X_\epsilon}(f) df = \int_{F} \frac{S_X^2(f)}{ S_X(f) + S_\epsilon(f) }df,
\end{equation}
over all sets $F$ with $\mu(F)\leq f_s$. Define
\begin{equation}
    \label{eq:bound_def}
D^\star(f_s,R_\theta) \triangleq \sigma_X^2 - \int_{F^\star_{f_s}}  \left[ S_{X|X_\epsilon}(f)-\theta \right]^+(f) df,
\end{equation}
where $\theta$ is determined by
\[
R_\theta = \frac{1}{2} \int_{F^\star_{f_s}} \log^+ \left[S_{X|X_\epsilon}(f)/\theta \right] df. 
\]
\end{definition}
We also define 
\begin{equation}
    \label{eq:mmse_bound_def}
\mmse^\star(f_s) = \sigma_X^2 - \int_{F^\star_{f_s}} S_{X|X_\epsilon}(f) df,
\end{equation}
and note that 
\[
D^\star(f_s,R) = \mmse^\star(f_s) + \int_{F^\star_{f_s}} \min\left\{S_{X|X_\epsilon}(f),\theta \right\}df. 
\]
Graphical interpretations of $D^\star(f_s,R)$ and $\mmse^\star(f_s)$ are provided in Fig.~\ref{fig:water-filling_lower_bound}.
\begin{figure}
\begin{center}
\input{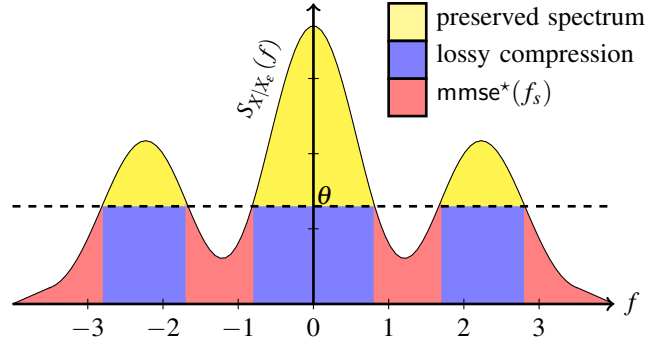}
\caption{
\label{fig:water-filling_lower_bound} 
water-filling interpretation of the fundamental distortion limit
$D^\star(f_s,R)$. The distortion is the sum of the sampling error ($\mmse^\star(f_s)$) and the lossy compression error. The set $F^\star_{f_s}$ defining $D^\star(f_s,R)$ is the support of the preserved spectrum.
} 
\end{center}
\end{figure}
The main results of this paper are summarized by the following two theorems:
\begin{thm}[achievability] \label{thm:achivability}
For any $f_s$ and $\epsilon>0$, there exists a MB-LTI sampler $S$ with sampling rate $f_s$, such that, for any $R$, the distortion in ADX attained by sampling $X_\epsilon(\cdot)$ using $S$ over a large enough time interval $T$, and encoding these samples using $\lfloor TR \rfloor$ bits, does not exceed
$D^\star(f_s,R) + \epsilon$.
\end{thm}
\begin{thm}[converse] \label{thm:converse}
Let $S = (K_H,\Lambda)$ be a bounded linear sampler such that $d^+(\Lambda) \leq  f_s$. Then for any bitrate $R$, \[
D(S,R) \geq D^\star(f_s,R).
\]
\end{thm}
From the definition of $D_T(S,R)$ and the upper symmetric density $d^+(\Lambda)$, Theorem~\ref{thm:converse} implies the following corollary:
\begin{cor} \label{cor:finite_T}
Let $S = (K_H,\Lambda)$ be a bounded linear sampler such that, for every $T>0$, $\card(\Lambda_T)\leq T f_s$. Then for any representation of the samples $Y_T$ using at most $ \lfloor TR \rfloor$ bits, the MSE \eqref{eq:MSE_def} in recovering $X(\cdot)$ is bounded from below by $D^\star(f_s,R)$.
\end{cor}

The proofs of Theorems~\ref{thm:achivability} and \ref{thm:converse} can be found in the appendix. A sketch of these proofs is as follows. To prove Theorem~\ref{thm:achivability}, we use the expression for the ADX distortion under an MB-LTI sampler with $L$ samplers derived in \cite{Kipnis2014}. We then show that for any $\delta>0$, there exists $L$ large enough such that $L$ filters $H_1,\ldots,H_L$ can be chosen to have disjoint supports whose union approximates $F^\star_{f_s}$ in the sense that the difference between $D(S,R)$ and $D^\star(f_s,R)$ is less than $\delta$. 
The converse in Theorem~\ref{thm:converse} is first established for a MB-LTI sampler using results from \cite{Kipnis2014} that characterize the optimal set of pre-sampling filter for a given $S_{X|X_\epsilon}(f)$ and number of sampling branches $L$. Next we consider the distortion attained by a general linear bounded sampler $S=(\Lambda, K_H)$ over a finite time horizon $T$. We bound this distortion from below by the distortion in recovering $X(\cdot)$ over $[-T/2,T/2]$ using an encoding of a periodic extension of the sampling set. We then show that this extension is equivalent to sampling using a specific MB-LTI sampler, so that the bound from the first part of the proof is valid for an arbitrary linear bounded sampler.

Before exploring additional properties of $D^\star(f_s,R)$, it is instructive to consider its behavior under various examples for the PSDs $S_X(f)$ and $S_\epsilon(f)$. 
\begin{example}[rectangular PSD] \label{ex:rect1}
Let $X_{\Pi}(\cdot)$ be the process with PSD 
\begin{equation}
    \label{eq:psd_rect}
S_{\Pi}(f) = \sigma_X^2 \frac{\mathbf 1_{|f|<W}(f)}{2W}. 
\end{equation}
Assume that $\epsilon(\cdot)$ is a flat spectrum noise within the band $[-W,W]$ such that $\gamma \triangleq S_{\Pi}(f)/S_\epsilon(f)$ is the SNR at the spectral component $f$. 
Under these conditions, 
\[
S_{X|X_\epsilon}(f) = \frac{\gamma}{1+\gamma} S_{\Pi}(f),
\]
and the set $F^\star_{f_s}$ that maximizes \eqref{eq:F_star_def} can be chosen as any subset of $[-W,W]$ with Lebesgue measure $f_s$. For simplicity we pick
\begin{equation}
    \label{eq:F_star_rect}
    F^\star_{f_s} = \left\{f\,:\,|f|<f_s/2 \right\},
\end{equation}
and conclude that
\[
D^\star(f_s,R) = \sigma_X^2 \begin{cases} 1 - f_s \left[\frac{\gamma }{2W(1+\gamma)} ,\theta \right]^+ & f_s < 2W,\\
1 - 2W \min \left[ \frac{1}{2W(1+\gamma)} \right]^+  & f_s \geq 2W,
\end{cases}
\]
where $\theta$ is determined by
\[
R = \frac{1}{2} \begin{cases} f_s \left(\log \frac{\sigma_X^2 \gamma}{2W(1+\gamma)} - \log \theta \right) & f_s < 2W, \\
 2W \left(\log \frac{\sigma_X^2 \gamma}{2W(1+\gamma)} - \log \theta \right) & f_s \geq 2W.
\end{cases}
\]
Since $\theta$ can be isolated from the last expression, we obtain
\begin{equation}
    \label{eq:DRF_rect}
D^\star(f_s,R) = \sigma_X^2 \begin{cases} 1 - \frac{f_s}{2W} \frac{\gamma}{1+\gamma}(1-2^{-2R/f_s})  & f_s < 2W,\\
\frac{1}{1+\gamma} + \frac{\gamma}{1+\gamma} 2^{-R/W}  & f_s \geq 2W.
\end{cases}
\end{equation}
We note that in the case $f_s \geq 2W$, \eqref{eq:DRF_rect} equals the DRF of $X_{\Pi}(\cdot)$ given $X_{\Pi}(\cdot) + X_\epsilon(\cdot)$ that is obtained from \eqref{eq:dobrushin}. Therefore only sampling at or above the Nyquist rate $f_\Nyq = 2W$ implies $D^\star(f_s,R) = D_{X|X_\epsilon}(R)$. 
\end{example}

\begin{example}[triangular PSD] \label{ex:triangle1}
Let $X_{\triangle}(\cdot)$ be the process with PSD
\begin{equation}
    \label{eq:psd_triangle}
S_{\triangle}(f) \triangleq \sigma_X^2\frac{\left[1-|f/W| \right]^+}{W},
\end{equation}
for some $W>0$, and assume that $\epsilon(\cdot) \equiv 0$. Then 
\[
F^\star_{f_s} = \left\{f\, : \,|f|<f_s/2  \right\},
\]
and
\begin{equation}
    \label{eq:DRF_triangle}
D^\star(f_s,R) = \sigma_X^2  \begin{cases}
 \left(1-\frac{f_s}{2W}\right)^2  + \theta f_s & f_s \leq f_R, \\
 \left(1-\frac{f_R}{2W}\right)^2 +  \theta f_R  & f_s > f_R,
\end{cases}
\end{equation}
where $f_R \triangleq 2W(1-\theta W)$, and $\theta$ is given by
\[
R = \frac{1}{2} \begin{cases}
 \int_{-\frac{f_s}{2}}^\frac{f_s}{2} \left[ \log S_{\triangle}(f) - \log \theta \right]  & f_s \leq f_R \\ 
  \int_{-\frac{f_R}{2}}^{\frac{f_R}{2}} \log S_\triangle(f) - 
f_R \log \theta  & f_s > f_R.
\end{cases}
\]
\end{example}
%The functions $D^\star(f_s,R)$ for the PSDs $S_\Pi(f)$ and $S_\triangle(f)$ of Examples \eqref{ex:rect1} and \ref{ex:triangle1}, respectively, as well as other PSDs, are illustrated in Fig.~\ref{fig:D_star_PSD}.\\ 

\begin{figure*}
\begin{center}
\input{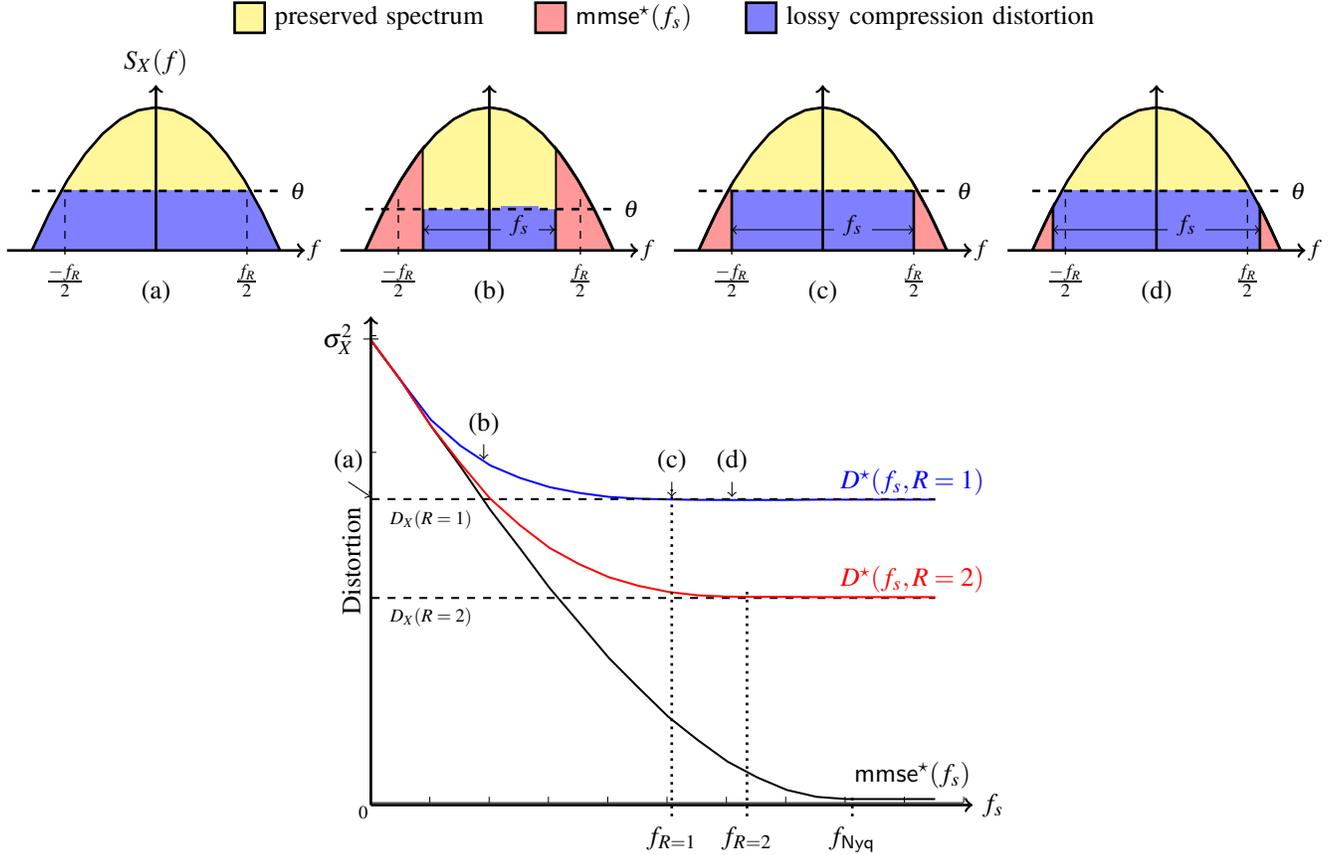}
\caption{\label{fig:proof_sketch} An illustration of Proposition~\ref{prop:opitmal_sampling_rate} for a unimodal PSD and zero noise: the distortion is the sum of the term $\mmse^\star(f_s)$ (red) and the lossy compression distortion (blue) in each water-filling scheme. The functions $D^\star(f_s,R)$ and $\mmse^\star(f_s)$ are illustrated versus $f_s$ at the bottom-right for two fixed values of the bitrate $R$. Also shown is the DRF of $X(\cdot)$ at these values that is attained at the sub-Nyquist sampling rates marked by $f_R$. The marked points on the curve of $D^\star(f_s,R)$ correspond to different water-filling scenarios.
%: (a) Fix $R$ and find $D_X(R)$ from \eqref{eq:pinsker}. (b) $D^\star\left(f_s,R \right) > D_X(R)$ for $f_s < f_R$. (c) $f_s=f_R$. (d) $D^\star(f_s,R)=D(R)$ for all $f_s \geq f_R$. %The function $D^\star(f_s,R)$ versus $f_s$ is illustrated at the bottom-right for two fixed values of the bitrate $R$. 
%\label{fig:DRF_optimal}
}
\end{center}
\end{figure*}

Note that in Example~\ref{ex:triangle1}, the function $D^\star(f_s,R)$ in \eqref{eq:DRF_triangle} is independent of $f_s$ for the case $f_s\geq f_R$, and equals to the DRF of $X_\triangle(\cdot)$ given by Pinsker's expression \eqref{eq:pinsker}. Consequently, for $X_\triangle(\cdot)$, the DRF is attained by sampling above $f_R$ that is smaller than $2W$, which is the Nyquist rate of $X_\triangle(\cdot)$. Since the DRF is the minimal distortion subject only to the bitrate constraint regardless of the sampling mechanism, we conclude that the optimal distortion performance is attained by sampling below the Nyquist rate in this case. In the following subsection, we extend this observation to arbitrary PSDs.

\subsection{Optimal Sampling Rate  \label{sec:main_result}}
We now consider the minimal sampling rate that lead to optimal sampling in ADX. We first note the following proposition, that follows from the definition of $D^\star(f_s,R)$.
\begin{prop}[optimal sampling rate] \label{prop:opitmal_sampling_rate}
For each point $(R,D)$ on the graph of $D_{X|X_\epsilon}(R)$ associated with a water-level $\theta$ via \eqref{eq:dobrushin}, define
\[
F_\theta \triangleq \left\{f: S_{X|X_\epsilon}(f) > \theta \right\},
\]
and set $f_R = \mu (F_\theta)$. Then for all $f_s \geq f_R$,
\[
D^\star(f_s,R) = D_{X|X_{\epsilon}}(R).
\]
\end{prop}

The proof of Proposition~\ref{prop:opitmal_sampling_rate} is given in the Appendix. To gain some intuition into the results, consider the special case of zero noise and a unimodal $S_X(f)$ as illustrated in Fig.~\ref{fig:proof_sketch}: fix a point $(R,D)$ on the distortion rate curve of $X(\cdot)$ obtained from \eqref{eq:pinsker}. The set $F_\theta = \left\{ f\in \mathbb R\,:\, S_X(f)>\theta \right\}$ is the support of the non-shaded area in Fig.~\ref{fig:proof_sketch}$(a)$. We define the sampling rate $f_R$ to be the Lebesgue measure of $F_\theta$. Fig.~\ref{fig:proof_sketch}$(b)$ shows the function $D^\star(f_s,R)$ for $f_s<f_{R}$, where the overall distortion is the sum of the term $\mmse^\star(f_s)$ given by the partially shaded area, and the water-filling term given by the blue area. 
Figs.~\ref{fig:proof_sketch} $(c)$ and $(d)$ show the function  $D^\star(f_s,R)$ for $f_s=f_{R}$ and $f_s>f_{R}$, respectively. The assertion of Proposition~\ref{prop:opitmal_sampling_rate} is that the sum of the red area and the blue area stays the same for any $f_s\geq f_R$. It can also be seen from Fig.~\ref{fig:proof_sketch} that $f_{R}$ increases with the source coding rate $R$ and coincides with $f_{\Nyq}$ as $R\rightarrow \infty$. The bottom-right of Fig.~\ref{fig:proof_sketch} shows
$D^\star(f_s,R)$ as a function of $f_s$ for two fixed values of $R$. \\ %The points marked by (b), (c) and (d) on the graph of $D^\star(f_s,R)$ correspond to the 

We emphasize that the critical frequency $f_R$ arising from Proposition~\ref{prop:opitmal_sampling_rate} depends only on the PSD and on the operating point on the DRF curve of $X(\cdot)$ given $X_\epsilon(\cdot)$, which can be parametrized by either $D$, $R$ or the water-level $\theta$ using \eqref{eq:dobrushin}. In fact, by inverting the function $D^\star(f_s,R)$ with respect to $R$, we obtain the following result.
\begin{thm}[rate-distortion lower bound] \label{thm:opitmal_sampling_rate_R} 
Given Gaussian stationary processes $X(\cdot)$ and $\epsilon(\cdot)$, sampling rate $f_s$ and a target distortion $D > \mmse^\star(f_s)$, define
\begin{equation} \label{eq:R_underline}
R^\star(f_s,D) \triangleq \begin{cases} \frac{1}{2} \int_{F^\star_{f_s}}  \log^+ \left(\frac{f_sS_{X|X_\epsilon}(f)}{D-\mmse^\star(f_s)} \right) df, &  f_s < f_{R}, \\
R_{X|X_\epsilon}(D), &  f_s \geq f_{R},
\end{cases}
\end{equation}
where 
\[
R_{X|X_\epsilon}(D_\theta) = \frac{1}{2}\int_{-\infty}^\infty \log^+\left[ S_{X|X_\epsilon}(f)/\theta\right]df,
\]
is the indirect rate-distortion function of $X(\cdot)$ given $X_\epsilon(\cdot)$, $f_R = \mu\left( \left\{ f\,:\, S_{X|X_\epsilon}(f) > \theta \right\} \right)$, and $\theta$ is determined by
\[
D = \mmse(X|X_\epsilon) + \int_{-\infty}^\infty \min\left\{ S_{X|X_\epsilon}(f),\theta\right\} df.
\]
Then: 
\begin{itemize}
    \item[(i)] The number of bits per unit time required to attain ADX distortion at most $D$ is at least $R^\star(f_s,D)$.
    \item[(ii)] For any $\epsilon>0$ and $\rho>0$, there exists $T$ large enough and a MB-LTI sampler $S$ at rate $f_s$ such that
    \[
    D_T(S,R^\star(f_s,D)+\rho) < D+\epsilon.
    \]
\end{itemize}
\end{thm}
\begin{proof}
Theorem~\ref{thm:opitmal_sampling_rate_R} is a restatement of Theorems \ref{thm:converse} and \ref{thm:achivability} that is obtained using  Proposition~\ref{prop:opitmal_sampling_rate} and by inverting the role of the distortion and the bitrate. 
\end{proof}

%It follows from Proposition~\ref{prop:properties}-(v) that $\underline{R}(f_s,D)$ is attained by multi-branch sampling with optimized pre-sampling filters.

\subsection{Discussion}
Theorem~\ref{thm:achivability} together with Proposition~\ref{prop:opitmal_sampling_rate} extend the conditions for the equality \eqref{eq:optimal_sampling_cond}, which, as argued in Subsection~\ref{subsec:properties}, holds for $f_s \geq f_{\Lnd}$, to all sampling rates above $f_R$. This $f_R$ depends on the bitrate $R$ and is smaller than $f_\Lnd$ provided the signal power is not uniformly distributed over its spectral occupancy (unlike $S_\Pi(f)$ of Example~\ref{ex:rect1}). 
\par
As $R$ goes to infinity, the water-level $\theta$ goes to zero, the set $F_\theta$ coincides with the support of $S_{X|X_\epsilon}(f)$ and $S_X(f)$, and $D^\star(f_s,R)$ converges to $\mmse^\star(f_s)$. In particular, $f_R = \mu(F_\theta)$ converges to the spectral occupancy $f_{\Lnd}$ of $X(\cdot)$. In this limit, Proposition~\ref{prop:opitmal_sampling_rate} then implies that $\mmse^\star(f_s) = \mmse(X|X_\epsilon)$ for all $f_s \geq f_{\Lnd}$. When the noise is zero, this last fact agrees with the Landau's necessary condition for stable sampling in the Paley-Wiener space \cite{Landau1967}. 
\\

The discussion in Section~\ref{sec:finite_dims} on the finite-dimensional counterpart of ADX suggests the following intuition for our result assuming $\epsilon(\cdot) \equiv 0$: Pinsker's waterfilling expression \eqref{eq:pinsker} implies that for a Gaussian stationary signal whose power is not uniformly distributed over its spectral occupancy, the optimal distortion-rate tradeoff $D_X(R)$ is achieved by communicating only those bands with the highest energy. This means that less degrees of freedom are used in the signal's representation. Proposition~\ref{prop:opitmal_sampling_rate} implies that this reduction in degrees of freedom can be translated to a lower required sampling rate in order to achieve $D_X(R)$. The counterpart of this phenomena in the finite dimensional case is the conditions for equality between \eqref{eq:D_finite} and \eqref{eq:D_finite_regular} as discussed in Section~\ref{sec:finite_dims}. \\

\subsection{Examples}
In the following examples the exact dependency of $f_R$ on $R$ and $D$ is determined for various PSDs, and illustrated in Fig.~\ref{fig:critical_f}.
\begin{example}[continuation of Example~\ref{ex:rect1}] \label{ex:rect2}
Consider the PSDs $S_\Pi(f)$ and $S_\epsilon(f) = S_\Pi(f)/\gamma$ as in Example~\ref{ex:rect1}. In this case we have that $F_\theta = [-W,W]$ for all $f_s$, so that $f_{R}  = 2W$. Therefore, in this example,  $D^\star(f_s,R) = D_{X|X_\epsilon}(R)$ only for $f_s$ larger than the Nyquist rate of $X_\Pi(f)$. This observation agrees with the expression for $D^\star(f_s,R)$ in \eqref{eq:DRF_rect}.
\end{example}
\begin{example}[continuation of Example~\ref{ex:triangle1}] \label{ex:triangle2}
Consider the situation of Example~\ref{ex:triangle1} with zero noise and PSD $S_\triangle(f)$. For a point $(R,D)\in [0,\infty) \times [0,\sigma_X^2]$ on the distortion-rate curve of $X_\triangle(\cdot)$, we have that $F_\theta = W\left[ -1+W \theta,1-W\theta\right]$ and hence $f_R = 2W(1-W\theta)$. Indeed, this value for $f_R$ agrees with \eqref{eq:DRF_triangle}, since for $f_s\geq f_R$ the function $D^\star(f_s,R)$ is independent of $f_s$ and equals the DRF of $X_\triangle(\cdot)$. \par
The exact relation between $R$ and $f_R$ is given by
\begin{align} 
R & = \frac{1}{2} \int_{-\frac{f_{R}}{2}}^\frac{f_{R}}{2} \log\left( \frac{1-|f/W|}{1-\frac{f_R}{2W}} \right)df \nonumber \\
& =   W  \log\frac{1}{1-\frac{f_R}{2 \ln 2}}-\frac{f_R}{2W} . \label{eq:R_fc_triangle}
\end{align}
Expressing $f_R$ as a function of the distortion $D$ leads to $f_R = 2W\sqrt{1-D/\sigma_X^2}$.  \\
\end{example}
\begin{example}[effect of noise on $f_R$]\label{ex:triangle_noise}
Consider again $X_\triangle(\cdot)$ from Examples~\ref{ex:triangle1} and \ref{ex:triangle2}, but with $\epsilon(\cdot)$ a flat spectrum Gaussian noise with intensity $\sigma^2_{\epsilon}$, i.e., 
$S_\epsilon(f) = \sigma_\epsilon^2 \mathbf 1_{[-W,W]}$. The relation between $R$ and $f_R$ is given by:
\begin{align*}
R & = \int_{-\frac{f_{R}}{2}}^\frac{f_{R}}{2} \log \left[ \frac{(1- \frac{f}{W})^2} {1- \frac{f}{W}+W\sigma_\epsilon^2}  \right] df- f_R \log \left[
 \frac{(1-\frac{f_{R}}{2W} )^2} {1-\frac{f_{R}}{2 W}+W \sigma^2_\epsilon}  
 \right] \\
 & =  2W \log \frac{1}{1-\frac{f_{R}}{2W}} -W(1+\sigma^2_\epsilon W) \log\frac{1} {1-\frac{f_{R}}{2W (1+\sigma^2_\epsilon W) } }  -\frac{f_{R}} {2 \ln 2}.
\end{align*}
The expression above decreases as the intensity of the noise $\sigma^2_\epsilon$ increases. Since $f_{R}$ increases with $R$, it follows that $f_{R}$ decreases in $\sigma^2_\epsilon$, as can be seen in Fig.~\ref{fig:critical_f_noise} where $f_{R}$ is plotted versus the SNR $1/\sigma^{2}_\epsilon$ for two fixed values of $R$.  
\end{example}
The dependency between the critical sampling rate $f_R$ and the SNR observed in Example~\ref{ex:triangle_noise} can be generalized to any signal PSD experiencing a uniform increase in the SNR: increase in SNR decreases $\mmse(X|X_\epsilon)$ and leads to the use of more spectral bands in the optimal indirect source coding scheme that attains $D_{X|X_\epsilon}(R)$. As a result, more spectral bands of $S_{X|X_\epsilon}(f)$ must be utilized in order for $D^\star(f_s,R)$ to approach $D_{X|X_\epsilon}(R)$. 
\begin{example}[Gauss-Markov process] \label{ex:GaussMarkov1}
In this example we assume $\epsilon(\cdot) \equiv 0$ and consider the Gauss-Markov process $X_\Omega(\cdot)$ whose PSD is
\begin{equation} \label{eq:psd_gaussmarkov}
S_{\Omega}(f) = \frac{1/f_0}{(\pi f/f_0)^2+1},
\end{equation}
for some $f_0>0$. Note that the support of $S_\Omega(f)$ is the entire real line, and therefore the Nyquist rate of $X_\Omega(\cdot)$ is infinite. That is, it is impossible to recover $X_\Omega(\cdot)$ from its samples over any uniformly discrete set. In particular, the MMSE in recovering $X_\Omega(\cdot)$ from its uniform samples at rate $f_s$ equals the area bounded by the tails of its PSD:
\begin{equation}
\label{eq:gaussmarkov_mmse}
\mmse^\star(f_s) = 2\int_{f_s/2}^\infty S_{\Omega}(f) df=  1-\frac{2}{\pi} \arctan\left(\frac {\pi f_s} {2f_0} \right) .
\end{equation}

For a point $(R,D)$ on the distortion-rate curve of $X_\Omega(\cdot)$ and its corresponding $\theta$, we  have $f_R = \frac{2f_0}{\pi}\sqrt{\frac{1}{\theta f_0}-1}$. This means that the distortion cannot be reduced more by sampling above this rate. The exact relation between $R$ and $f_R$ is given by
\begin{equation}
\label{eq:fDR_R_GaussMarkov}
R = \frac{1}{\ln 2} \left( f_{R} -2 f_0 \frac{\arctan\left(\frac{\pi f_{R}}{2f_0} \right)}{\pi} \right).
\end{equation}
This relation is illustrated in Fig.~\ref{fig:critical_f}, and reveals an interesting phenomena: although the Nyquist rate of $X_\Omega(\cdot)$ is infinite, for any finite $R$ there exists a critical sampling frequency $f_R$, satisfying \eqref{eq:fDR_R_GaussMarkov}, such that the DRF of $X_\Omega(t)$ can be attained by sampling at or above $f_R$. Namely, when the non-bandlimited signal $X_\Omega(\cdot)$ is considered under a bitrate constraint $R$, there exists a sampling scheme at finite rate $f_s$ such that the overall distortion in the system equals the minimal distortion subject only to the bitrate constraint. 
%Since the distortion in ADX is bounded from below by $D_X(R)$ and $\mmse^\star(f_s)$, both $f_s$ and $R$ must goes to infinity in order to describe a realization of $X_\Omega(\cdot)$ with vanishing distortion. In this asymptotic regime, it follows from \eqref{eq:fDR_R_GaussMarkov} that the relation between the critical sampling rate $f_R$ and $R$ is $R = f_s/\ln2 + o(1/f_R)$. Therefore, for $R$ sufficiently large, the digital representation of the samples of $X_\Omega(\cdot)$ must allocate $R/f_s = 1/\ln2\approx 1.45$ bits per sample in order to attain the DRF of $X_\Omega(\cdot)$. If the number of bits per sample goes below this value, then, as $R$ and the sampling rate increase, the distortion is dominated by the lossy compression distortion as there are not enough bits to represent the information acquired by the sampler. Consequently, as illustrated in Figure~\ref{fig:bit_per_sample}, the ratio between $D^\star(f_s,R)$ and $D_X(R)$ goes to one in this case. If the number of bits per sample is above $1/\ln2$, then the distortion is dominated by the sampling distortion, and the ratio between $D^\star(f_s,R)$ and $D_X(R)$ converges to a constant bigger than one as illustrated in Figure~\ref{fig:bit_per_sample}. 
\end{example}

\begin{figure}
\begin{center}
\begin{tikzpicture}[scale=1]
\node at (3.82,2.3) {\includegraphics[scale=0.5, trim = 0cm 0cm 0cm 3cm, clip = true]{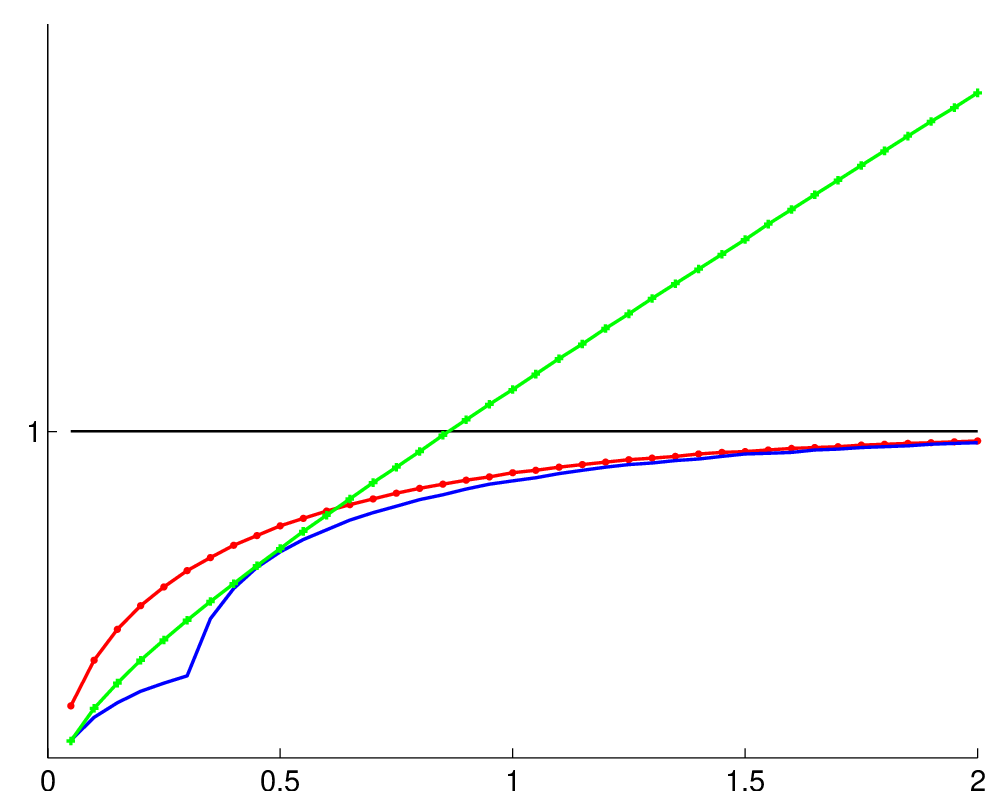} };

\node (rect) at (1,4.7) {\includegraphics[scale=0.1]{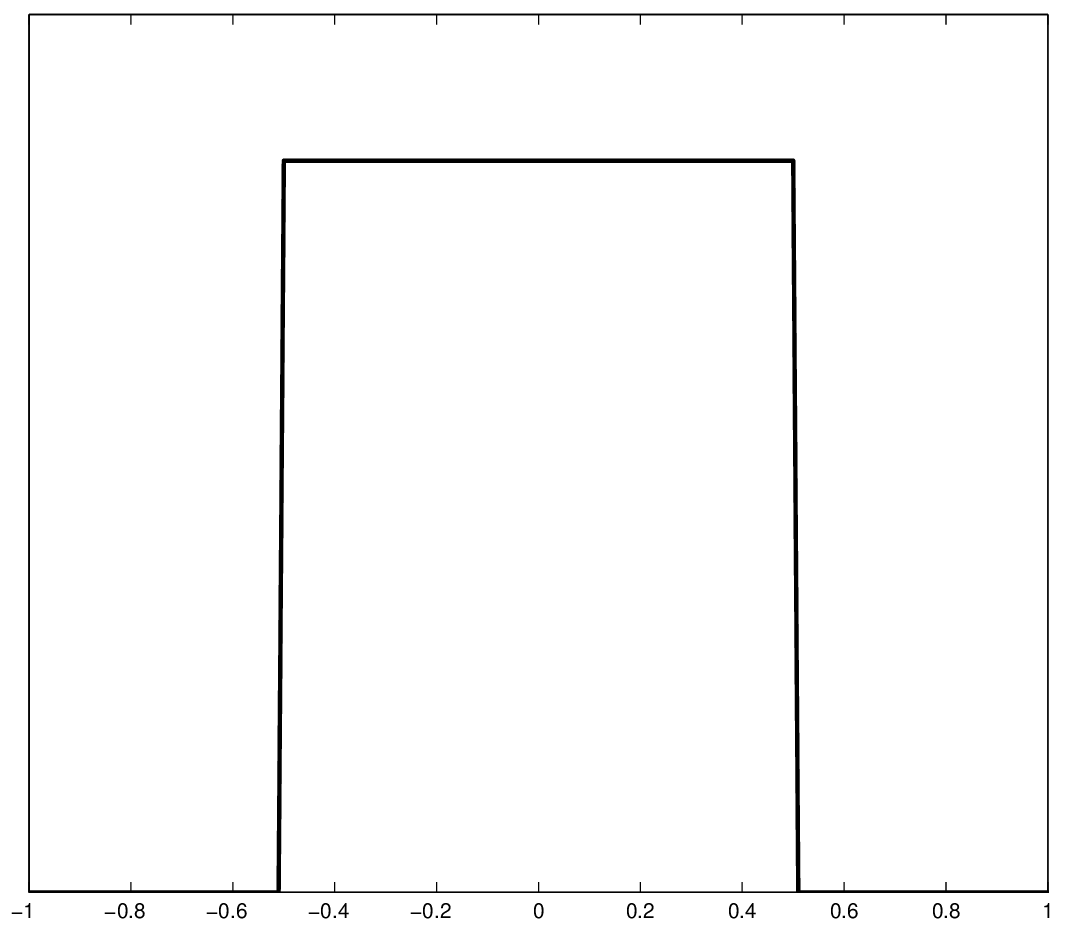} };
\node (triangle) at (3,4.7) {\includegraphics[scale=0.1]{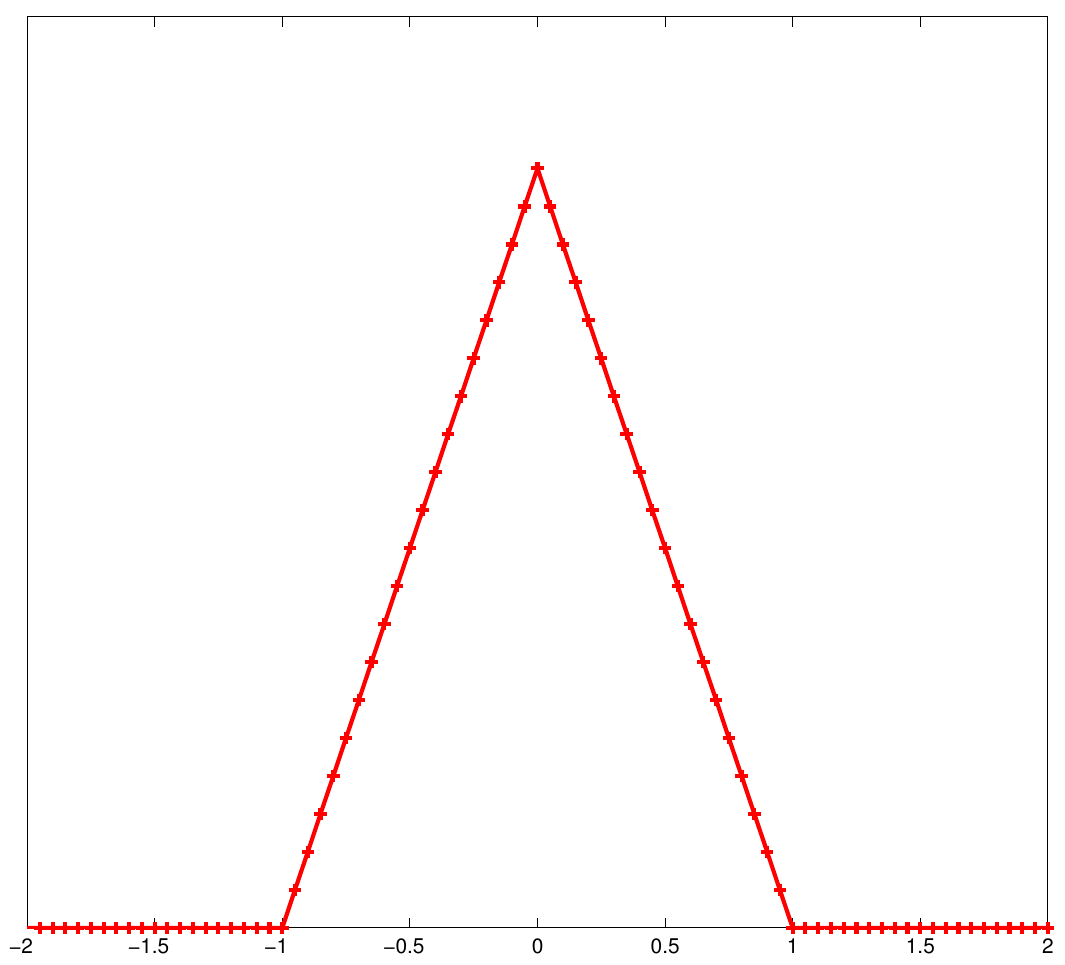} };
\node (gauss) at (5,4.7) {\includegraphics[scale=0.1]{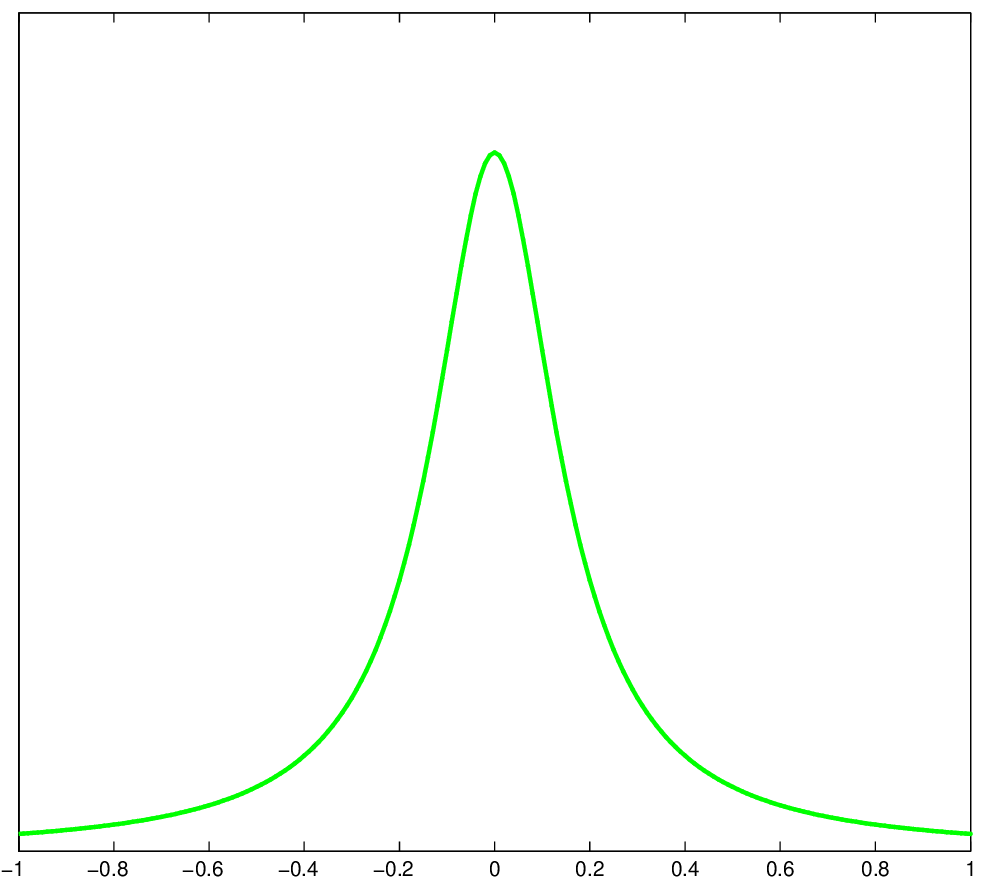} };
\node (wiggly) at (7,4.7) {\includegraphics[scale=0.1]{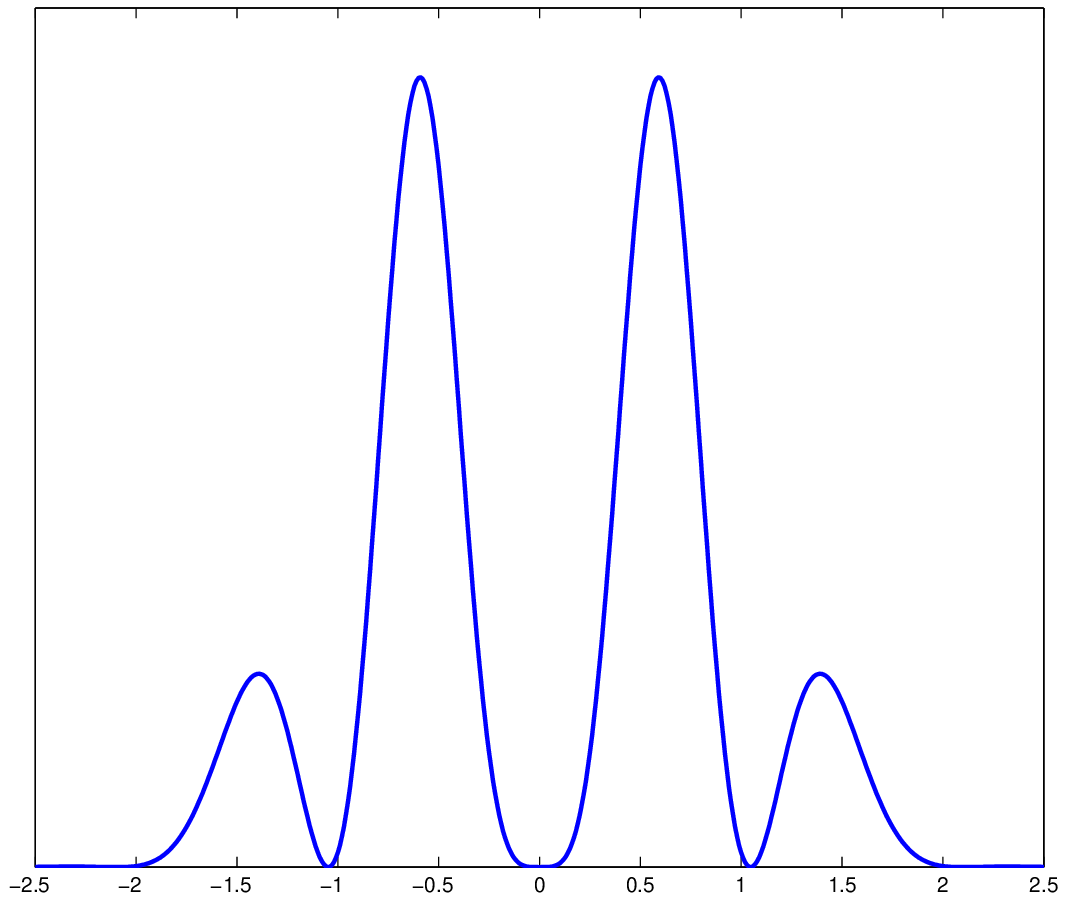} };

\node[above of = rect, node distance = 1cm] {\small $S_\Pi(f)$};
\node[above of = triangle, node distance = 1cm] { \color{red} \small $S_\triangle(f)$};
\node[above of = gauss, node distance = 1cm] {\color{green} \small $S_\Omega(f)$};
\node[above of = wiggly, node distance = 1cm] {\color{blue} \small $S_\omega(f)$};

\draw[->, line width=1pt] (0,0) -- node[below,yshift=-0.2cm] {\small $R$ \small [bit/sec]} (7.9,0);

\draw[->, line width=1pt] (0,0) -- node[above,rotate=90, yshift = 0.4cm] {\small $f_{R}$ \small [smp/sec]} (0,4.8);
\end{tikzpicture}
\vspace{-5pt}
\caption{\label{fig:critical_f} The critical rate $f_R$ as a function of the source coding rate $R$ for the PSDs given in the small frames and zero noise ($\epsilon(\cdot) \equiv 0$). $S_\Pi(f)$, $S_\triangle(f)$ and $S_\omega(f)$ have the same bandwidth while the support of $S_\Omega(f)$ is unbounded.}
\end{center}
\end{figure}

\begin{figure}
\begin{center}
\begin{tikzpicture}[scale=1]
\node at (3.05,3.05) {\includegraphics[scale=0.5, trim = 0cm 0cm 0cm 0cm, clip = true]{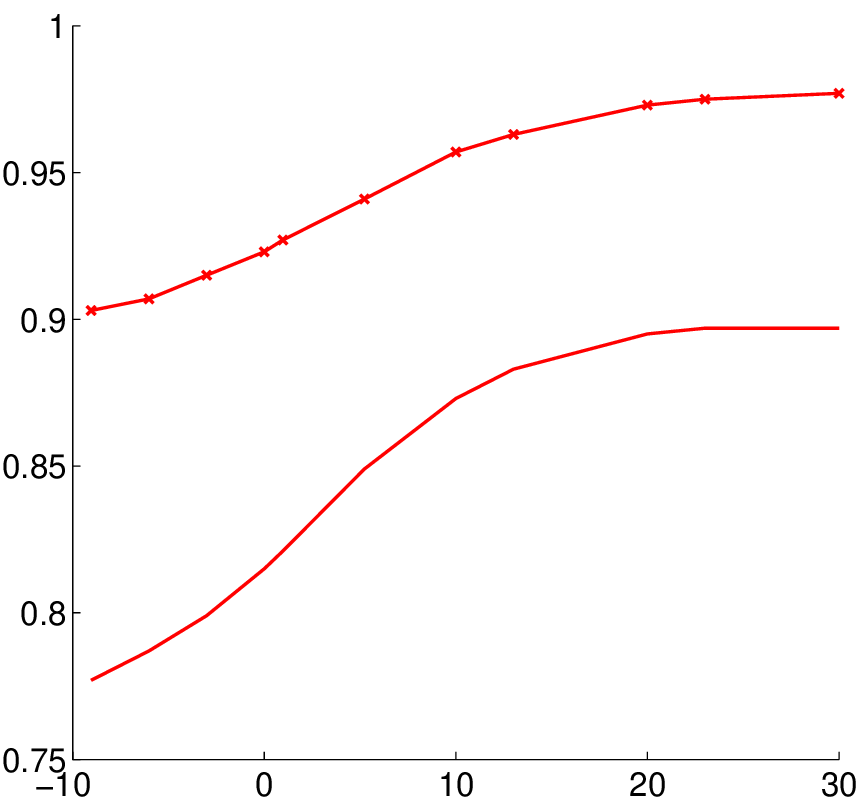}};
\draw[dashed] (0,3.7) -- (7,3.7);
\draw[dashed] (0,5.65) -- (7,5.65);
\node (triangle) at (5.5,1.5) {\includegraphics[scale=0.1]{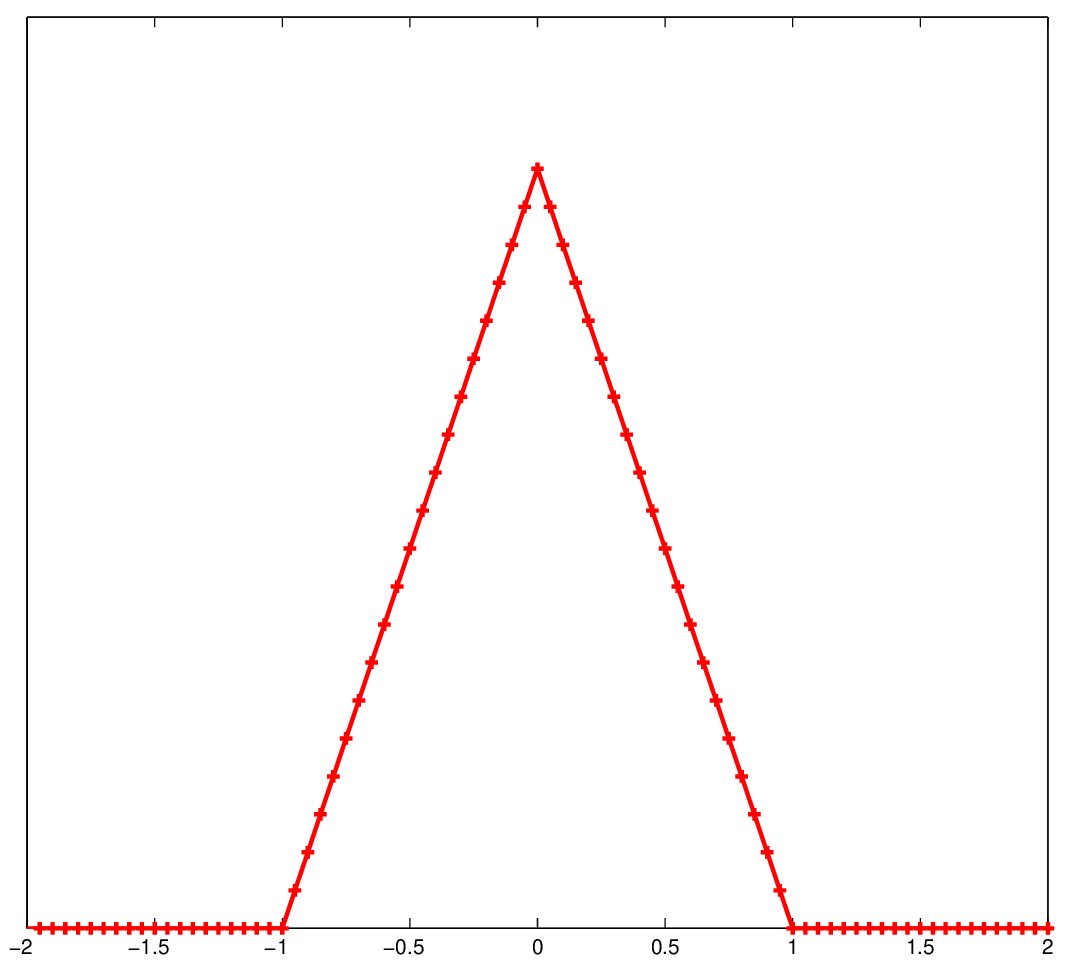} };
%\node[above of = rect, node distance = 1cm] {\small $S_\Pi(f)$};
\node[above of = triangle, node distance = 1cm] { \color{red} \small $S_\triangle(f)$};
%\node[above of = gauss, node distance = 1cm] {\color{green} \small $S_\Omega(f)$};
%\node[above of = wiggly, node distance = 1cm] {\color{blue} \small $S_X(f)$};

\node[rotate = 0] at (6,5.9) {\small $R=2$};
\node[rotate = 0] at (6,3.9) {\small $R=1$};

\draw[->, line width=1pt] (0,0) --  (0,6.5) node[above] {\small $\frac{f_{R}}{2W}$} ;
\draw[->, line width=1pt] (0,0) --  (7,0) node[right] {\small $\frac{\sigma_X^2}{\sigma_\epsilon^2}$ \small [dB] };

\end{tikzpicture}
\vspace{-5pt}
\caption{\label{fig:critical_f_noise} The critical rate $f_{R}$ as a function of the SNR $\sigma_X^2 / \sigma_{\epsilon}^2$ for an input signal with PSD $S_\triangle(f)$ corrupted by a flat spectrum Gaussian noise for two fixed values of $R$. The dashed line corresponds to the value of $f_{R}$ without noise. }
\end{center}
\end{figure}

%\begin{example}[memoryless source corrupted by  colored noise]
%{ \color{red} Finish this example}
%In this example the source is the memoryless PSD $S_\Pi(f)$ of \eqref{eq:psd_rect} and the noise process $X_\epsilon(\cdot)$ has PSD 
%\[
%S_\epsilon(f) = 
%\]
%Consider the source with triangular PSD $S_\triangle(f)$ defined in \eqref{eq:psd_triangle} and a white Gaussain noise $\epsilon(\cdot)$ with intensity $S_\epsilon(f) = \sigma_\epsilon^2$. 
%\end{example}

\section{Pulse-Code Modulation \label{sec:PCM}}
So far we considered the conversion of analog signals to bits using bounded linear sampling and under optimal encoding of these samples to bits, subject only to the bitrate constraint $R$. In particular, we did not impose any limitations on the complexity or delay of the encoder and decoder aside from the bitrate at the encoder's output. Indeed, the achievablity of $D^\star(f_s,R)$ in Theorem~\ref{thm:achivability} is obtained as the time horizon $T$ grows to infinity, whereas the number of states assumed by the encoder and decoder grows exponentially in $T$. 
% Moreover, when the PSD of $X(\cdot)$ is not unimodal, the number of sampling branches may be arbitrarily large. 
\par  
In this section we are interested in imposing additional constraints on the restricted-bitrate representation of the samples and the recovery of $X(\cdot)$ beyond those associated with the achievable scheme of Theorem~\ref{thm:achivability}. Specifically, we now assume that the samples are obtained using a single sampling branch, the encoder maps each sample $Y_n$ to its finite-bit representation $\hat{Y}_n$ at time $n$ using a scalar quantizer with a fixed number of bits $q$, and the decoder recovers $X(\cdot)$ using a linear procedure. This form of encoding is known as \emph{pulse-code modulation} (PCM) \cite{black1947pulse,1697556}; we refer to 
\cite[Sec I.A]{gray1998quantization} for a historical overview. In order to focus on the effect of this sub-optimal encoding and decoding on the distortion-rate performance, we assume that no noise is added to $X(\cdot)$ prior to sampling. The extension of the distortion analysis below to the case in which such a noise is present is straightforward. 

%These restrictions on the ADX can be seen as using the sampler of Fig~\ref{fig:sampler_multi} with $L=1$, taking the encoder in Fig.~\ref{fig:system_model} to be a $q$-bit scalar quantizer, and an estimator that interpolate the quantized samples using a sinc kernel. In particular, the output of the encoder is a sequence of quantized amplitude values of source samples taken at uniform intervals. This form of encoding is known as \emph{pulse-code modulation} (PCM) \cite{1697556}, and is commonly used in many electronic devices to convert audio signals to bits. \\

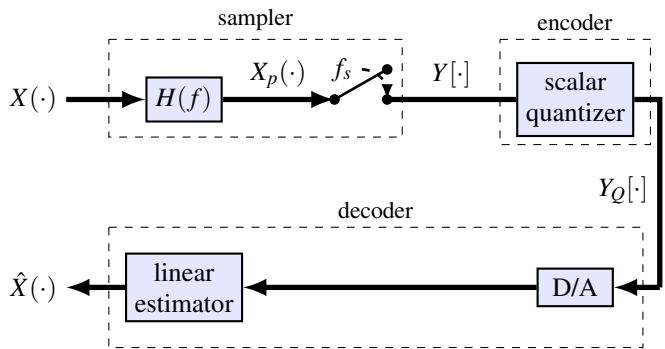
\begin{figure}
\begin{center}
\begin{tikzpicture}[node distance=2cm,auto,>=latex]
  \node at (0,0) (source) {$X(\cdot)$} ;
  \node[int1, right of = source, node distance =2cm] (anti) {$H(f)$};   
 \node [coordinate,right of = anti, node distance = 4cm] at (0,0) (smp_in) {};
  \node [coordinate, right of = smp_in,node distance = 0.7cm] (smp_out){};
	\node [coordinate,above of = smp_out,node distance = 0.4cm] (tip) {};
\fill  (smp_out) circle [radius=2pt];
\fill  (smp_in) circle [radius=2pt];
\fill  (tip) circle [radius=2pt];
\node[left,left of = tip, node distance = 0.6 cm] (ltop) {$f_s$};
\draw[->,dashed,line width = 1pt] (ltop) to [out=0,in=90] (smp_out.north);
\draw[line width=1pt]  (smp_in) -- (tip);
\draw[dashed] (smp_in)+(-3,0.8) -- node[above] {\small sampler} +(0.9,0.8) --  +(0.9,-0.5)-- +(-3,-0.5) -- +(-3,0.8);

\node[int1,right of=smp_out, node distance = 2.5cm, align = center] (enc) {scalar \\  quantizer};
%\node [above of=plus_out, node distance=1.3cm] (eta) {$\eta[n]$};

\draw[dashed] (enc)+(-1,0.8) -- node[above, align = center] {\small encoder} +(1,0.8) -- +(1,-0.6)-- +(-1,-0.6) -- +(-1,0.8);

\node [right of = enc, node distance = 1cm] (right_edge) {};
\node [below of = right_edge, node distance = 2.5cm] (right_b_edge) {};
\node [right] (dest) [below of=source, node distance = 2.5cm]{$\hat{X}(\cdot)$};

\node [int1] (LPF) [below of=anti, node distance = 2.5cm, align = center] {linear \\ estimator};

\node[int1,below of = enc, node distance = 2.5cm] (dec) {D/A};

\draw[-,line width=2pt] (smp_out) -- node[above] {$ Y[\cdot]$} (enc);
\draw[-,line width=2pt] (enc) -- 
%node[above, xshift = 0.9cm]{$R=\bar{R} f_s [\frac{\mathrm{bits}}{\mathrm{sec}}]$}
(right_edge);

\draw[dashed] (LPF)+(-1,0.8) -- node[above, align = center] {\small decoder} +(6.1,0.8)-- +(6.1,-0.8)-- +(-1,-0.8) -- +(-1,0.8);

%\draw[dotted, line width = 1] (0.9,0.3) -- (0.9,-3)  node[left] {analog~} ; 
%\draw[dotted, line width = 1] (7.2,0.3)  -- (7.2,-3)  node[right] {~digital} ;

\draw[-,line width = 2]  (right_edge.west) -|  (right_b_edge.east) node[left, yshift = 1.3cm] {$Y_Q[\cdot]$};
\draw[->,line width = 2]  (right_b_edge.east) -- (dec);
\draw[->,line width = 2]  (dec) -- (LPF);
\draw[->,line width=2pt] (LPF) -- (dest);
\draw[->,line width=2pt] (source) -- (anti);
\draw[->,line width=2pt] (anti) -- node[above] {$\scriptsize X_p(\cdot)$} (smp_in);
\end{tikzpicture}
\end{center}
\caption{\label{fig:PCM_system} Pulse-code modulation  and reconstruction system.}
\end{figure}

\subsection{PCM A/D conversion and Reconstruction Setup}
We consider the system described in Fig.~\ref{fig:PCM_system}, where the input $X(\cdot)$ is assumed to be a wide-sense stationary stochastic process with PSD $S_X(f)$, not necessarily Gaussian. This process is sampled using a pre-sampling filter $H(f)$ followed by a uniform sampler with sampling rate $f_s$. This is a special case of the multi-branch LTI uniform sampler of Fig.~\ref{fig:sampler_multi} with $L=1$ and $H_1(f) = H(f)$. The sample $Y[n]$ at time $n/f_s$ is mapped to a quantization level $\hat{Y}[n]$ using a procedure denoted as \emph{fixed-length scalar quantization} \cite{gray1998quantization}: we consider a set of $M$ real numbers $\hat{y}_1,\ldots,\hat{y}_M$ called \emph{reconstruction levels}. Each reconstruction level is assigned a digital number of length $q \triangleq \lceil \log M \rceil$, where $q$ is the \emph{bit resolution} of the quantizer. Upon receiving the input $Y[n]$, the quantizer outputs the nearest reconstruction level to $Y[n]$ among the set of reconstruction levels, what we denote $\hat{Y}[n]$. Using this notation, the \emph{bitrate} of the digital representation, namely, the number of bits per unit time required to represent the process $\hat{Y}[\cdot]$, is $R = q f_s$. \par
Denote by $\eta[n]$ the quantization error, i.e.,
\begin{equation} \label{eq:quant_model}
\hat{Y}[n] = Y[n] + \eta[n], \quad n\in \mathbb Z.
\end{equation}
The variance of $\eta[n]$ depends on the square of the size of the quantization regions induced by the quantizer, i.e., the Voronoi sets associated with the reconstruction levels. The number of these sets increases exponentially in the bit resolution $q$ and so does the radius of each set, provided all radii decrease uniformly \cite{930935}. As a result, the variance of $\eta[n]$ behaves as
\begin{equation} \label{eq:noise_var}
\sigma_{\eta}^2 = c_0 2^{-2q},
\end{equation}
for some $c_0>0$. The constant $c_0$ depends on other statistical assumptions on the input to the quantizer. For example, if the amplitude of the input signal is bounded within the interval $(-A_m/2,A_m/2)$, then we may choose uniformly spaced quantization levels resulting in $c_0=A_m / 12$. If the input to the quantizer is Gaussian with variance $\sigma_{in}^2$ and the quantization rule is chosen according to the ideal point density allocation of the Lloyd algorithm \cite{1056489}, then \cite[Eq. 10]{gray1998quantization}
\begin{equation} \label{eq:quantizer_var}
c_0 = \frac{\pi\sqrt{3}}{2} \sigma_{in}^2.
\end{equation}

The non-linear relation between $Y[n]$ and $\hat{Y}[n]$ complicates the analysis. To simplify the problem, we adopt a common assumption in the signal processing literature (e.g. \cite{5672380, goyal2008compressive}) :
\begin{enumerate}
\item[(A)] The process $\eta[\cdot]$ is zero mean, white (uncorrelated entries), and is uncorrelated with $Y[\cdot]$.
\end{enumerate}
There exists a vast literature on conditions under which assumption (A) provides a good approximation to the system behavior. For example, in \cite{930935} it was shown that two consecutive samples $\eta[n]$ and $\eta[n+1]$ are approximately uncorrelated if the distribution of $Y[\cdot]$ is smooth enough, where this holds even if the sizes of the quantization regions are on the order of the variance of $Y[\cdot]$ \cite{1086334}. This property justifies the assumption that the process $\eta[\cdot]$ is white. Bennett \cite{BLTJ1340} showed that $\eta[\cdot]$ and $Y[\cdot]$ are approximately uncorrelated provided the PSD of $Y[\cdot]$ is smooth, the quantization regions are uniform and the quantizer resolution $q$ is high. Since in our setting the quantizer resolution may also be relatively low when $f_s$ approaches $R$, our analysis under (A) does not lead to an exact description of the performance limit under scalar quantization. Nevertheless, under (A) the distortion due to quantization decreases exponentially as a function of the quantizer bit precision and is proportional to the variance of the input signal. These two properties, which hold also under an exact analysis of the error due to scalar quantization with entropy coding \cite{gray1998quantization}, are the dominant factors in the MMSE analysis below. \\

\subsection{Distortion Analysis}
Under (A), the relation between the input and the output of the quantizer can be represented in the $z$ domain by
\begin{equation} \label{eq:z_domain}
\hat{Y}(z) = Y(z)+ \eta(z).
\end{equation}
This leads to the following relation between the corresponding PSDs:
\begin{align} \label{eq:pcm_in_out}
S_{\hat{Y}} \expphi & =  S_{Y} \expphi + S_\eta \expphi \nonumber \\
& = f_s \sum_{k\in \mathbb Z} S_X(f-f_sk) \left| H(f-f_sk)\right|^2  +  \sigma_{\eta}^2.
\end{align}
The block diagram of a generic system that realizes the input-output relation \eqref{eq:z_domain} is given in Fig.~\ref{fig:pcm}, where, in accordance with (A), $\eta[\cdot]$ is a white noise independent of $X(\cdot)$. In what follows, we derive an expression for the linear MMSE in estimating $X(\cdot)$ from $\hat{Y}[\cdot]$ according to the relation \eqref{eq:pcm_in_out} and an optimal choice of the pre-sampling filter $H(f)$, that minimizes this MSE. \\

\begin{figure}
\begin{tikzpicture}[node distance=2cm,auto,>=latex]
  \node at (0,0) (source) {$X(\cdot)$};
  \node[int1, right of = source] (anti) {$H(f)$};   
   \node [coordinate, right of = anti,node distance = 1.5cm] (smp_in) {};
  \node [coordinate, right of = smp_in,node distance = 0.7cm] (smp_out){};
	\node [coordinate,above of = smp_out,node distance = 0.4cm] (tip) {};
\fill  (smp_out) circle [radius=2pt];

\fill  (smp_in) circle [radius=2pt];
\fill  (tip) circle [radius=2pt];
\node[left,left of = tip, node distance = 0.6 cm] (ltop) {$f_s$};

\draw[->,dashed,line width = 1pt] (ltop) to [out=0,in=90] (smp_out.north);

\node[sum,right of=smp_out, node distance = 1.5cm] (plus_out) {$+$};
  
 \draw[->,line width=1pt,align=left] (smp_out) -- node[below,xshift= - 5pt] {$Y[n]$} (plus_out);
 
\node [right of = plus_out, node distance =0.7 cm] (quant_out) {};
\node [right, right of = quant_out, node distance=1cm] (sys_out) {$\hat{Y}[n]$};

\node [above of=plus_out, node distance=1.3cm] (eta) {$\eta[n]$};

\draw[->,line width=1pt] (plus_out) -- (sys_out);
\draw[->,line width=1pt] (eta)--(plus_out);
 %\draw[-,dashed] (plus_out)+(0.5,0) -| +(0.5,1.7) -| node[above] {\small ~~~~~~~~~$q$-bit quanitzer} +(-0.7,1.7) -| +(-0.7,-0.5) -| +(0.5,-0.5) -| +(0.5,0);
\draw[->,line width=1pt] (source) -- (anti);
\draw[->,line width=1pt] (anti) -- (smp_in);
\draw[line width=1pt]  (smp_in) -- (tip);
\end{tikzpicture}
\caption{\label{fig:pcm} Sampling and quantization system model.}
\end{figure}
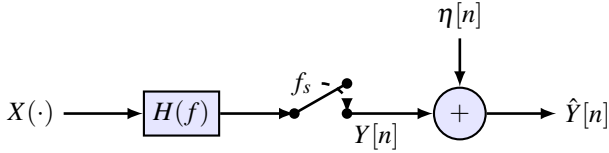 

The goal of the linear decoder is to provide a reconstruction signal $\hat{X}(\cdot)$ that minimizes 
\begin{equation} \label{eq:PAM_mmse_def}
\lim_{T\rightarrow \infty} \frac{1}{2T} \int_{-T}^T \mathbb E \left(X(t)-\hat{X}(t) \right)^2
\end{equation}
over all possible reconstruction signals of the form
\begin{equation}
    \label{eq:PAM_linear_constraint}
\hat{X}(t) = \sum_{n\in \mathbb Z} w(t,n)\hat{Y}[n],
\end{equation}
where $w(t,n)$ is square summable in $n$ for every $t \in \mathbb R$. Note that this decoder is non-causal in the sense that the estimate of the source sample $X(t)$ is obtained from the entire history of the quantized signal $\hat{Y}[\cdot]$. Since all signals in Fig.~\ref{fig:pcm} are assumed stationary, an expression for the minimal value of \eqref{eq:PAM_mmse_def} subject to the constraint \eqref{eq:PAM_linear_constraint} can be found using standard linear estimation techniques, leading to the following proposition:
\begin{prop}\label{prop:mmse_pre_filtering}
Consider the system in Fig.~\ref{fig:pcm}. The minimal time-averaged MSE \eqref{eq:PAM_mmse_def} in linear estimation of $X(\cdot)$ from $\hat{Y}[\cdot]$ is given by
\begin{align}
& D_{\PCM} \triangleq  \label{eq:mmse_pcm_general}   \\
& =  \sigma_{X}^{2}-\frac{1}{f_s} \int_{-\frac{f_s}{2}}^{\frac{f_s}{2}} \frac{
  \sum_{k\in\mathbb{Z}} S_X^2 \left(f-f_sk\right) \left|H\left(f-f_sk\right) \right|^2} 
  { \sum_{k\in\mathbb{Z}} S_X \left(f-f_sk\right) \left|H\left(f-f_sk\right) \right|^2 +\sigma_\eta^2/f_s  } df.   \nonumber
\end{align}
\end{prop}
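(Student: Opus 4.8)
The plan is to treat $\hat{Y}[\cdot]$ as a discrete-time wide-sense stationary observation of the continuous-time stationary signal $X(\cdot)$, apply the orthogonality principle of linear estimation, and carry out the resulting computation in the frequency domain.

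First I would record the relevant second-order statistics. Under (A1) the quantization noise $\eta[\cdot]$ is white with variance $\sigma_\eta^2$ and independent of $X(\cdot)$, so $\hat{Y}[\cdot]$ is wide-sense stationary with PSD given by \eqref{eq:pcm_in_out}, while by \eqref{eq:quant_model} and the independence the cross-correlation with the signal is $\mathbb E[X(t)\hat{Y}[n]]=\mathbb E[X(t)Z(n/f_s)]=r_{XZ}(t-n/f_s)$, where $r_{XZ}$ has Fourier transform $S_X(f)\overline{H(f)}$. The optimal reconstruction $\hat{X}(t)=\sum_n w(t,n)\hat{Y}[n]$ is the orthogonal projection of $X(t)$ onto $\overline{\mathrm{span}}\{\hat{Y}[n]:n\in\mathbb Z\}$; since $X(\cdot)$ is stationary and the sampling grid $\{n/f_s\}$ is uniform, the unitary shift by $1/f_s$ maps this subspace onto itself and commutes with the projection, which forces $w(t,n)=\psi(t-n/f_s)$ for a single kernel $\psi$. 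Hence the error $X(t)-\hat{X}(t)$ is cyclostationary with period $1/f_s$, the time-averaged MSE \eqref{eq:mmse_def} reduces by periodicity to $f_s\int_0^{1/f_s}\mathbb E[(X(t)-\hat{X}(t))^2]\,dt$, and by the projection identity this equals $\sigma_X^2-f_s\int_0^{1/f_s}\mathbb E[\hat{X}(t)^2]\,dt$; it therefore remains only to maximize the captured energy $f_s\int_0^{1/f_s}\mathbb E[\hat{X}(t)^2]\,dt$ over $\psi$.

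Second I would pass to the frequency domain via the spectral representations of $X(\cdot)$ and $\hat{Y}[\cdot]$. Sampling at rate $f_s$ aliases the spectrum, so at each digital frequency $f\in(-f_s/2,f_s/2)$ the observed increment $dM_{\hat{Y}}$ is, up to the normalization fixed by \eqref{eq:pcm_in_out}, $\sum_k H(f-f_sk)\,dM_X(f-f_sk)$ plus the independent noise increment, whereas $X(t)$ receives the contribution $\sum_k e^{2\pi i(f-f_sk)t}\,dM_X(f-f_sk)$. Averaging the squared error over one sampling period $t\in[0,1/f_s)$ kills the cross terms between distinct aliases $k\neq k'$, since these oscillate at nonzero integer multiples of $f_s$; the problem therefore decouples, for each $f$, into recovering the family $\{dM_X(f-f_sk)\}_k$ of mutually uncorrelated Gaussian increments from a single noisy linear combination of them. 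For that scalar Gaussian estimation problem the linear MMSE estimate of the $k$-th increment captures energy proportional to $|H(f-f_sk)|^2 S_X^2(f-f_sk)\,/\,\left(\sum_j|H(f-f_sj)|^2 S_X(f-f_sj)+\sigma_\eta^2/f_s\right)$; summing over $k$ and integrating over the Nyquist band $(-f_s/2,f_s/2)$, with the $1/f_s$ prefactor coming from the same normalization, yields exactly \eqref{eq:mmse_pcm_general}.

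I expect the main obstacle to be bookkeeping rather than substance: one must justify the aliasing/periodization step rigorously — the interchange of the sum over $k$ with the integration, and the vanishing of the cross-alias contributions under the time-average — and keep the constants $f_s$ and $\sigma_\eta^2/f_s$ consistent with the convention fixed in \eqref{eq:pcm_in_out}. Everything else is the classical Wiener filter for estimation from discrete noisy observations; in fact this computation is the single-branch, additive-sample-noise specialization of the multi-branch MMSE formula derived in \cite[Sec. IV]{Kipnis2014}, so an equally valid route is to invoke that result directly, identifying the per-branch observation noise there with the white quantization noise $\eta[\cdot]$ of variance $\sigma_\eta^2$ and the single pre-sampling filter with $H(f)$.
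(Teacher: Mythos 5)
Your proposal is correct and follows essentially the same route as the paper's proof in Appendix~E: reduce the time-averaged MSE by cyclostationarity to an average over one sampling period (the paper's $\Delta$-parametrization of $X_\Delta[n]=X((n+\Delta)T_s)$), apply the discrete Wiener filter to the stationary observation $\hat{Y}[\cdot]$ with PSD \eqref{eq:pcm_in_out}, and use orthogonality of the complex exponentials over one period to kill the cross-alias terms, leaving $\sum_k S_X^2|H|^2$ in the numerator. Your spectral-increment phrasing and the closing remark that one could instead specialize the multi-branch MMSE of \cite{Kipnis2014} are equivalent reformulations of the same computation, so no gap remains.
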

\begin{proof}
See Appendix.
\end{proof}
The effect of the quantization noise is expressed in \eqref{eq:mmse_pcm_general} by an additive noise with a constant PSD over the digital domain. 
%An expression for the optimal estimator $ w^\star(t,n)$ can be shown to be of the form 
%\[
% w^\star (t,n) =  w(t-n/f_s).
%\]
%where the Fourier transform of $w(t)$ is 
%\begin{equation} \label{eq:estimator_frequency_response}
%W(f) = \frac{ H^*(f) S_X(f) }{\sum_{k\in \mathbb Z} \left| H(f) \right|^2 S_X(f-f_sk)+ \sigma_\eta^2/f_s }.
%\end{equation}
%The details are given in Appendix \ref{app:pcm_proof}. \\

Using H\"{o}lder's inequality and monotonicity of the function $x\rightarrow \frac{x}{x+1}$, the integrand in \eqref{eq:mmse_pcm_general} can be bounded for each $f$ in the integration interval $\left(-f_s/2, f_s/2 \right)$ by
\begin{align}
  \frac{ \left(S^\star(f)\right)^2 } 
  { S^\star(f)  +\sigma_\eta^2/f_s  }, \label{eq:optimal_j}  
\end{align}
where
\begin{equation} \label{eq:s_star}
S^\star(f) = \sup_{k\in\mathbb{Z}} S_X \left(f-f_sk\right) \left|H\left(f-f_sk\right) \right|^2.
\end{equation}
Since $S_X(f)$ is an $\mathrm L_1$ function, the supremum in \eqref{eq:s_star} is finite for all $f \in (-f_s/2,f_s/2)$ except for perhaps a set of Lebesgue measure zero. It follows that a lower bound on $D_{\PCM}$ is obtained by replacing the integrand in \eqref{eq:mmse_pcm_general}
with $S^\star(f)$. \par
Under the assumption that $S_X(f)$ is unimodal in the sense that it is symmetric and non-increasing for $f>0$, for each $f \in [-f_s/2,f_s/2]$ the supremum in \eqref{eq:s_star} is obtained for $k=0$. This implies that \eqref{eq:optimal_j} is achievable if the pre-sampling filter is a low-pass filter with cut-off frequency $f_s/2$, namely 
\begin{equation} \label{eq:pre_sampling_filter}
H(f) = \begin{cases}
 1, &  |f|\leq f_s/2, \\
 0, & \textrm{otherwise}. \end{cases}
\end{equation}
This choice of $H(f)$ in \eqref{eq:mmse_pcm_general} leads to
\begin{equation} \label{eq:mmse_optimal}
D_{\PCM} = \mmse^\star(f_s) + \int_{-\frac{f_s}{2}}^{\frac{f_s}{2}} \frac{
   S_X(f)  }
{1+\SNR(f) } df,
\end{equation}
where $\mmse^\star(f_s)$ is given by \eqref{eq:mmse_bound_def} and
\begin{equation} \label{eq:snr_super_Nyquist_analog}
\SNR(f) \triangleq
 f_s S_X(f)/ \sigma_\eta^2,\quad  -\frac{f_s}{2}\leq f \leq \frac{f_s}{2}.
\end{equation}
Henceforth, we will consider only processes with unimodal PSD, so that the MMSE under optimal pre-sampling filtering is given by \eqref{eq:mmse_optimal}. 

\subsection{PCM Distortion under a Fixed Bitrate}
From \eqref{eq:snr_super_Nyquist_analog} we see that when the variance of the quantization noise is independent of $f_s$, than the SNR in the system in Fig.~\ref{fig:pcm} increases linearly in $f_s$. The MMSE of $X(\cdot)$ given $\hat{Y}[\cdot]$ then decreases by a factor of $1/f_s$ when $f_s$ is large. However, when the bitrate $R = q f_s$ is fixed, the relation between $\sigma_\eta^2$ and $f_s$ is given by
\begin{equation} \label{eq:noise_var_freq}
\sigma_{\eta}^2 = c_0 2^{-2q}= c_0 2^{-2R/f_s}.
\end{equation}
Substituting \eqref{eq:noise_var_freq} into \eqref{eq:mmse_optimal} and \eqref{eq:noise_var_freq} we obtain:
\begin{prop} \label{prop:main}
The MMSE in estimating $X(\cdot)$ from $\hat{Y}[\cdot]$ assuming (A) and $R=qf_s$ satisfies
\begin{align}
D_{\PCM}(f_s,R) = \mmse^\star(f_s)+ \int_{-\frac{f_s}{2}}^{\frac{f_s}{2}} \frac{S_{X} (f) }{1+\SNR(f)}df \label{eq:main_result}
\end{align} 
where
\begin{equation} \label{eq:main_result_snr}
\SNR(f)=\SNR_{f_s,R}(f) = f_s \frac{2^{2R/f_s}}{c_0} S_{X}(f)
\end{equation} 
and $\mmse^\star(f_s)$ is given by \eqref{eq:mmse_bound_def}. \\
\end{prop}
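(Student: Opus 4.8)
The plan is to specialize Proposition~\ref{prop:mmse_pre_filtering} to a unimodal PSD with the optimal pre-sampling filter and then to substitute the fixed-bitrate constraint $R=qf_s$ into the quantization-noise model. Proposition~\ref{prop:mmse_pre_filtering} already gives the linear MMSE \eqref{eq:mmse_pcm_general} for an arbitrary pre-sampling filter $H(f)$. Since we restrict attention to processes whose PSD is unimodal, the bound via H\"older's inequality and the monotonicity of $x\mapsto x/(x+1)$ that precedes \eqref{eq:mmse_optimal} shows that, for almost every $f\in(-f_s/2,f_s/2)$, the integrand in \eqref{eq:mmse_pcm_general} is no smaller than \eqref{eq:optimal_j}, and that this lower value is attained by the ideal low-pass filter $H^\star(f)=\mathbf 1_{|f|\le f_s/2}$ of \eqref{eq:pre_sampling_filter}, because the supremum in \eqref{eq:s_star} is then realized at $k=0$. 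Plugging $H^\star$ into \eqref{eq:mmse_pcm_general} collapses the aliasing sums to their $k=0$ terms and, after splitting off the spectral tails $|f|>f_s/2$ as $\underline{\mmse}(f_s)$ via \eqref{eq:mmse_bound_exp}, yields \eqref{eq:mmse_optimal} with $\SNR(f)=f_sS_X(f)/\sigma_\eta^2$.

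It then remains to carry the rate constraint through the quantizer noise variance. Assumption (A1) gives $\sigma_\eta^2=c_0/(2^q-1)^2$ from \eqref{eq:noise_var}; imposing $R=qf_s$, i.e. $q=R/f_s$, gives $\sigma_\eta^2=c_0/(2^{R/f_s}-1)^2$. Substituting this into the definition of $\SNR(f)$ in \eqref{eq:snr_super_Nyquist_analog} produces
\[
\SNR(f)=\frac{f_sS_X(f)}{\sigma_\eta^2}=f_s\left(2^{R/f_s}-1\right)^2\frac{S_X(f)}{c_0},
\]
which is \eqref{eq:main_result_snr}, and inserting it back into \eqref{eq:mmse_optimal} gives \eqref{eq:main_result}. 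Hence $\tilde D(f_s,R)$ is obtained by a direct specialization, with no new estimation-theoretic content beyond Proposition~\ref{prop:mmse_pre_filtering}.

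The only point I expect to require care --- and which I would state explicitly rather than leave implicit --- is that for unimodal $S_X$ the passage to \eqref{eq:optimal_j} is an \emph{equality}, not merely an inequality, for a.e.\ $f$ in the integration band; this is what licenses writing $\tilde D(f_s,R)$ as the exact linear MMSE under the optimal single-branch pre-sampling filter rather than only as a lower bound. A secondary bookkeeping point is to confirm that the term $\underline{\mmse}(f_s)$ appearing in \eqref{eq:main_result} coincides with the out-of-band contribution $\sigma_X^2-\int_{F^\star(f_s)}S_X(f)\,df$ of \eqref{eq:mmse_bound_exp} when the passband of $H^\star$ is the maximal-energy set $F^\star(f_s)=(-f_s/2,f_s/2)$ for a unimodal PSD, so that the two terms of \eqref{eq:main_result} are cleanly the sampling (aliasing) loss and the in-band quantization-induced error, respectively.
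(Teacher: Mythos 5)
Your proposal is correct and follows exactly the route the paper takes: the paper obtains Proposition~\ref{prop:main} as an immediate consequence of \eqref{eq:mmse_optimal} (itself derived from Proposition~\ref{prop:mmse_pre_filtering} via the H\"older bound and the optimal low-pass filter for unimodal PSDs) combined with the substitution $q=R/f_s$ in the quantizer noise variance. Your two explicit "points of care" --- that the bound \eqref{eq:optimal_j} is attained with equality for unimodal $S_X$, and that $F^\star(f_s)=(-f_s/2,f_s/2)$ so the out-of-band term is exactly $\underline{\mmse}(f_s)$ --- are precisely the observations the paper makes in the text preceding \eqref{eq:mmse_optimal}.
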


We denote the two terms in the RHS of \eqref{eq:main_result} as the \emph{sampling distortion} and the \emph{quantization distortion}, respectively. Note that when $R\rightarrow \infty$ the quantization error vanishes and the distortion in PCM is only due to sampling. Since we assumed unimodal PSD, the sampling distortion vanishes only for $f_s \geq f_{\Nyq}$. \par
Figure~\ref{fig:various_PSD} shows $D_{\PCM}(f_s,R)$ as a function of $f_s$ for a given $R$ and various PSDs compared to their corresponding optimal ADX distortions $D^\star(f_s,R)$ of \eqref{eq:bound_def}. In Fig.~\ref{fig:various_PSD} and in other figures throughout, we take $c_0$ as in  \eqref{eq:quantizer_var} which corresponds to an optimal point density of the Gaussian distribution whose variance is proportional to the signal at the input to the quantizer. The variance of the latter is given by 
\[
\sigma_{\mathsf{in}}^2 = \int_{-\infty}^\infty S_X(f) \left|H(f)\right|^2df = \sigma_X^2-\mmse^\star(f_s).
\]
While $\sigma_{\mathsf{in}}^2$ depends on the sampling rate $f_s$, it can be shown to have a negligible effect on $D_{\PCM}(f_s,R)$ for sampling rates close to $f_{\Nyq}$, which is our main area of interest. We therefore ignore this dependency and continue our discussion assuming $\sigma_{\mathsf{in}}^2 = \sigma_X^2$.

\begin{figure}
\begin{center}

\begin{tikzpicture}
\draw [line width=1pt, fill=red!30] (-2.3,0) rectangle  (-2,0.3) node[right, yshift = -0.2cm] {\small $\mmse^\star(f_s)$};
\draw [fill=blue!30, line width=1pt] (2,0) rectangle  (2.3,0.3) node[right, yshift = -0.2cm, align = center] {\small quantization noise};
\end{tikzpicture}

\begin{tikzpicture}[scale=0.96]
\node at (0,0) (origin) {};

\fill[fill=blue!30] (-1.5,0) --(-1.5,0.3) --(1.5,0.3)--(1.5,0) -- cycle;

\draw[line width=0.5pt] (-1.5,0) node[below,xshift = 0.1cm] {\scriptsize $-\frac{1}{2}$}-- node[above,yshift=0.1cm,xshift=2.3cm] {\scriptsize $S_{\eta}(e^{2\pi i \phi})$}(-1.5,0.3)-- (1.5,0.3)--(1.5,0) node[below]{\scriptsize $\frac{1}{2}$}--cycle;

\draw[dashed] (-1.5,0) --(-1.5,2.7) node[above] {\scriptsize $H(\phi f_s)$} --(1.5,2.7)--(1.5,0) -- cycle;

\draw[line width=1pt] plot[domain=-1.8:1.8, samples=100]  (\x, {-(\x)*(\x)*2.5/1.8/1.8+2.5 }) ;

\draw [line width=1pt, fill=red!30]  plot[domain=-1.8:-1.5, samples=100]  (\x, {-(\x)*(\x)*2.5/1.8/1.8+2.5 }) -- (-1.5,0) ;

\draw [line width=1pt, fill=red!30] (1.5,0) -- plot[domain=1.5:1.8, samples=100]  (\x, {-(\x)*(\x)*2.5/1.8/1.8+2.5 }) ;

\draw[line width=1pt] (-1.8,-0.1) node[below,xshift=-0.1cm] {\scriptsize $-\frac{W}{f_s}$}-- (-1.8,0.05);
\draw[line width=1pt] (1.8,-0.1) node[below,xshift=0cm] {\scriptsize $\frac{W}{f_s}$}-- (1.8,0.05);

\node[above,yshift=0.45cm,xshift=0.2cm, rotate=50] at (-0.9,1.5) {\scriptsize $S_{Y}(e^{2\pi i \phi})$};
%\node[right,yshift=0.25cm] at (0,1) {\scriptsize $\sigma_\eta^2$};
%\draw[line width=1pt] (-0.1,0.9) -- (0.1,1.1);

\draw[->,line width=1pt] (origin)+(0,-0.2) -- +(0,2.9) node[above] {};
\draw[->,line width=1pt] (origin)+(-2,0) node[left, xshift=0cm] {\scriptsize $\phi$} -- +(2,0) ;
\node[below of = origin, node distance = 1cm] {(a) $f_s <  f_\Nyq$};
\end{tikzpicture}
\begin{tikzpicture}[scale=0.96]
\node at (0,0) (origin) {};
\fill[fill=blue!30] (-1.5,0) --(-1.5,1) --(1.5,1)--(1.5,0) -- cycle;
\draw[line width=0.5pt] (-2,0) node[below,xshift = 0cm] {\scriptsize $-\frac{1}{2}$}-- node[above,yshift=0.45cm,xshift=0.2cm] {\scriptsize $S_{\eta}(e^{2\pi i \phi})$}(-2,1)-- (2,1)--(2,0) node[below]{\scriptsize $\frac{1}{2}$}--cycle;

\draw[dashed] (-1.5,0) --(-1.5,2.7) node[above] {\scriptsize $H(\phi f_s)$} --(1.5,2.7)--(1.5,0) -- cycle;

\draw[line width=1pt] plot[domain=-1.5:1.5, samples=100]  (\x, {-(\x)*(\x)*2.5/2.25+2.5 }) ;

\draw[line width=1pt] (-1.5,-0.1) node[below,xshift=0.cm] {\scriptsize $-\frac{W}{f_s}$}-- (-1.5,0.05);
\draw[line width=1pt] (1.5,-0.1) node[below,xshift=0cm] {\scriptsize $\frac{W}{f_s}$}-- (1.5,0.05);

\node[above,yshift=0.45cm,xshift=0.2cm, rotate=50] at (-0.8,1.5) {\scriptsize $S_{Y}(e^{2\pi i \phi})$};
%\node[right,yshift=0.25cm] at (0,1) {\scriptsize $\sigma_\eta^2$};
%\draw[line width=1pt] (-0.1,0.9) -- (0.1,1.1);

\draw[->,line width=1pt] (origin)+(0,-0.2) -- +(0,2.9) node[above] {};
\draw[->,line width=1pt] (origin)+(-2,0) -- +(2,0) ;
\node[below of = origin, node distance = 1cm] {$f_s > f_\Nyq$};
\end{tikzpicture}

\caption{ \label{fig:PCM_spectral} Spectral representation of the distortion in PCM \eqref{eq:main_result}: (a) sampling below the Nyquist rate introduces sampling distortion $\mmse^\star(f_s)$. (b) As $f_s$ increases, $\mmse^\star(f_s)$ decreases and vanishes when $f_s \geq f_\Nyq$, but the contribution of the in-band quantization noise  increases due to lower bit-precision of each sample.
.}
\end{center}
\end{figure}
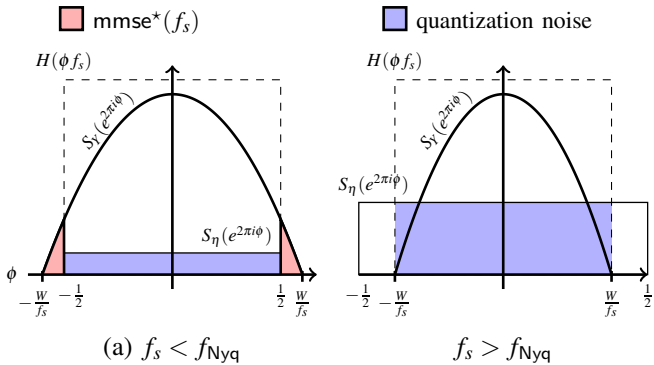

\subsection{The Optimal Sampling Rate \label{subsec:optimal_frequency_pcm}}

The quantization error in \eqref{eq:main_result} is an increasing function of $f_s$ (mainly due to the decrease in the exponent, but also due to the increase in $\sigma_{\mathsf{in}}^2$), whereas the sampling error $\mmse^\star_{X}(f_s)$ decreases in $f_s$. This situation is illustrated in Fig.~\ref{fig:PCM_spectral}. The sampling rate $f_s^\star$ that minimizes $D_{\PCM}(f_s,R)$ is obtained at an equilibrium point where the derivatives of both terms are of equal magnitudes. Figure~\ref{fig:various_PSD} shows that $f_s^\star$ depends on the particular shape of the input signal's PSD. If the signal is bandlimited, then we have the following result.
\begin{cor}\label{cor:optimal_sampling_rate_PCM}
For a bandlimited $X(\cdot)$, $f_s^\star$ that minimizes $D_{\PCM}(f_s,R)$ is at or below the Nyquist rate.
\end{cor}
\begin{proof}
Since $\SNR_{f_s,R}(f)$ is an increasing function of $f_s$ in the interval $0\leq f_s \leq R$, and since $\mmse^\star(f_s)=0$ for $f_s\geq f_\Nyq$, for all $f_s > f_\Nyq$ we have that $D_{\PCM}(f_\Nyq,R) \leq D_{\PCM}(f_s,R)$. Therefore, the minimizing sampling rate cannot be greater than $f_\Nyq$.
\end{proof}

How far $f_s^\star$ is below $f_\Nyq$ is determined by the derivative of $\mmse^\star(f_s)$, which equals $-2S_X(f_s/2)$. For example, in the case of $S_\Pi(f)$ of Examples~\ref{ex:rect1} and \ref{ex:rect2}, the derivative of $-2S_X(f_s/2)$ for $f_s<f_\Nyq = 2W$ is $-\sigma_X^2$. The derivative of the second term in \eqref{eq:main_result} is smaller than $\sigma_X^2$ for most choices of system parameters\footnote{This holds whenever $ \left(2^{0.5R/W}-1\right)^2>\frac{c_0}{\sigma_X^2}$.}. It follows that $0$ is in the sub-gradient of $D_{\PCM}(f_s,R)$ at $f_s=2W$, and thus $f_s^\star = 2W$, i.e., Nyquist rate sampling is optimal when the energy of the signal is uniformly distributed over its bandwidth. We now consider the other PSDs illustrated in Fig.~\ref{fig:various_PSD}.
\begin{example}[triangular PSD] \label{ex:triangle3}
Let $S_{\triangle}(f)$ be the PSD of Examples~\ref{ex:triangle1} and \ref{ex:triangle2}. For any $f_s\leq f_\Nyq = 2W$, we have
\[
\mmse^\star(f_s) = \sigma_X^2 \left(1 - \frac{f_s}{2W} \right)^2.
\]
Since the derivative of $\mmse^\star(f_s)$, which is $-2S_{\triangle}(f_s/2)$, changes continuously from $0$ to $-2\sigma_X^2/W$ as $f_s$ varies from $2W$ to $0$, we have $0<f_s^\star<2W$. The exact value of $f_s^\star$ depends on $R$ and the ratio $\sigma_X^2/c_0$. It converges to $2W$ as the value of any of these two increases. 
 \end{example}

\begin{figure}
\begin{center}
\begin{tikzpicture} 
\node at (0,0) {\includegraphics[trim=0cm 0cm 0cm 0cm,  ,clip=true,scale=0.45]{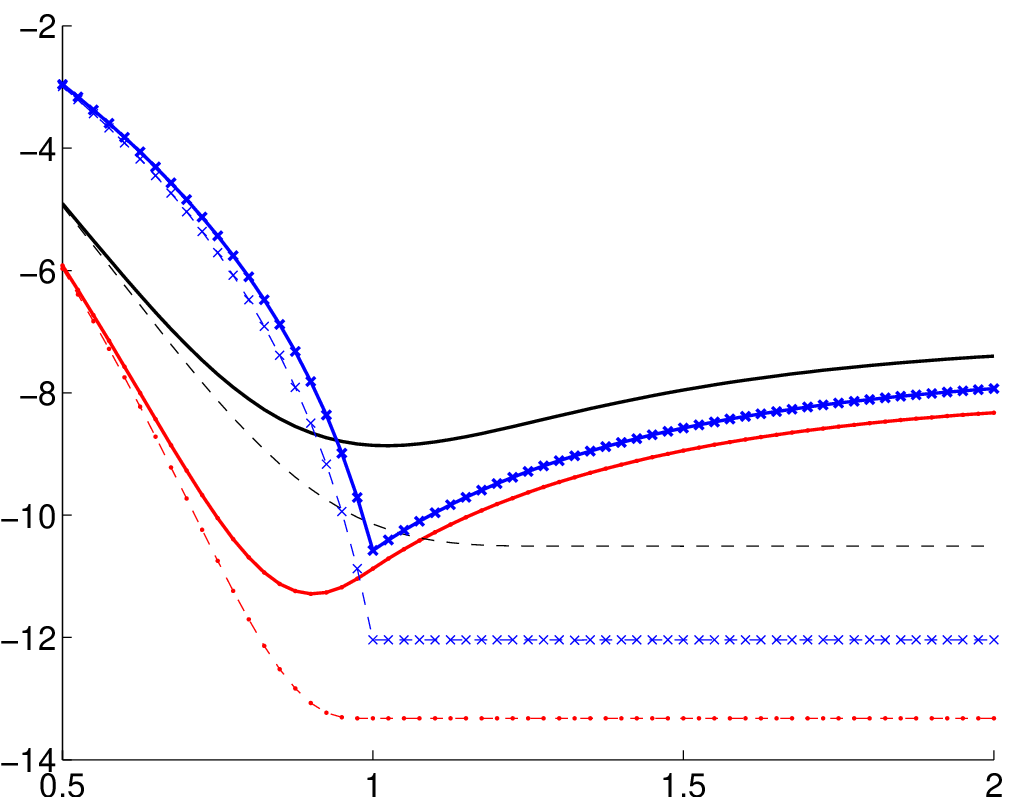}};

\node[rotate=90] at (-4.1,0) {\small {$D_{\PCM}(f_s,R)$~[dB]}} ;
\draw[->,line width = 1pt] (-3.4,-2.75) -- (4,-2.75) node[right] {\small $\frac{f_s}{2W}$}; 
%\draw[-,line width = 0.5pt] (-3,-2.75) -- node[above,yshift=-0.1cm,xshift=-0.25cm] {\scriptsize sub-Nyquist regime} (-1.02,-2.75) ; 
%\draw[line width = 0.5pt,dashed] (-1.04,-2.75) --  (-1.04,2) ; 

\node[rotate=-40] at (-3,2.4) {\small { \color{blue} $S_{\Pi}$ }} ;
\node[rotate=-65] at (-3,-0.4) {\small {\color{red} $S_{\triangle}$ }} ;
\node[rotate=-45] at (-3,1.4) {\small { $S_\Omega$ }};
\node at (-0.5,1.5) 
 {\includegraphics[width=1.8cm,height=1.6cm]{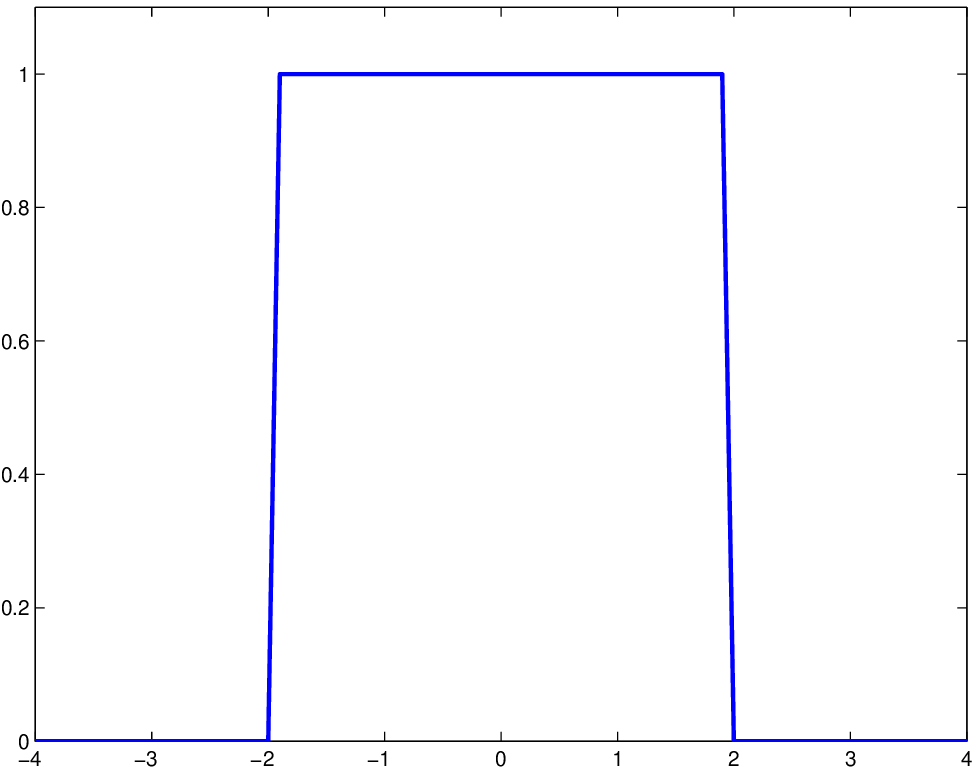} };
\node at (1.5,1.5) {\includegraphics[scale=0.1]{triangle_psd} };
\node at (3.5,1.5) {\includegraphics[width=1.8cm,height=1.6cm]{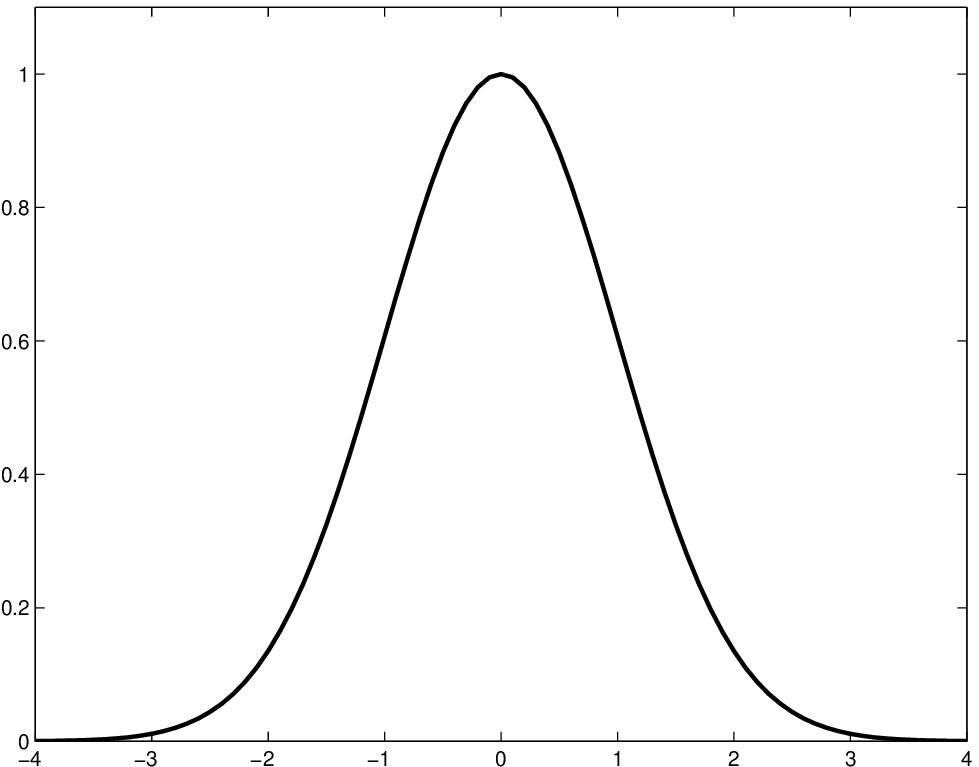} };
\node[rotate=0] at (3.5,1) {\scriptsize { $S_\Omega(f)$ }};
%\contourlength{1.5pt}
\node at (3.82,1.5) {\contour{white}{\scriptsize { $f_0=W$ }}};
\node[rotate=0] at (1.5,1) {\scriptsize { $S_{\triangle}(f)$ }};
\node[rotate=0] at (-.5,1) {\scriptsize { $S_{\Pi}(f)$ }};

\node at (-1.46,-1.5) {\color{red} \large $\star$};
\node at (-1.05,-1.2) {\color{blue} \large $\star$};
\node at (-0.9,-0.36) { \large $\star$};

\node at (-1.25,-2.45) {\color{red}  $\diamond$};
\node at (-1.05,-1.85) {\color{blue}  $\diamond$};
\node at (-0.23,-1.16) { \large $\diamond$};

\end{tikzpicture}
\caption{\label{fig:various_PSD} PCM Distortion $D_{\PCM}(f_s,R)$ as a function of $f_s$ for a fixed $R$ and various PSDs, which are given in the small frames. The dashed curves are the corresponding minimal ADX distortions $D(f_s,R)$. The symbols $\star$ and $\diamond$ indicate the distortion at rates $f_s^\star$ and $f_R$, respectively.}
\end{center}
\end{figure}

\begin{example}[PSD of unbounded support] \label{ex:gaussmarkov2}
Consider the PSD $S_\Omega(f)$ of the Gauss-Markov process  $X_\Omega(\cdot)$ in Example~\ref{ex:GaussMarkov1}. Since $X_\Omega(f)$ is not bandlimited, Corollary~\ref{cor:optimal_sampling_rate_PCM} does not hold. Nevertheless, as can be seen in Fig.~\ref{fig:various_PSD}, there exists an optimal sampling rate $f_s^\star$ that balances the two trends as explained in Subsection~\ref{subsec:optimal_frequency_pcm}.
\end{example}

\subsection{Discussion}
Under a fixed bitrate constraint, oversampling no longer reduces the MMSE since increasing the sampling rate forces a reduction in the quantizer resolution and increases the magnitude of the quantization noise. As illustrated in Fig.~\ref{fig:PCM_spectral}, for any $f_s$ below the Nyquist rate the bandwidths of both the signal and the quantization noise occupy the entire digital frequency domain, whereas the magnitude of the noise decreases as more bits are used in quantizing each sample. \par 
It follows that $f_s^\star$ cannot be larger than the Nyquist rate (Corollary~\ref{cor:optimal_sampling_rate_PCM}), and is strictly smaller than Nyquist when the energy of $X(\cdot)$ is not uniformly distributed over its bandwidth. In this case, some distortion due to sampling is preferred in order to increase the quantizer resolution. In other words, restricted to scalar quantization, the optimal rate $R$ code is achieved by sub-Nyquist sampling. This behavior of $D_{PCM}(f_s,R)$ is similar to the behavior of the minimal ADX distortion $D^\star(f_s,R)$, as both provide an optimal sampling rate which balances sampling distortion and lossy compression distortion. On the other hand, oversampling introduces redundancy into the PCM representation, and yields a worse distortion-rate code than with $f_s = f_s^\star$. In this aspect the behavior of $D_{PCM}(f_s,R)$ is different than $D^\star(f_s,R)$ that represents the information theoretic bound, since the latter does not penalize oversampling as the optimal ADX encoder has the freedom to discard redundant samples when needed.\par
The similarity between $f_s^\star$ and $f_{R}$ as a function of $R$ is due to the fact that the optimal representation is obtained by discarding the same part of the signal under both the optimal lossy compression scheme or PCM. The observation that $f_s^\star \leq f_{R}$ in Examples \ref{ex:triangle2} and \ref{ex:gaussmarkov2} is explained by the diminishing effect of reducing the sampling rate on the overall error. That is, since $D_{\PCM}(f_s^\star,R) \geq D_X(R)$, the optimal lossy compression scheme is more sensitive to changes in the sampling rate than the sub-optimal implementation of A/D conversion via PCM.

\section{Conclusions\label{sec:conclusion}}
We considered an analog-to-digital compression (ADX) setting in which an analog source is described by its rate-limited samples, obtained using any bounded linear sampling technique.
We have shown that for any given bitrate $R$, there exists a critical sampling rate denoted $f_{R}$, such that the minimal distortion subject only to the bitrate constraint can be achieved by sampling at or above $f_{R}$. This minimal distortion is the indirect DRF of the signal given its noisy version, or the standard DRF when this noise is zero.
The critical sampling rate $f_R$ is strictly smaller than the Nyquist or Landau rates for processes whose power is not uniformly distributed over their spectral band. As the bitrate $R$ increases, $f_R$ increases as well and converges to the Nyquist or Landau rates as $R$ goes to infinity. \par
The results in this paper imply that with an optimized multi-branch LTI uniform sampler, sampling below the Nyquist rate and above $f_R$ does not degrade performance in the case where lossy compression of the samples is introduced. Since lossy compression due to quantization is an inherent part of any analog to digital conversion scheme, our work suggests that sampling below the Nyquist rate is optimal in terms of minimizing distortion in practice for most systems. \par
We also considered the case of a more restricted encoder and decoder which corresponds to pulse-code modulation (PCM) sampling and quantization. That is, instead of a vector quantizer whose block-length goes to infinity, PCM uses a zero-memory zero-delay quantizer. Under a fixed bitrate at the output of this quantizer, there exists a trade-off between bit-precision and sampling rate. We examined the behavior of this trade-off under an approximation on the scalar quantizer using additive white noise. We have shown through various examples that the optimal sampling rate in PCM experiences a similar behavior as the critical rate $f_{R}$, which is the minimal sampling rate under optimal source encoding-decoding of the samples. \\

There are a few important future research directions that arise from this work. While we restricted ourselves to bounded linear samplers, it is important to understand whether the distortion at a given sampling rate can be improved by considering non-linear sampling functions. Indeed, such improvement is seen in the setting of \cite{wu2011optimal}, where a finite dimensional sampling system with a Gaussian input is considered. In addition, reduction of the optimal sampling rate under the bitrate constraint from the Nyquist rate to $f_{R}$ can be understood as the result of a reduction in degrees of freedom in the compressed signal representation compared to the original source. It is suggested that a similar principle may hold under non-Gaussian signal models (e.g., sparse signals), so that the sampling rate under a bitrate restriction can be reduced without incurring additional distortion. Finally, under sub-optimal encoding such as in PCM, it is important to characterize the conditions on the encoder under which oversampling has a detrimental effect on the distortion. 

%%%%%%%%%%%%%%%%%%%%%%%%%%%%%%%%%%%%%%%%%%%%%%%%%%%%%%%%%%%%%%%%%%%%%%%%%%%%%%%%

%%%%%%%%%%%%%%%%%%%%%%%%%%%%%%%%%%%%%%%%%%%%%%%%%%%%%%%%%%%%%%%%%%%%%%%%%%%%%%%%
\appendix

\section{Proofs \label{app:proofs}}

\subsection{Proof of Theorem~\ref{thm:achivability}}
%The proof of Theorem~\ref{thm:achivability} can be obtained using a similar procedure outlined in \cite{Kipnis2014} that leads to Theorem 21 there. However, since the combined sampling and source coding setting that of \cite{Kipnis2014} differs than ours, we explain this 

For the MB-LTI sampler, the set $Y_{\infty}$ of \eqref{eq:asymp_samples} is invariant under time shifts by an integer multiple of $1/f_s$ of the input $X_\epsilon(\cdot)$. Hence, for any $n \in \mathbb Z$, the distribution of $X(t)$ and $X(t+n/f_s)$ conditioned on the sigma algebra generated by $Y_{\infty}$ are identical. As a result, the process $\widetilde{X}_T(\cdot)$ of \eqref{eq:mmse_estimator_finite_T} has an asymptotic distribution as $T\rightarrow \infty$ that is cyclostationary with period $1/f_s$ \cite{bennett1958statistics} (also known as $1/f_s$-ergodic \cite{Nedoma}). Denote by 
\[
\widetilde{X}(t) = \mathbb E \left[ X(t) | Y_{\infty} \right],\quad t\in \mathbb R,
\]
the process obtained as the asymptotic distribution law of $\widetilde{X}_T(\cdot)$ when $T\rightarrow \infty$. It follows from \eqref{eq:decomp} that with $S$ a MB-LTI sampler, the asymptotic ADX distortion is given by
\[
D(S,R) = \mmse(S) + D_{\widetilde{X}}(R),
\]
where 
\begin{align*}
\mmse(S) & = \lim_{T\rightarrow \infty} \frac{1}{T} \int_{-T/2}^{T/2} \left(X(t) - \widetilde{X}(t) \right)^2 dt,
\end{align*}
and 
$D_{\widetilde{X}}(R)$ is the DRF of the process $\widetilde{X}(\cdot)$. \par
We note that $\widetilde{X}(\cdot)$ can be derived in closed form using a procedure that extends the Wiener filter \cite{815501, 4663942, ShannonMeetsNyquist}. 
%For example, when $L=1$, we have
%\[ \widetilde{X}(t) = \sum_{n\in\mathbb Z} w(n/f_s - t)X_\epsilon(n/f_s), \]
% where $n$
Since cyclostationary processes are in particular asymptotic mean stationary processes \cite{gray2009probability}, it follows from \cite{ gray2011entropy} that the DRF of $\widetilde{X}(\cdot)$ equals its information DRF, i.e., the infimum over conditional probability distributions with mutual information rate not exceeding $R$. A closed form expression for this information DRF was derived in \cite{Kipnis2014} in terms of the pre-sampling filters $H_1(f),\ldots, H_L(f)$ and the PSDs $S_X(f)$ and $S_\epsilon(f)$. Under the special case where the supports of $H_1(f),\ldots,H_L(f)$ are disjoint, this expression from \cite{Kipnis2014} implies that
\begin{subequations}
\label{eq:DRF_multi_disjoint}
\begin{align}
\label{eq:DRF_multi_D_disjoint}
 D(S,R) & = \mmse\left(S\right) + \sum_{l=1}^L \int_{-\frac{f_s}{2}}^\frac{f_s}{2} \min\left\{\widetilde{S}_l(f) ,\theta \right\} df \\
R_\theta & = \frac{1}{2} \sum_{l=1}^L \int_{-\frac{f_s}{2}}^\frac{f_s}{2} \log^+ \left[\widetilde{S}_l(f)/\theta \right]df, \label{eq:DRF_multi_R}
\end{align}
\end{subequations}
where 
\[
\widetilde{S}_l(f) \triangleq \frac{\sum_{n\in \mathbb Z} S_X^2(f-f_s n) \mathbf 1_{\supp H_l}(f-f_s n)  }{\sum_{n\in \mathbb Z} \left[S_{X_\epsilon}(f-f_sn) \right]},
\]
and 
\[
\mmse\left(S\right) = \sigma_X^2 - \sum_{l=1}^L \int_{-\frac{f_s}{2}}^\frac{f_s}{2} \widetilde{S}_l(f) df. 
\]

Let $F^\star_{f_s}$ be a set of Lebesgue measure at most $f_s$ that maximizes \eqref{eq:F_star_def}. We now show that $F^\star_{f_s}$ can be approximated by $L$ intervals of measure at most $f_s/L$. Let $\epsilon>0$. Consider the measure $\mu_{S_X}$ defined by 
\[
\mu_{S}(A) = \int_A \frac{S_X^2(f)}{S_{X_\epsilon}(f)}df
\]
for a Lebesgue measurable set $A$. \par
Since $S_X(f)$ is $\mathrm L_1(\mathbb R)$, we can choose a set $G \subset F^\star_{f_s}$ such that $S_X(f)$ is bounded on $G$ and such that $\mu_S(G) > \mu_S(F^\star)- \epsilon / 3$. The measure $\mu_{S}$ is absolutely continuous with respect to the Lebesgue measure and hence is a regular measure \cite{royden1988real}. Therefore, there exists $M$ intervals $I_1,\ldots,I_M$ such that $\cup_{i=1}^M I_i \subset G$ and $\mu_{S}(\cup_{i=1}^M I_i ) > \mu_{S}(G) - \epsilon/3 > \mu_S(F^\star_{f_s})- 2\epsilon/3$. We can assume that $I_1,\ldots,I_M$ are disjoint; otherwise we use $I'_1 = I_1$, $I'_2 = I_2 \setminus I'_1$, $I_3' = I_3 \setminus (I_1' \cup I_2')$, and so forth. Therefore, $\sum_{i=1}^M\mu(I_i) \leq f_s$. For $\delta>0$, let $L_i = \lfloor M \mu(I_i) / \delta \rfloor$ and $L = \sum_{i=1}^M L_i$. We now define $L$ pre-sampling filters as follows: for each $i=1,\ldots,M$, consider $L_i$ disjoint intervals $I_{i,1},\ldots,I_{i,j}$ of length $r=\delta/M$ that are sub-intervals of $I_i$. Since $\mu(I_i) \geq L_i r$, such $L_i$ intervals exist and we set $A_i = I_i \setminus \cup_{j=1}^{L_i}I_{i,j}$. That is, $A_i$ is the part of the interval $I_i$ that is not covered by these $L_i$ intervals. In particular, $\mu(A_i)\leq r$. Finally, set the support of each filter $H_{i,j}$ to be $I_{i,j}$. Note that 
\[
\mu(\sum_{i,j} \supp H_{i,j}) = \sum_{i=1}^M L_i r \leq r \sum_{i=1}^M M \mu(I_i) / \delta \leq f_s. 
\]
This way we have defined $L = L_1 + \ldots + L_M$ filters, each of passband of width $r \leq f_s / L$. It is left to show that
\[
\sum_{i,j} \mu_{S}(\supp H_{i,j}) =\sum_{i,j} \int_{\supp H_{i,j}} \frac{S_X^2(f)}{S_{X_\epsilon}(f)} df > \int_{F^\star_{f_s}} \frac{S_X^2(f)}{S_{X_\epsilon}(f)} df-\epsilon.
\]
Denote by $m_s$ the essential supremum of $S_X(f)$ on $G$ and note that $S_{X|X_\epsilon}(f)  \leq m_s$ on $G$ as well. We have 
\[
\mu_S(A_i) = \int_{A_i} S_{X|X_\epsilon}(f) df \leq m_s \mu(A_i) \leq m_s r.
\]
It follows that
\begin{align*}
\mu_S(\sum_{i,j} \supp H_{i,j}) & = \sum_{i=1}^M \sum_{j=1}^{L_i} \mu_S(I_{i,j}) = \sum_{i=1}^{M} \mu_S(I_i) -  \sum_{i=1}^{M} \mu_S(A_i) \\
& \geq \mu_S(G)-\epsilon/3 - M m_s r \geq \mu_S(F^\star_{f_s})-2\epsilon/3 - M m_s \delta. 
\end{align*}
Taking $\delta = \epsilon/ (3 M m_s)$ leads to the desired result. \\

To summarize, we constructed $L$ interval $F_1,\ldots,F_L$, each of measure at most $f_s/L$, such that
\[
\sum_{l=1}^L \int_{F_l} S_{X|X_\epsilon}(f) df +\delta > \int_{F^\star_{f_s}} S_{X|X_\epsilon}(f)  df. 
\]

We now use use the following Proposition, proof of which can be found in \cite[Prop. 3.4]{KipnisThesis}:
\begin{prop} \label{prop:min_max}
Fix $R>0$ and set $A\subset \mathbb R$. For an integrable function $f$ over $A$, define
\begin{align*}
D(f) & = - \int_A \left[f(x)-\theta\right]^+dx \\
R & = \frac{1}{2} \int_A \log^+ \left[f(x)/\theta \right]dx.
\end{align*}
Let $f$ and $g$ be two integrable functions such that
\[
\int_A f(x) dx \leq \int_A g(x) dx.
\]
Then $D(g)\leq D(f)$. 
\end{prop}

We use Proposition~\ref{prop:min_max} with $A = F^\star \cup F_1 \cup \ldots \cup F_L$,
\[
f(x) = \mathbf 1_{F^\star} (x)  S_{X|X_\epsilon}(x) ,
\]
and
\[
g(x) = \mathbf 1_{\cup_{l=1}^L F_l} (x) \left(S_{X|X_\epsilon}(f) + \delta \right).
\]
Note that $D^\star(f_s,R)$ is a water-filling expression of the form \eqref{eq:bound_def} over $f(x)$ and A. Denote by $D_\delta$ the function defined by a water-filling expression over $g(x)$. Since $g(x)\geq f(x)$, it follows from Proposition~\ref{prop:min_max} that 
\[
D_\delta \leq D^\star(f_s,R). 
\]
Since $D_\delta$ is continuous in $\delta$ and since $\lim_{\delta \rightarrow 0} D_\delta = D(S,R)$, for $\epsilon>0$ there exists $L$ and $\delta$ such that $D(S,R)+\epsilon > D_{\delta} \geq D^\star(f_s,R)$.

\subsection{ Proof of Theorem~\ref{thm:converse} }
%first consider the case of MB-LTI sampler. 
Consider the following cases of the sampler $S$ in ADX:
\begin{itemize}
    \item [(i)]  $S$ is a MB-LTI uniform sampler of sampling rate $f_s$.
    \item [(ii)] $S=(K_H,\Lambda)$ is a bounded linear sampler such that $\Lambda$ is periodic with uniform density $f_s$.
    \item [(iii)] $S=(K_H,\Lambda)$ is any bounded linear sampler such that $d^+(\Lambda) \leq f_s$.
\end{itemize}
We show that case (iii) follows from (ii) which follows from (i). \\

\subsubsection*{Case (i)} For $S$ a MB-LTI sampler, 
and given $S_X(f)$, $S_\epsilon(f)$, $f_s$ and $L$, the properties of the set of optimal pre-sampling filters $H_1,\ldots,H_L$ that minimizes $D(S,R)$ were given in \cite[Thm. 21]{Kipnis2014}. In particular, it follows from this characterization that the support of each $H_l$ is a bounded aliasing-free set for sampling rate $f_s/L$, in the sense that for $f_1,f_2 \in \supp~H_l$, $f_1 \neq f_s$ modulo the grid $\mathbb Z f_s/L$. Since we are interested in bounding $D(S,R)$ from below, we can assume without loss of generality that the support of $H_1,\ldots,H_L$ satisfies the aliasing free condition. With this assumption, a closed form expression for $D(S,R)$ follows from \cite[Thm. 21]{Kipnis2014}
\begin{subequations}
\label{eq:DRF_multi}
\begin{align}
\label{eq:DRF_multi_D}
 D(S,R) & = \sigma_X^2  - \sum_{l=1}^L \int_{\supp~H_l} \min\left[S_{X|X_\epsilon}(f)  - \theta \right] df \\
R_\theta & = \frac{1}{2} \sum_{l=1}^L \int_{\supp~H_l} \log^+ \left[S_{X|X_\epsilon}(f) /\theta \right]df. 
\end{align}
\end{subequations}
Furthermore, \cite[Prop. 2]{Kipnis2014} implies that the Lebesgue measure of $H_l^\star$ is at most $f_s/L$. Therefore, the Lebesgue measure of the union of $\supp~H_1, \ldots, \supp~H_L$ is at most $f_s$. Since Propositon~\ref{prop:min_max} implies that a water-filling expression of the form \eqref{eq:DRF_multi_D} is non-increasing in the function $S_{X|X_\epsilon}(f)$, it follows that \eqref{eq:DRF_multi_D} is bounded from below by
\begin{align*}
D^\star & = \sigma_X^2 - \int_{F^\star} \left[ S_{X|X_\epsilon}(f) - \theta\right]^+ df \\
R & = \frac{1}{2}  \int_{F^\star} \log^+ \left[S_{X|X_\epsilon}(f)/\theta \right] df,
\end{align*}
which, by definition, equals $D^\star(f_s,R)$. \\

\subsubsection*{Case (ii)} Assume that the sampling set $\Lambda$ is periodic with period $T_0$, i.e. it satisfies $\Lambda=\Lambda+T_0$. Assume moreover that 
 $K_H(t+T_0k,\tau)=K_H(t,\tau)$ for all $k\in \mathbb Z$, i.e. $K_H(t,\tau)$ is periodic in $t$ with period $T_0$.
\par
Denote by $L$ the number of points in $\Lambda$ in the interval $[-T_0/2,T_0/2]$. Therefore, $\lfloor L/T_0 \rfloor \leq d(\Lambda_T) \leq \lceil L/T_0 \rceil$, and hence the symmetric density of $\Lambda$ exists and equals $L/T_0$. Denote by $t_0,\ldots,t_{L-1}$ the $L$ members of $\Lambda_{T_0} = \Lambda \cap [-T_0/2,T_0/2]$, where without loss of generality we can assume that $T_0/2 \notin \Lambda$. Continue to enumerate the members of $\Lambda$ that are larger than $t_{L-1}$ in the positive direction $t_L,t_{L+1},\ldots$, and the elements of $\Lambda$ smaller than $t_0$ in the negative direction $t_{-1}, t_{-2},\ldots$. By the periodicity of $\Lambda$, $t_{l+Lk}=t_l+T_0k$ for all $l=0,\ldots,L-1$ and $k\in \mathbb Z$. For $n=l+kL$, and $t_n < T/2$, each sample $Y_n$ in the vector of samples $Y_T$ satisfies
\begin{align*}
Y_n & = \int_{-\infty}^{\infty} K_H(t_{l+Lk},s) X_\epsilon(s) ds = \int_{-\infty}^\infty K_H(t_l+T_0k,s)X_\epsilon(s) ds \\
& = \int_{-\infty}^\infty K_H (t_l,s)X_\epsilon(s) ds = h_l(s-t_l)X_\epsilon(s) df, 
\end{align*}
where, for $l=0,\ldots,L-1$, we denoted $h_l(s) \triangleq K_H(t_l,s+t_l)$. We define the vector valued process $\Yv[\cdot] = \{\Yv[k],\, k\in \mathbb Z\}$ by
\[
\Yv[k] = \left(Y_{Lk},Y_{Lk+1},\ldots, Y_{Lk+L-1} \right), \quad k \in \mathbb Z.
\]
That is, the $k$th sample of $\Yv[\cdot]$ is a vector in $\mathbb R^L$ consists of $L$ consecutive samples of $X_\epsilon(\cdot)$. Note $Y[\cdot]$ is independent of the time horizon $T$. Since each $h_l(s)$ defines an LTI system, it follows that sampling with the periodic set $\Lambda$ and the pre-processing system $K_H$ is equivalent to sampling using $L$ uniform sampling branches each of sampling rate $1/T_0$. From case (i) of the proof, it follows that $D(S,R) \geq D^\star(L/T_0,R) = D^\star(d(\Lambda),R)$. \\

\subsubsection*{Case (iii)}
We now consider the general case of $S=(\Lambda,K_H)$ an arbitrary bounded linear sampler. For a sequence $\left\{T_n,\, n=1,2,\ldots\right\}$ such that $\lim_{n\rightarrow \infty} T_n = \infty$, denote
\[
d_n \triangleq d_{T_n}(\Lambda) = \frac{\Lambda \cap [-T_n/2,T_n/2]}{T_n},
\]
and let $Y_{T_n}$ be the vector of $d_n$ samples obtained by sampling $X_\epsilon(\cdot)$ using $S$ over the interval $[-T_n/s,T_n/s]$. In addition, define the set $\tilde{\Lambda}_n$ to be the periodic extension of $\Lambda_{T_n}$, i.e, 
\[
\tilde{\Lambda}_n \triangleq \Lambda_{T_n} + T_n\mathbb Z. 
\]
Therefore, $\tilde{\Lambda}_n$ is a periodic sampling set with period $T_n$ and, consequently, symmetric density $d_n$. We also extend $K_H(t,s)$ periodically as 
\[
\tilde{K}_n\left(t,s\right) \triangleq 
K_H([t], \tau)
\]
where here and henceforth $[t]$ denotes \emph{$t$ modulo the grid $T_n\mathbb Z$} (i.e. $t = [t]+kT_n$ where $k \in \mathbb Z$ and $0\leq [t] < T_n$). Let $S_n \triangleq (\tilde{\Lambda}_n, \tilde{K}_n)$. We have
\begin{equation} \label{eq:proof_converse}
D_{T_n}(S,R) \overset{(a)}{=} D_{T_n}(\tilde{S}_n,R) \overset{(b)}{\geq} D(\tilde{S}_n,R) \overset{(c)}{\geq} D^\star(d_n,R),
\end{equation}
where: (a) follows from the definition of $S_n$, (b) follows since the distribution of the estimator of $X(\cdot)$ from the samples obtained by a MB-LTI sampler is cyclostationary, hence enlarging the time horizon $T$ can only reduce distortion \cite{gray2011entropy}, and (c) is obtained from part (ii) of the proof. \par

Since any unbounded sequence of time horizons $\{T_n\}$ satisfies 
\eqref{eq:proof_converse}, we conclude that
\[
 \liminf_{T\rightarrow \infty} D_T(S,R) = D(S,R). 
\]
Finally, since $D^\star(f_s,R)$ is continuous and non-increasing in $f_s$, we have
\[
\lim_{n\rightarrow \infty} D^\star(d_n,R) \geq D^\star(d^+(\Lambda),R). 
\]

\subsection{Proof of Proposition~\ref{prop:opitmal_sampling_rate} }
Let $(R,D)$ be a point on the curve $\left(R,D_{X|X_\epsilon}(R)\right)$. For $\theta$ such that 
\begin{align*}
R & = \frac{1}{2} \int_{-\infty}^\infty \log^+ \left[S_{X|X_\epsilon}(f)/\theta \right] df,
\end{align*}
denote $F_\theta \triangleq \left\{ f \in \mathbb R~:~  S_{X|X_\epsilon}(f) > \theta \right\}$, so that $f_R = \mu(F_\theta)$,
\begin{align*}
R & = \frac{1}{2} \int_{F_\theta} \log \frac{S_{X|X_\epsilon}(f)}{\theta} df, 
\end{align*}
and
\begin{align}
D & = \sigma_X^2 -  \int_{F_\theta} \left(S_{X|X_\epsilon}(f) - \theta \right) df.
\label{eq:cor_proof_D}
\end{align}
Let $F^\star \subset \mathbb R$ be such that 
\begin{equation}
    \label{eq:cor_proof_Dstar}
D^\star(f_R,R) =\sigma_X^2 -  \int_{F^\star} \left[ S_{X|X_\epsilon}(f) df - \theta\right]^+,
\end{equation}
and 
\[
R =\frac{1}{2} \int_{F^\star} \log^+[S_{X|X_\epsilon}(f)/\theta] df.
\]
From the definition of $D^\star(f_s,R)$, it follows that
\begin{equation}
\int_{F^\star} S_{X|X_\epsilon}(f) df \geq  \int_{F_\theta} S_{X|X_\epsilon}(f)df. \label{eq:cor_proof_ineq}
\end{equation}
Since the distortion expressions \eqref{eq:cor_proof_D} and \eqref{eq:cor_proof_Dstar} are non-increasing in $\int_{F^\star} S_{X|X_\epsilon}(f) df$ and $\int_{F_\theta} S_{X|X_\epsilon}(f) df$, respectively, it follows from \eqref{eq:cor_proof_ineq} that 
$D^\star(f_R,R) \leq D_{X|X_\epsilon}(R)$. In order to prove the reverse inequality, note that for any bounded linear sampler $S$ and $R$ we have
\[
D(S,R) \geq D_{X|X_\epsilon}(R). 
\]
However, it follows from Theorem~\ref{thm:achivability} that $D^\star(f_s,R)$ is achievable, and hence $D^\star(f_s,R) \geq D_{X|X_\epsilon}(R)$. Evidently, this same inequality can be derived directly from the definition of $D^\star(f_s,R)$ in \eqref{eq:bound_def} and the expression for $D_{X|X_\epsilon}(R)$ in \eqref{eq:dobrushin} (that is, without using Theorem~\ref{thm:achivability}). 

\subsection{Proof of Proposition~\ref{prop:mmse_pre_filtering}\label{app:pcm_proof}}
For $0\leq \Delta \leq 1$ define
\[
X_\Delta[n] \triangleq X\left( (n+\Delta)T_s \right),\quad n\in \mathbb Z,
\]
where $T_s \triangleq f_s^{-1}$. Also define $\hat{X}_\Delta[n]$ to be the optimal MSE estimator of $X_\Delta[n]$ from $\hat{Y}[\cdot]$, that is
\[
\hat{X}_\Delta[n] = \mathbb E \left[X_\Delta[n]|\hat{Y}[\cdot] \right],\quad n\in \mathbb Z.
\]

The MSE in \eqref{eq:PAM_mmse_def} can be written as
\begin{align}
&\mmse_{X|\hat{Y}} = \lim_{N \rightarrow \infty} \frac{1}{2N+1} \int_{-N}^{N+1} \mathbb E \left(X(t)-\hat{X}(t) \right)^2dt \nonumber \\
& = \lim_{N\rightarrow \infty} \frac{1}{2N+1} \sum_{n=-N}^N \int_0^1 \mathbb E\left( X\left((n+\Delta)T_s \right) - \hat{X}\left((n+\Delta)T_s \right)  \right)^2 d\Delta \nonumber \\
& = \lim_{N\rightarrow \infty} \frac{1}{2N+1} \sum_{n=-N}^N \int_0^1 \mathbb E\left(X_\Delta[n] - \hat{X}_\Delta[n]  \right)^2 d\Delta \nonumber \\
& = \int_0^1 \mathbb E\left(X_\Delta[n] - \hat{X}_\Delta[n]  \right)^2 d \Delta \label{eq:proof_main1}.
\end{align}
Note that $S_{X_\Delta} \expphi = S_{Y} \expphi$ and $X_\Delta[\cdot]$ and $\hat{Y}[\cdot]$ are jointly stationary with cross-PSD
\begin{align*}
S_{X_\Delta \hat{Y}} \expphi &= S_{X_\Delta} \expphi 
= f_s \sum_{k\in \mathbb Z} S_{X} \left( f_s (k-\phi) \right)e^{2\pi i \Delta (k-\phi)}.
\end{align*}
Denote by $S_{X_\Delta | \hat{Y}} \expphi$ the PSD of the estimator obtained by the discrete Wiener filter for estimating $X_\Delta[\cdot]$ from $\hat{Y}[\cdot]$. We have
\begin{align}
& S_{X_\Delta | \hat{Y} } \expphi = \frac{S_{X_\Delta \hat{Y}} \expphi S^*_{X_\Delta \hat{Y}} \expphi }{S_{\hat{Y}} \expphi}  \nonumber \\
& = \sum_{n,k} \frac{f_s^2 S_{X_a} \left(f_s(k-\phi) \right)  S_{X_a}^* \left(f_s(n-\phi) \right) e^{2\pi i \Delta (k-n)}  } 
{S_{Y} \expphi + S_\eta \expphi }, \label{eq:proof_main2}
\end{align}
where $S_{X_a}(f) = S_X(f) H^*(f)$ is the cross-PSD of $X(\cdot)$ and the signal at the output of the filter $H(f)$. The estimation error in Wiener filtering is given by 
\begin{align}
\mathbb E\left(X_\Delta[n] - \hat{X}_\Delta[n] \right)^2 & \nonumber \\
= \inthalftohalf S_{X_\Delta} & \expphi d\phi - \inthalftohalf S_{X_\Delta| \hat{Y}} \expphi d\phi \nonumber  \\
 = \sigma_X^2 - & \inthalftohalf S_{X_\Delta| \hat{Y}} \expphi d\phi \label{eq:proof_main3}. 
\end{align}
Equations \eqref{eq:proof_main1}, \eqref{eq:proof_main2} and \eqref{eq:proof_main3} lead to
\begin{align}
 D_{PCM}(f_s,q,H) & = \int_0^1 \mathbb E\left(X_\Delta[n] - \hat{X}_\Delta[n]  \right)^2 d\Delta \nonumber \\
& = \sigma_X^2- \inthalftohalf \int_0^1 S_{X_\Delta| \hat{Y}} \expphi d\phi \nonumber \\ 
& \overset{a}{=} \sigma_X^2 - \inthalftohalf \frac{f_s \sum_{k\in \mathbb Z} \left| S_{X_a} \right|^2 \left(f_s (k-\phi)\right)}
{S_{Y} \expphi + S_\eta \expphi } d\phi, \label{eq:proof_main4}
\end{align}
where $(a)$ follows from \eqref{eq:proof_main2} and the orthogonality of the functions $\left\{ e^{2\pi x k}, k\in \mathbb Z \right\}$ over $0\leq x \leq 1$. Equation \eqref{eq:mmse_optimal} is obtained from \eqref{eq:proof_main4} by changing the integration variable from $\phi$ to $f=\phi f_s$. \\

The optimal MMSE linear estimator of $X(t)$ from $\hat{Y}$ has the property that the estimation error is uncorrelated with any sample from $\hat{Y}[\cdot]$, namely,
\[
\mathbb E \left[ \left( X(t) - \sum_{n} w[n] \hat{Y}[n] \right) \hat{Y}[k]  \right] =0
\]
for all $k\in \mathbb Z$. This implies that
\begin{equation} \label{eq:proof_estimator}
\int_{-\infty}^\infty R_X(t-u-k/f_s) h(u) du = \sum_{n} w[n] R_{\hat{Y}}[n-k].
\end{equation}
Taking the discrete time Fourier transform of both sides with respect to $k$ in \eqref{eq:proof_estimator} leads to
\begin{align}
f_s \sum_{m} S_X\left( f_s(\phi-k) \right) e^{-2\pi i t f_s(\phi-k)}  &  H^*\left(f_s(\phi-k) \right) \nonumber \\
&  =  W \expphi S_{\hat{Y}} \expphi, \nonumber
\end{align}
or 
\[
W \expphi = \frac{ f_s \sum_{m} S_X\left(f_s(\phi-k) \right) e^{-2\pi i t f_s(\phi-k)} H^*\left(f_s(\phi-k) \right) }{S_{\hat{Y}} \expphi}.
\]
Note that the last expression equals the discrete-time Fourier transform with respect to $n$ of the function $\tilde{w}(t-n/f_s)$, where the impulse response of $\tilde{w}(t)$ is given by
\begin{equation} 
W(f) = \frac{ H^*(f) S_X(f) }{\sum_{k\in \mathbb Z} \left| H(f) \right|^2 S_X(f-f_sk)+ \sigma_\eta^2/f_s }.
\end{equation}
 \QEDA \\

\section*{ACKNOWLEDGMENT}
This work was supported in part by the NSF Center for Science of Information (CSoI) under grants CCF-0939370 and CCF-1320628, and by NSF-BSF grant 2015711. The work of Y. Eldar was funded by the European Union's Horizon 2020 research and innovation programme under grant agreement ERC-BNYQ and by the Israel Science Foundation under Grant no. 335/14. We thank Robert Gray for helpful discussions regarding the white noise approximation. 

%%%%%%%%%%%%%%%%%%%%%%%%%%%%%%%%%%%%%%%%%%%%%%%%%%%%%%%%%%%%%%%%%%%%%%%%%%%%%%%%
\bibliographystyle{IEEEtran}
\bibliography{IEEEfull,sampling}

% Generated by IEEEtran.bst, version: 1.14 (2015/08/26)
\begin{thebibliography}{10}
\providecommand{\url}[1]{#1}
\csname url@samestyle\endcsname
\providecommand{\newblock}{\relax}
\providecommand{\bibinfo}[2]{#2}
\providecommand{\BIBentrySTDinterwordspacing}{\spaceskip=0pt\relax}
\providecommand{\BIBentryALTinterwordstretchfactor}{4}
\providecommand{\BIBentryALTinterwordspacing}{\spaceskip=\fontdimen2\font plus
\BIBentryALTinterwordstretchfactor\fontdimen3\font minus
  \fontdimen4\font\relax}
\providecommand{\BIBforeignlanguage}[2]{{%
\expandafter\ifx\csname l@#1\endcsname\relax
\typeout{** WARNING: IEEEtran.bst: No hyphenation pattern has been}%
\typeout{** loaded for the language `#1'. Using the pattern for}%
\typeout{** the default language instead.}%
\else
\language=\csname l@#1\endcsname
\fi
#2}}
\providecommand{\BIBdecl}{\relax}
\BIBdecl

\bibitem{KipnisAllerton2014}
A.~Kipnis, A.~J. Goldsmith, and Y.~C. Eldar, ``{G}aussian distortion-rate
  function under sub-{N}yquist nonuniform sampling,'' in \emph{Communication,
  Control, and Computing (Allerton), 2014 52nd Annual Allerton Conference
  on}.\hskip 1em plus 0.5em minus 0.4em\relax IEEE, 2014, pp. 874--880.

\bibitem{KipnisAllerton2015}
A.~Kipnis, Y.~C. Eldar, and A.~J. Goldsmith, ``Optimal trade-off between
  sampling rate and quantization precision in {A/D} conversion,'' in \emph{53th
  Annual Allerton Conference on Communication, Control, and Computing
  (Allerton)}.\hskip 1em plus 0.5em minus 0.4em\relax IEEE, 2015.

\bibitem{KipnisITW2015}
A.~Kipnis, A.~J. Goldsmith, and Y.~C. Eldar, ``Sub-{N}yquist sampling achieves
  optimal rate-distortion,'' in \emph{Information Theory Workshop (ITW), 2015
  IEEE}, April 2015, pp. 1--5.

\bibitem{eldar2015sampling}
Y.~C. Eldar, \emph{Sampling Theory: Beyond Bandlimited Systems}.\hskip 1em plus
  0.5em minus 0.4em\relax Cambridge University Press, 2015.

\bibitem{Shannon1948}
C.~E. Shannon, ``A mathematical theory of communication,'' \emph{Bell System
  Tech. J.}, vol.~27, pp. 379--423, 623--656, 1948.

\bibitem{neuhoff2013information}
D.~L. Neuhoff and S.~S. Pradhan, ``Information rates of densely sampled data:
  Distributed vector quantization and scalar quantization with transforms for
  {G}aussian sources,'' \emph{{IEEE} Transactions on Information Theory},
  vol.~59, no.~9, pp. 5641--5664, 2013.

\bibitem{761034}
R.~Walden, ``Analog-to-digital converter survey and analysis,'' \emph{Selected
  Areas in Communications, IEEE Journal on}, vol.~17, no.~4, pp. 539--550, Apr
  1999.

\bibitem{EldarMichaeliBeyond}
Y.~C. Eldar and T.~Michaeli, ``Beyond bandlimited sampling,'' \emph{{IEEE}
  Signal Processing Magazine}, vol.~26, no.~3, pp. 48--68, 2009.

\bibitem{Kipnis2014}
A.~Kipnis, A.~J. Goldsmith, Y.~C. Eldar, and T.~Weissman, ``Distortion rate
  function of sub-{N}yquist sampled {G}aussian sources,'' \emph{{IEEE}
  Transactions on Information Theory}, vol.~62, no.~1, pp. 401--429, Jan 2016.

\bibitem{950786}
R.~Venkataramani and Y.~Bresler, ``Optimal sub-nyquist nonuniform sampling and
  reconstruction for multiband signals,'' \emph{IEEE Transactions on Signal
  Processing}, vol.~49, no.~10, pp. 2301--2313, Oct 2001.

\bibitem{daubechies1998factoring}
I.~Daubechies and W.~Sweldens, ``Factoring wavelet transforms into lifting
  steps,'' \emph{Journal of Fourier analysis and applications}, vol.~4, no.~3,
  pp. 247--269, 1998.

\bibitem{1447892}
H.~Landau, ``Sampling, data transmission, and the {N}yquist rate,''
  \emph{Proceedings of the {IEEE}}, vol.~55, no.~10, pp. 1701--1706, Oct 1967.

\bibitem{1057738}
R.~Dobrushin and B.~Tsybakov, ``Information transmission with additional
  noise,'' \emph{{IRE} Transactions on Information Theory}, vol.~8, no.~5, pp.
  293--304, 1962.

\bibitem{berger1971rate}
T.~Berger, \emph{Rate-distortion theory: A mathematical basis for data
  compression}.\hskip 1em plus 0.5em minus 0.4em\relax Englewood Cliffs, NJ:
  Prentice-Hall, 1971.

\bibitem{donoho1998data}
D.~L. Donoho, M.~Vetterli, R.~A. DeVore, and I.~Daubechies, ``Data compression
  and harmonic analysis,'' \emph{IEEE Transactions on Information Theory},
  vol.~44, no.~6, pp. 2435--2476, 1998.

\bibitem{1056823}
A.~Kolmogorov, ``On the shannon theory of information transmission in the case
  of continuous signals,'' \emph{{IRE} Transactions on Information Theory},
  vol.~2, no.~4, pp. 102--108, December 1956.

\bibitem{dodson1985fourier}
M.~Dodson and A.~Silva, ``Fourier analysis and the sampling theorem,'' in
  \emph{Proceedings of the Royal Irish Academy. Section A: Mathematical and
  Physical Sciences}.\hskip 1em plus 0.5em minus 0.4em\relax JSTOR, 1985, pp.
  81--108.

\bibitem{Lloyd1959}
S.~P. Lloyd, ``\BIBforeignlanguage{English}{A sampling theorem for stationary
  (wide sense) stochastic processes},''
  \emph{\BIBforeignlanguage{English}{Transactions of the American Mathematical
  Society}}, vol.~92, no.~1, pp. pp. 1--12, 1959.

\bibitem{1057404}
A.~Balakrishnan, ``A note on the sampling principle for continuous signals,''
  \emph{{IRE} Transactions on Information Theory}, vol.~3, no.~2, pp. 143--146,
  June 1957.

\bibitem{1090615}
D.~Chan and R.~Donaldson, ``Optimum pre-and postfiltering of sampled signals
  with application to pulse modulation and data compression systems,''
  \emph{{IEEE} Transactions on Communication Technology}, vol.~19, no.~2, pp.
  141--157, April 1971.

\bibitem{815501}
M.~Matthews, ``On the linear minimum-mean-squared-error estimation of an
  undersampled wide-sense stationary random process,'' \emph{{IEEE}
  Transactions on Signal Processing}, vol.~48, no.~1, pp. 272--275, 2000.

\bibitem{beutler1961sampling}
F.~J. Beutler, ``Sampling theorems and bases in a hilbert space,''
  \emph{Information and Control}, vol.~4, no. 2-3, pp. 97--117, 1961.

\bibitem{dym1978gaussian}
H.~Dym and H.~McKean, ``{G}aussian processes, function theory, and the inverse
  spectral problem,'' \emph{Bull. Amer. Math. Soc. 84 (1978), 260-262}, pp.
  0002--9904, 1978.

\bibitem{FEICHTINGER1992530}
H.~G. Feichtinger and K.~Gr{\"{o}}chenig, ``Irregular sampling theorems and
  series expansions of band-limited functions,'' \emph{Journal of Mathematical
  Analysis and Applications}, vol. 167, no.~2, pp. 530 -- 556, 1992.

\bibitem{beurling1989collected}
A.~Beurling and L.~Carleson, \emph{The collected works of Arne Beurling:
  Complex analysis}.\hskip 1em plus 0.5em minus 0.4em\relax Birkhauser, 1989,
  vol.~1.

\bibitem{Landau1967}
H.~Landau, ``\BIBforeignlanguage{English}{Necessary density conditions for
  sampling and interpolation of certain entire functions},''
  \emph{\BIBforeignlanguage{English}{Acta Mathematica}}, vol. 117, no.~1, pp.
  37--52, 1967.

\bibitem{john1996sampling}
J.~R. Higgins, \emph{Sampling theory in Fourier and signal analysis:
  foundations}.\hskip 1em plus 0.5em minus 0.4em\relax Oxford University Press
  on Demand, 1996.

\bibitem{marvasti2012nonuniform}
F.~Marvasti, \emph{Nonuniform sampling: theory and practice}.\hskip 1em plus
  0.5em minus 0.4em\relax Springer Science \& Business Media, 2012.

\bibitem{unser1994general}
M.~Unser and A.~Aldroubi, ``A general sampling theory for nonideal acquisition
  devices,'' \emph{IEEE Transactions on Signal Processing}, vol.~42, no.~11,
  pp. 2915--2925, 1994.

\bibitem{Pinsker1954}
M.~S. Pinsker, ``Computation of the message rate of a stationary random process
  and the capacity of a stationary channel,'' \emph{Dokl. Akad. Nauk. USSR},
  vol. 111, pp. 753--766, 1956.

\bibitem{1056251}
H.~Witsenhausen, ``Indirect rate distortion problems,'' \emph{{IEEE}
  Transactions on Information Theory}, vol.~26, no.~5, pp. 518--521, 1980.

\bibitem{1054469}
J.~Wolf and J.~Ziv, ``Transmission of noisy information to a noisy receiver
  with minimum distortion,'' \emph{{IEEE} Transactions on Information Theory},
  vol.~16, no.~4, pp. 406--411, 1970.

\bibitem{KipnisRini2015}
A.~Kipnis, S.~Rini, and A.~J. Goldsmith, ``The indirect rate-distortion
  function of a binary i.i.d source,'' in \emph{Information Theory Workshop -
  Fall (ITW), 2015 IEEE}, Oct 2015, pp. 352--356.

\bibitem{6573236}
X.~Liu, O.~Simeone, and E.~Erkip, ``Lossy computing of correlated sources with
  fractional sampling,'' \emph{IEEE Transactions on Communications}, vol.~61,
  no.~9, pp. 3685--3696, September 2013.

\bibitem{370112}
R.~Zamir and M.~Feder, ``Rate-distortion performance in coding bandlimited
  sources by sampling and dithered quantization,'' \emph{{IEEE} Transactions on
  Information Theory}, vol.~41, no.~1, pp. 141--154, Jan 1995.

\bibitem{marco2010entropy}
D.~Marco and D.~L. Neuhoff, ``Entropy of highly correlated quantized data,''
  \emph{IEEE Transactions on Information Theory}, vol.~56, no.~5, pp.
  2455--2478, 2010.

\bibitem{kumar2011high}
A.~Kumar, P.~Ishwar, and K.~Ramchandran, ``High-resolution distributed sampling
  of bandlimited fields with low-precision sensors,'' \emph{{IEEE} Transactions
  on Information Theory}, vol.~57, no.~1, pp. 476--492, 2011.

\bibitem{eldar2012compressed}
Y.~C. Eldar and G.~Kutyniok, \emph{Compressed sensing: theory and
  applications}.\hskip 1em plus 0.5em minus 0.4em\relax Cambridge University
  Press, 2012.

\bibitem{kipnis2017fundamental}
A.~Kipnis, G.~Reeves, Y.~C. Eldar, and A.~J. Goldsmith, ``Fundamental limits of
  compressed sensing under optimal quantization,'' in \emph{Information Theory
  (ISIT), 2017 IEEE International Symposium on}, 2017.

\bibitem{wu2011optimal}
Y.~Wu and S.~Verdu, ``Optimal phase transitions in compressed sensing,''
  \emph{{IEEE} Transactions on Information Theory}, vol.~58, no.~10, pp.
  6241--6263, Oct 2012.

\bibitem{KipnisCS}
A.~Kipnis, G.~Reeves, and Y.~C. Eldar, ``Fundamental limits of compresses
  sensing under quantization,'' 2018, in preparation.

\bibitem{boda2017sampling}
V.~P. Boda and P.~Narayan, ``Sampling rate distortion,'' \emph{{IEEE}
  Transactions on Information Theory}, vol.~63, no.~1, pp. 563--574, 2017.

\bibitem{gray2011entropy}
R.~M. Gray, \emph{Entropy and information theory}.\hskip 1em plus 0.5em minus
  0.4em\relax Springer, 2011, vol.~1.

\bibitem{zemanian1965distribution}
A.~H. Zemanian, \emph{Distribution theory and transform analysis: an
  introduction to generalized functions, with applications}.\hskip 1em plus
  0.5em minus 0.4em\relax Courier Corporation, 1965.

\bibitem{GS2_english}
I.~Gelfand and G.~Shilov, \emph{{Generalized functions. Volume 2}}.\hskip 1em
  plus 0.5em minus 0.4em\relax Academic Press, 1968.

\bibitem{gelfand1964generalized}
I.~M. Gelfand, N.~{\^A}. Vilenkin, and A.~Feinstein, \emph{Generalized
  functions. Vol. 4.}\hskip 1em plus 0.5em minus 0.4em\relax Academic press,
  1964.

\bibitem{ShannonMeetsNyquist}
Y.~Chen, Y.~C. Eldar, and A.~J. Goldsmith, ``Shannon meets {N}yquist: Capacity
  of sampled {G}aussian channels,'' \emph{{IEEE} Transactions on Information
  Theory}, vol.~59, no.~8, pp. 4889--4914, 2013.

\bibitem{black1947pulse}
H.~S. Black and J.~Edson, ``Pulse code modulation,'' \emph{Transactions of the
  American Institute of Electrical Engineers}, vol.~66, no.~1, pp. 895--899,
  1947.

\bibitem{1697556}
B.~Oliver, J.~Pierce, and C.~Shannon, ``The philosophy of {PCM},'' \emph{{IRE}
  Transactions on Information Theory}, vol.~36, no.~11, pp. 1324--1331, Nov
  1948.

\bibitem{gray1998quantization}
R.~Gray and D.~Neuhoff, ``Quantization,'' \emph{{IEEE} Transactions on
  Information Theory}, vol.~44, no.~6, pp. 2325--2383, Oct 1998.

\bibitem{930935}
H.~Viswanathan and R.~Zamir, ``On the whiteness of high-resolution quantization
  errors,'' \emph{{IEEE} Transactions on Information Theory}, vol.~47, no.~5,
  pp. 2029--2038, Jul 2001.

\bibitem{1056489}
S.~Lloyd, ``Least squares quantization in {PCM},'' \emph{{IEEE} Transactions on
  Information Theory}, vol.~28, no.~2, pp. 129--137, Mar 1982.

\bibitem{5672380}
J.~de~la Rosa, ``Sigma-delta modulators: Tutorial overview, design guide, and
  state-of-the-art survey,'' \emph{{IEEE} Transactions on Circuits and
  Systems}, vol.~58, no.~1, pp. 1--21, Jan 2011.

\bibitem{goyal2008compressive}
V.~K. Goyal, A.~K. Fletcher, and S.~Rangan, ``Compressive sampling and lossy
  compression,'' \emph{{IEEE} Signal Processing Magazine}, vol.~25, no.~2, pp.
  48--56, 2008.

\bibitem{1086334}
B.~Widrow, ``A study of rough amplitude quantization by means of {N}yquist
  sampling theory,'' \emph{Circuit Theory, IRE Transactions on}, vol.~3, no.~4,
  pp. 266--276, Dec 1956.

\bibitem{BLTJ1340}
W.~R. Bennett, ``Spectra of quantized signals,'' \emph{Bell System Technical
  Journal}, vol.~27, no.~3, pp. 446--472, 1948.

\bibitem{bennett1958statistics}
W.~Bennett, ``Statistics of regenerative digital transmission,'' \emph{Bell
  Labs Technical Journal}, vol.~37, no.~6, pp. 1501--1542, 1958.

\bibitem{Nedoma}
N.~J., ``On the ergodicity and r-ergodicity of stationary probability
  measures,'' \emph{Z. Wahrsch. Verw. Gebiete}, no.~4, pp. 2:90--97, 1963.

\bibitem{4663942}
T.~Michaeli and Y.~C. Eldar, ``High-rate interpolation of random signals from
  nonideal samples,'' \emph{{IEEE} Transactions on Signal Processing}, vol.~57,
  no.~3, pp. 977--992, 2009.

\bibitem{gray2009probability}
R.~Gray, \emph{Probability, Random Processes, and Ergodic Properties}.\hskip
  1em plus 0.5em minus 0.4em\relax Springer, 2009.

\bibitem{royden1988real}
H.~L. Royden and P.~Fitzpatrick, \emph{Real analysis}.\hskip 1em plus 0.5em
  minus 0.4em\relax Macmillan New York, 1988, vol. 198, no.~8.

\bibitem{KipnisThesis}
A.~Kipnis, ``Fundamental distortion limits of analog-to-digital compression,''
  Ph.D. dissertation, Stanford University, 2018.

\end{thebibliography}

\end{document}